\newtheorem{theorem}{Theorem}[section]
\newtheorem{lemma}[theorem]{Lemma}
\newtheorem{corollary}[theorem]{Corollary}
\newtheorem{claim}[theorem]{Claim}
\newtheorem{definition}{Definition}[section]
\newtheorem{remark}[definition]{Remark}
\newcommand{\myparagraph}[1]{\paragraph{#1.}\xspace}
\newcommand{\given}{\mid}
\newcommand {\roundup}   [1] {{\lceil {#1} \rceil}}
\newcommand {\rounddown} [1] {{\lfloor {#1} \rfloor}}
\newcommand {\E}       {\mathbb{E}}
\def\cs{\ensuremath{\mathcal{C}}\xspace}
\def\p{\ensuremath{\mathcal{P}}\xspace}
\def\F{\ensuremath{\mathcal{F}}\xspace}
\def\sse{\subseteq}
\def\cal{\mathcal}
\def\MMKP{Min--Max $k$--Parti\-tio\-ning\xspace}
\def\MKP{Min--Sum $k$--Parti\-tio\-ning\xspace}
\def\MMMC{Min-Max-Multiway-Cut\xspace}
\def\mmc{{Min--Max Cut}\xspace}
\def\SSE{Small-Set Expansion\xspace}
\def\UC{{$\rho$--Unbalanced Cut}\xspace}
\def\WSSE{{Weighted Small-Set Expansion}\xspace}
\def\MBWC{{Weighted $\rho$-Unbalanced Cut}\xspace}
\def\BalancedCut{Balanced Cut\xspace}
\def\SparsestCut{Sparsest Cut\xspace}
\def\MinimumBisection{Minimum Bisection\xspace}
\def\MultiwayCut{Multiway Cut\xspace}
\def\opt{\ensuremath{\mathsf{OPT}}}
\def\veps{\varepsilon}
\def\whp{with high probability\xspace}
\def\ala{\`a la\xspace}
\def\eqdef{\mathrel{\mathop:}=}
\newcommand{\minn}[1]{\min\{#1\}}
\newcommand{\maxx}[1]{\max\{#1\}}
\newcommand{\ignore}[1]{}
\newcommand{\tuple}[1]{{\langle{#1}\rangle}}
\def\Racke{R\"{a}cke\xspace}
\def\compactify{\itemsep=0pt \topsep=0pt \partopsep=0pt \parsep=0pt}
\newcommand{\rnote}[1]{}
\newcommand{\vnote}[1]{}
\newcommand{\knote}[1]{}
\newcommand{\rsnote}[1]{}
\newcommand{\junk}[1]{}
\DeclareMathOperator{\diam}{diam}
\DeclareMathOperator{\supp}{supp}
\DeclareMathOperator{\argmax}{argmax}
\newcommand {\bbR}    {\mathbb{R}}
\newcommand {\calS}   {{\cal{S}}}
\newcommand {\calP}   {{\cal{P}}}
\newcommand{\initOneLiners}{%
    \setlength{\itemsep}{0pt}
    \setlength{\parsep }{0pt}
    \setlength{\topsep }{0pt}
%      \usecounter{myLISTctr}
}
\newenvironment{OneLiners}[1][\ensuremath{\bullet}]
    {\begin{list}
        {#1}
        {\initOneLiners}}
    {\end{list}}
\title{{\bf Min-Max Graph Partitioning and Small Set Expansion}}
\author{ Nikhil Bansal%
\thanks{Eindhoven University of Technology, The Netherlands. E-mail: \texttt{n.bansal@tue.nl}}
\and Uriel Feige\thanks{Weizmann Institute of Science, Rehovot, Israel. Work supported in part by The Israel Science
Foundation (grant \#873/08). Email:  \texttt{uriel.feige@weizmann.ac.il} }
\and Robert Krauthgamer%
\thanks{Weizmann Institute of Science, Rehovot, Israel.
Work supported in part by The Israel Science Foundation (grant \#452/08), and by a Minerva grant.
Email: \texttt{robert.krauthgamer@weizmann.ac.il}}
\and  Konstantin Makarychev%
\thanks{IBM T.J. Watson Research Center, P.O. Box 218, Yorktown Heights, NY 10598. Email: \texttt{konstantin@us.ibm.com}}
\and Viswanath Nagarajan%
\thanks{IBM T.J. Watson Research Center, P.O. Box 218, Yorktown Heights, NY 10598. Email: \texttt{viswanath@us.ibm.com}}
\and Joseph (Seffi) Naor%
\thanks{Computer Science Dept., Technion, Haifa, Israel. Email: \texttt{naor@cs.technion.ac.il}}
\and Roy Schwartz%
\thanks{Computer Science Dept., Technion, Haifa, Israel. Email: \texttt{schwartz@cs.technion.ac.il}}
}
\begin{document}
\maketitle

\begin{abstract}
We study graph partitioning problems from a \emph{min-max} perspective, in which an input graph on $n$ vertices should
be partitioned into $k$ parts, and the objective is to minimize the maximum number of edges leaving a single part. The
two main versions we consider are where
 the $k$ parts need to be of equal-size, and where they must separate a set of
$k$ given terminals. We consider a common generalization of these two problems, and design for it an $O(\sqrt{\log
n\log k})$-approximation algorithm. This improves over an $O(\log^2 n)$ approximation for the second version due to
Svitkina and Tardos~\cite{ST04}, and roughly $O(k\log n)$ approximation for the first version that follows from other
previous work. We also give an improved $O(1)$-approximation algorithm for graphs that exclude any fixed minor.

Our algorithm uses a new procedure for solving the \SSE problem. In this problem, we are given a graph $G$ and the goal
is to find a non-empty set $S\subseteq V$ of size $|S| \leq \rho n$ with minimum edge-expansion. We give an
$O(\sqrt{\log{n}\log{(1/\rho)}})$ bicriteria approximation algorithm for the general case of \SSE, and $O(1)$
approximation algorithm for graphs that exclude any fixed minor.
\end{abstract}

\section{Introduction} \label{sec:intro}

We study graph partitioning problems from a {\em min-max} perspective. Typically, graph partitioning problems ask for a
partitioning of the vertex set of an undirected graph under some problem-specific constraints on the different parts,
e.g., balanced partitioning or separating terminals, and the objective is {\em min-sum}, i.e., minimizing the total
weight of the edges connecting different parts. In the {\em min-max} variant of these problems, the goal is different
--- minimize the weight of the edges leaving a single part, taking the maximum over the different parts. A canonical
example, that we consider throughout the paper, is the \MMKP problem: given an undirected graph $G=(V,E)$ with
nonnegative edge-weights and $k\ge 2$, partition the vertices into $k$ (roughly) equal parts $S_1,\ldots,S_k$ so as to
minimize $\max_{i} \delta(S_i)$, where $\delta(S)$ denotes the sum of edge-weights in the cut $(S,V\setminus S)$. We
design a bicriteria approximation algorithm for this problem. Throughout, let $w:E\to \bbR^+$ denote the edge-weights
and let $n=|V|$.

Min-max partitions arise naturally in many settings. Consider the following application in the context of cloud
computing, which is a special case of the general graph-mapping problem considered in~\cite{BKNZ11} (and also implicit
in other previous works~\cite{VN-embed-sigcomm08,VN-embed-infocom06,VN-embed-infocom09}). There are $n$ processes
communicating with each other, and there are $k$ machines, each having a bandwidth capacity $C$. The goal is to
allocate the processes to machines in a way that balances the load (roughly $n/k$ processes per machine), and meets the
outgoing bandwidth requirement. Viewing the processes as vertices and the traffic between them as edge-weights, we get
the \MMKP problem. In general, balanced partitioning (either min-sum or min-max) is at the heart of many heuristics
that are used in a wide range of applications, including VLSI layout, circuit testing and simulation, parallel
scientific processing, and sparse linear systems.

Balanced partitioning, particularly in its min-sum version, has been studied extensively during the last two decades,
with impressive results and connections to several fields of mathematics, see e.g.
\cite{LR99,ENRS99,LLR95,AR98,ARV08,KNS09,LN06,CKN09}. The min-max variants, in contrast, have received much less
attention. Previously, no approximation algorithm for the \MMKP problem was given explicitly, and the approximation
that follows from known results is not smaller than
$O(k\sqrt{\log n})$.%
\footnote{One could reduce the problem to the min-sum version of $k$-partitioning. The latter admits bicriteria
approximation $O(\sqrt{\log n\log k})$ \cite{KNS09}, but the reduction loses another factor of $k/2$. Another
possibility is to repeatedly remove $n/k$ vertices from the graph, paying again a factor of $k/2$ on top of the
approximation in a single iteration, which is, say, $O(\log n)$ by \cite{Racke08}. } We improve this dependence on $k$
significantly.

\smallskip
An important tool in our result above is an approximation algorithm for the \SSE (SSE) problem. This problem was
suggested recently by Raghavendra and Steurer~\cite{RS10} (see also \cite{RST10,RSTdraft}) in the context of the unique
games conjecture. Recall that the \emph{edge-expansion} of a subset $S\sse V$ with $0<|S|\leq \tfrac12 |V|$ is
$$\Phi(S) \eqdef \frac{\delta(S)}{|S|}.$$
The input to the SSE problem is an edge-weighted graph and $\rho\in(0,\tfrac12]$, and the goal is to compute
$$  \Phi_\rho \eqdef \min_{|S|\leq \rho n} \Phi(S). $$
Raghavendra, Steurer and Tetali~\cite{RST10} designed for SSE an algorithm that approximates the expansion within
$O(\sqrt{(1/\Phi_\rho)\log (1/\rho)})$ factor of the optimum, while violating the bound on $|S|$ by no more than a
constant factor (namely, a bicriteria approximation). Notice that the approximation factor depends on $\Phi_\rho$; this
is not an issue if every small set expands well, but in general $\Phi_\rho$ can be as small as $1/\textrm{poly}(n)$, in
which case this guarantee is quite weak.

One can achieve a true approximation of $O(\log{n})$ for SSE using \cite{Racke08}, for any value of $\rho$.%
\footnote{For very small values of $\rho$, roughly $\rho n \le O(\log^2 n)$, a better approximation ratio is known
\cite{FKN03}.} If one desires a better approximation, then an approximation of $O(\sqrt{\log{n}})$ using \cite{ARV08}
can be achieved at the price of slightly violating the size constraint, namely a bicriteria approximation algorithm.
However, unlike the former which works for any value of $\rho$, the latter works only for $\rho = \Omega (1)$. In our
context of min-max problems we need the case $\rho = 1/k$, where $k=k(n)$ is part of the input. Therefore, it is
desirable to extend the $O(\sqrt{\log{n}})$ bound of \cite{ARV08} to a large range of values for $\rho$.

\subsection{Main Results}
Our two main results are bicriteria approximation algorithms for the \MMKP and SSE problems, presented below. The
notation $O_\veps(t)$ hides multiplicative factors depending on $\veps$, i.e., stands for $O(f(\veps)\cdot t)$.

\begin{theorem}
\label{thm:bal} For every positive constant $\varepsilon>0$, \MMKP admits a bicriteria approximation of $\big(
O_{\varepsilon}( \sqrt{\log{n}\log k}),\ 2+\varepsilon \big)$.
\end{theorem}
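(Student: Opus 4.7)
The plan is to reduce $\MMKP$ to iterated applications of the paper's $\SSE$ algorithm with parameter $\rho=1/k$. First, I would binary-search for the optimum value $B=\opt$. The key connection is that any optimal $k$-partition $(T_1^*,\ldots,T_k^*)$ satisfies $\min_i|T_i^*|\le n/k$ and $\delta(T_i^*)\le B$, so its smallest part certifies $\Phi_{1/k}\le kB/n$ in the input graph. The same bound persists in any residual graph obtained by deleting some already-extracted pieces, since restricting the $(T_i^*)$ to the residual gives a $k$-partition whose max cut is still at most $B$ (cuts can only shrink under vertex deletion). Consequently, applying the $\SSE$ approximation to the current residual $G_i$ produces a set $S_i$ with $|S_i|\le(2+\varepsilon)n/k$ and edge-expansion at most $\alpha\cdot kB/n$, where $\alpha=O(\sqrt{\log n\log k})$; multiplying by the size bound gives $\delta_{G_i}(S_i)\le O_\varepsilon(\alpha B)$.

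I would iterate this extraction, removing $S_i$ from the graph at each step, until the residual has at most $(2+\varepsilon)n/k$ vertices, at which point it is declared to be the final piece. Using the lower bound $|S_i|=\Omega(n/k)$ implicit in the $\SSE$ guarantee, the process terminates after $O(k)$ iterations and produces a partition whose every piece has size at most $(2+\varepsilon)n/k$, matching the claimed size-violation.

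The main obstacle is bounding the cut $\delta(S_i)$ of each extracted piece \emph{in the final partition}, rather than just its cut $\delta_{G_i}(S_i)$ into the residual at the moment of extraction. Decomposing
\[
\delta(S_i) \;=\; \delta_{G_i}(S_i)\;+\;\sum_{j<i} e(S_i,S_j),
\]
the first term is $O_\varepsilon(\alpha B)$, but the second term is only bounded naively by $\sum_{j<i}\delta_{G_j}(S_j) = O(k\alpha B)$, introducing an undesirable factor of $k$. Overcoming this requires a more delicate construction or charging: for example, (i) ordering the extractions so that each $S_i$ has systematically few edges to earlier pieces, (ii) merging the residual fragments into existing pieces to redistribute the cost so that no single piece accumulates many crossing edges, or (iii) using the paper's $\SSE$ procedure not as a black box but in a form that directly controls the cut of $S_i$ in the \emph{original} graph, not merely in the residual. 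I expect this step to be the technical heart of the proof, and the place where the paper's specific $\SSE$ routine enters essentially rather than as a generic bicriteria oracle. Combining this control with the $O(k)$ iteration bound yields the advertised approximation factor of $O_\varepsilon(\sqrt{\log n\log k})$ on the max cut.
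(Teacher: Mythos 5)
There is a genuine gap, and it sits exactly where you deferred the ``technical heart.'' The iterated-extraction scheme you describe---repeatedly pull out a low-expansion set from the residual graph and declare the leftover to be the last piece---cannot be repaired by a cleverer charging argument: it fails even with an \emph{exact} \SSE oracle. The paper's Appendix~B gives a tree on $n=k^2$ vertices where greedy extraction removes $k-1$ pieces each with boundary $1$, yet the final residual piece has boundary $k-1$ in the original graph, while $\opt\le 4$; this is an $\Omega(k)$ loss, precisely the factor you flagged. The point is that the residual graph can be steered, by the extractions themselves, into a state where \emph{every} set of size $\Theta(n/k)$ has boundary $\Theta(k)\cdot\opt$ in the original graph, so neither reordering the extractions nor post-hoc merging of fragments into earlier pieces recovers the bound. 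Your sub-observation that the optimal partition restricted to the residual still certifies a cheap small cut \emph{of the residual} is true but insufficient: it controls $\delta_{G_i}(S_i)$, not $\delta(S_i)$.

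The paper's actual route replaces greedy extraction with a covering-plus-aggregation scheme. First, a multiplicative-weights loop (Algorithm~1) calls the weighted \SSE/\UC subroutine on the \emph{original} graph with reweighted vertex measures, producing a multiset $\calS$ of sets, each of size at most $\beta n/k$ and with $\delta(S)\le\alpha\cdot\opt$ measured in the original graph, such that every vertex is covered by an $\Omega(1/k)$ fraction of the sets (a fractional solution to a configuration LP). Second, an aggregation procedure (Algorithm~2) randomly orders $\calS$ and sets $P_i=S_i\setminus\cup_{j<i}S_j$; the expected total boundary $\E[\sum_i\delta(P_i)]$ is shown to be $O(kB/c)$ via the covering property. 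Third---and this is the step with no analogue in your plan---an uncrossing loop replaces any $P_i$ with $\delta(P_i)>2B$ by the full $S_i\supseteq P_i$, shrinking the other parts; a potential argument shows $\sum_j\delta(P_j)$ strictly decreases each time, so this terminates with every part individually cheap. A final greedy merging (Lemma~3.4) reduces the number of parts to $k$ while keeping sizes at most $2(1+\varepsilon)n/k$. You would need to redesign your argument along these lines (or find some genuinely new mechanism) rather than patch the greedy scheme.
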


This theorem provides a polynomial-time algorithm that \whp\ outputs a partition $S_1,\ldots,S_k$ such that \ $\max_i
|S_i| \leq (2 + \varepsilon) \tfrac{n}{k}$ \ and \ $\max_i \delta(S_i) \leq O(\sqrt{\log{n}\log k}) \opt$, where $\opt$
is the optimal min-max value of partitioning into $k$ equal-size parts. (The guarantee on part size can be improved
slightly to $2-\frac1k+\varepsilon$). This result is most interesting in the regime $1\ll k\ll n$.

\begin{theorem}
\label{thm:sse} For every positive constant $\varepsilon >0$, \SSE admits a bicriteria approximation of $\big(
O_{\varepsilon} (\sqrt{\log{n}\log{(1/\rho)}}), 1+\varepsilon\big)$.
\end{theorem}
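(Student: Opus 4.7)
The plan is to follow the standard SDP relaxation-and-rounding paradigm. I would first write a vector relaxation of \SSE in which each vertex $v$ has an associated vector $\bar x_v\in\bbR^d$ playing the role of a relaxed membership indicator, subject to $\|\bar x_v\|^2\leq 1$, the $\ell_2^2$ (negative-type) triangle inequality on $\{0\}\cup\{\bar x_v:v\in V\}$, and the size bound $\sum_v\|\bar x_v\|^2\leq \rho n$; the objective to minimize is $\tfrac12\sum_{(u,v)\in E}w(u,v)\,\|\bar x_u-\bar x_v\|^2$. The characteristic vector of an optimal small set is feasible with value $\rho n\cdot\Phi_\rho$, so the SDP optimum is at most this quantity.

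To round, I would invoke an $\ell_2^2$-orthogonal separator in the style of Chlamtac-Makarychev-Makarychev and Louis-Raghavendra-Tetali: a distribution over subsets $S\subseteq V$ satisfying (a) $\Pr[v\in S]$ is proportional to $\|\bar x_v\|^2$, so that $\E[|S|]=\Theta(\rho n)$, and (b) for every pair $u,v\in V$, the probability that exactly one of them lies in $S$ is at most $O_\varepsilon\!\bigl(\sqrt{\log n\log(1/\rho)}\bigr)\cdot\min\{\|\bar x_u\|^2,\|\bar x_v\|^2\}\cdot\|\bar x_u-\bar x_v\|^2$. Summing (b) over edges and using the SDP bound,
\[
\E[\delta(S)]\ \le\ O_\varepsilon\!\bigl(\sqrt{\log n\log(1/\rho)}\bigr)\cdot\rho\cdot\SDP\ \le\ O_\varepsilon\!\bigl(\sqrt{\log n\log(1/\rho)}\bigr)\cdot\rho n\cdot\Phi_\rho,
\]
so $\E[\delta(S)]/\E[|S|]\le O_\varepsilon\!\bigl(\sqrt{\log n\log(1/\rho)}\bigr)\cdot\Phi_\rho$. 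To convert this expected-value guarantee into a deterministic set meeting both criteria, I would tune the separator so that $\E[|S|]\le(1+\varepsilon/3)\rho n$, condition on the event $|S|\leq(1+\varepsilon)\rho n$ (which holds with constant probability by Markov), and repeat polynomially many times, returning the best sample.

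The chief obstacle is producing the orthogonal separator with the claimed $\sqrt{\log n\log(1/\rho)}$ separation bound. This is the technical heart of the argument: one adapts ARV-style chaining --- whose classical $\sqrt{\log n}$ factor applies when $\rho=\Theta(1)$ --- to the small-set regime by tracking, at each level of the chain, the $\|\bar x_v\|^2$-weighted measure of the surviving vectors. The extra $\sqrt{\log(1/\rho)}$ factor emerges naturally from the $\Theta(\log(1/\rho))$ halving steps needed to drive the aggregate measure from $\rho n$ down to $O(1)$. The remaining pieces --- feasibility of the SDP, concentration for $|S|$, and extension to weighted edges --- are standard.
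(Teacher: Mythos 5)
There is a genuine gap, and it sits exactly where the paper's new ideas are. Your relaxation imposes only the global mass bound $\sum_v\|\bar x_v\|^2\le\rho n$, and such a relaxation is vacuous: setting $\bar x_v=\sqrt{\rho}\,e$ for every $v$ (with $e$ a fixed unit vector) satisfies $\|\bar x_v\|^2\le 1$, all $\ell_2^2$ triangle inequalities involving the origin, and the mass bound, yet has objective value $0$ for every graph. The SDP therefore certifies no lower bound on $\Phi_\rho$, and no rounding can recover from this (on that solution the third separator property forces $S=V$ or $S=\varnothing$). The paper's fix is a family of \emph{per-vertex spreading constraints}, $\sum_{v}\mu(v)\cdot\min\{\|\bar u-\bar v\|^2,\|\bar u\|^2\}\ge(1-\rho)\|\bar u\|^2$ for every $u$, which force all but a $\rho$-fraction of the measure to be far from any candidate member of $S$; these constraints, not the separator construction, are one of the two genuinely new ingredients. (Relatedly, you impose no lower bound on $\sum_v\|\bar x_v\|^2$, so the all-zero solution is also feasible and $\E[|S|]=\Theta(\rho n)$ does not follow.)

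The second gap is the passage from a ratio of expectations to an actual set. Conditioning on $\{|S|\le(1+\varepsilon)\rho n\}$ via Markov does no work here: the separator's probability scale $\alpha$ can be $1/\mathrm{poly}(n)$, so $S$ is empty in all but an $\alpha$-fraction of samples, the conditioning event already holds with probability $1-O(\alpha)$, and $\delta(S)$ could be concentrated on samples where $|S|$ is tiny, making $\delta(S)/|S|$ huge whenever $S\ne\varnothing$ --- a concern the paper flags explicitly. The paper resolves it with two devices you are missing: (i) a \emph{conditional} size bound --- given $u\in S$, the spreading constraint at $u$ together with the second (orthogonality) property of the separator imply $\mu(S)\le(1+10\varepsilon)\rho$ with probability at least $3/4$ --- and (ii) a truncated potential $f(S)=\eta(S)-\frac{\delta(S)}{w(E)}\cdot\frac{H}{4D\cdot SDP}$, zeroed out on bad events, whose expectation is at least $\alpha H/2>0$, so some sample has $f(S)>0$, which yields nonemptiness, the size bound, and the expansion bound simultaneously. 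Finally, the step you identify as the technical heart --- building a separator with distortion $O(\sqrt{\log n\log(1/\rho)})$ --- is taken off the shelf from Chlamtac--Makarychev--Makarychev with $m\approx 1/(\varepsilon\rho)$; your extra factor $\min\{\|\bar x_u\|^2,\|\bar x_v\|^2\}$ in property (b) is not something those separators provide, and it is not needed once the scale $\alpha$ is tracked correctly.
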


This theorem provides a polynomial-time algorithm that \whp\ outputs a set $S$ of size $0<|S|\leq (1+\varepsilon )\rho
n$ whose edge-expansion is $\delta (S)/|S| = O(\sqrt{\log{n}\log{(1/\rho)}}) \opt$, where $\opt$ is the minimum
edge-expansion over all sets of size at most $\rho n$. Our algorithm actually handles a more general version, called
\WSSE, which is required in Theorem \ref{thm:bal}. We defer the precise details to Section \ref{sec:wsse}.

\subsection{Additional Results and Extensions} \label{sec:MoreResults}

\myparagraph{\UC} Closely related to the SSE problem is the following \UC problem: The input is again a graph $G=(V,E)$
with nonnegative edge-weights and a parameter $\rho\in(0,\tfrac12]$, and the goal is to find a subset $S\sse V$ of size
$|S|=\rho n$ that minimizes $\delta (S)$. The relationship between this problem and SSE is similar to the one between
\BalancedCut and \SparsestCut, and thus Theorem \ref{thm:sse} yields the following result.

\begin{theorem}
\label{thm:uc} For every constant $0<\varepsilon<1$, the \UC problem admits a bicriteria approximation of
$\big(O_\veps(\sqrt{\log n \log (1/\rho)}), \Omega(1), 1+\veps\big)$.
\end{theorem}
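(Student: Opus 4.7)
The plan is to reduce \UC to \SSE via the same template that reduces \BalancedCut to \SparsestCut: iteratively peel off low-expansion small sets until the accumulated collection reaches the target size. The key observation is that any feasible \UC solution $S^*$ of size $\rho n$ with cost $\opt$ certifies $\Phi_\rho \le \opt/(\rho n)$, so one application of Theorem \ref{thm:sse} returns a set $T$ with $|T|\le (1+\veps)\rho n$ and $\delta(T)/|T| \le \alpha\cdot \opt/(\rho n)$, where $\alpha = O_\veps(\sqrt{\log n\,\log(1/\rho)})$.

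First I would run the \SSE algorithm once on $G$. If the returned set $T$ already has $|T|\ge \tfrac12 \rho n$, then $T$ itself meets all three bicriteria requirements: its size lies in $[\tfrac12\rho n,(1+\veps)\rho n]$ and $\delta(T)\le 2\alpha\cdot \opt$. Otherwise $T$ is too small; I would add it to a running set $S$, delete it from the graph, and repeat on the residual $G\setminus S$. To bound the expansion in each iteration, I would use $S^*\setminus S$ as a feasible witness inside the residual: it has size at least $\rho n-|S|$ and boundary at most $\opt+\delta(S)$ in $G\setminus S$. As long as $|S|\le \tfrac12\rho n$, this witness certifies that the residual small-set expansion is at most $O(\opt/(\rho n))$, and the next call to \SSE returns a set $T_i$ with $\delta(T_i) \le O(\alpha\opt)\cdot |T_i|/(\rho n)$.

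Summing over iterations gives $\delta(S)\le O(\alpha\opt)\cdot |S|/(\rho n)$, and since the process stops as soon as $|S|$ first exceeds $\tfrac12\rho n$, the final output has size in $[\tfrac12\rho n,\,(\tfrac32+\veps)\rho n]$, yielding the claimed $(\Omega(1),1+O(\veps))$ size window after rescaling $\veps$. Formally, the relation between \UC and \SSE is the weighted analog of that between \BalancedCut and \SparsestCut, so invoking \WSSE (deferred to Section~\ref{sec:wsse}) in place of \SSE allows the accumulated pieces to be treated as weighted vertices throughout, which is a convenient way to formalize the recursion.

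The main obstacle is the coupling between the growing $\delta(S)$ and the residual \SSE guarantee: after several iterations the relevant expansion bound is $(\opt+\delta(S))/(\rho n-|S|)$, which drifts above $\opt/(\rho n)$ as $S$ grows. I expect this to be controlled by stopping slightly before $|S|=\rho n$ (at $|S|\ge \tfrac12\rho n$) so that the denominator stays $\Omega(\rho n)$, and by absorbing the additive $\delta(S)$ drift into the $O(\alpha)$ factor via a short inductive argument on the iteration count.
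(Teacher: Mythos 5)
Your route is genuinely different from the paper's. For Theorem \ref{thm:uc} itself the paper gives essentially no proof: Section \ref{sec:SSE2UC} just notes that \UC relates to \SSE the way \BalancedCut relates to \SparsestCut and cites the literature. The version it actually proves and uses downstream is Corollary \ref{cor:sdp-subroutine}, which gets the size \emph{lower} bound not by black-box peeling on residual graphs but from part II of Theorem \ref{thm:WSSE}: Algorithm II (Section \ref{sec:alg2}) repeatedly rounds a \emph{single, fixed} SDP solution, deletes the vectors of each extracted piece, and accumulates until the measure reaches $\Omega(H)$. Deleting vectors only decreases the SDP objective and leaves all constraints but one intact, every piece's cut is still measured in the original graph, and an extra constraint ties $\eta(S)$ to $\mu(S)$ so that if the accumulated set overshoots $\mu(T)\le\rho$ the algorithm can fall back to returning the last piece alone. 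Incidentally, the ``drift'' you flag as the main obstacle is not actually there in your setup: every edge from $S^*\setminus S$ to $S$ is deleted in $G\setminus S$, so the witness's boundary in the residual graph is at most $\opt$, not $\opt+\delta(S)$.

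The genuine gap is the size upper bound. Stopping the first time $|S|\ge\tfrac12\rho n$ only yields $|S|\le(\tfrac32+\veps)\rho n$, since the final peeled piece may itself have size $(1+\veps)\rho n$; the constant $\tfrac32$ cannot be removed by ``rescaling $\veps$,'' and the theorem requires $|S|\le(1+\veps)\rho n$ for every constant $\veps\in(0,1)$. The two natural repairs both fail inside your black-box framework. Peeling smaller pieces (calling \SSE with parameter $\Theta(\veps\rho)$) breaks the witness argument: $S^*\setminus S$ certifies low expansion only at scale $\Theta(\rho n)$, and it may be an internal expander with no sub-witness of size $\veps\rho n$ and comparable expansion. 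Returning the last piece $T_\ell$ alone when it is large also fails: its boundary is only controlled in the residual graph $G\setminus S_{<\ell}$, and its edges back to the previously peeled vertices are unbounded, so $\delta_G(T_\ell)$ can be huge. This is precisely what the paper's Algorithm II sidesteps by iterating inside the SDP rounding rather than on residual graphs (each piece's cut is bounded in $G$ itself via the third orthogonal-separator property). As written, your argument establishes a correct but weaker $\bigl(O_\veps(\sqrt{\log n\log(1/\rho)}),\,\Omega(1),\,\tfrac32+\veps\bigr)$ guarantee; to reach $1+\veps$ you need either the paper's mechanism or a further idea.
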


This theorem says that there is a polynomial-time algorithm that \whp finds $S\subseteq V$ of size $\Omega(\rho n) \leq
|S| \leq (1+\varepsilon)\rho n$ and value $\delta (S) \leq O_{\varepsilon} (\sqrt{\log{n}\log{(1/\rho)}}) \opt$, where
$\opt$ is the value of an optimal solution to \UC. This result generalizes the bound of \cite{ARV08} from $\rho=\Omega
(1)$ to {\em any} value of $\rho\in(0,\tfrac12]$. Our factor is better than the $O(\log n)$ true approximation ratio
that follows from~\cite{Racke08}, at the price of slightly violating the size constraint. Our algorithm actually
handles a more general version, called \MBWC, which is required in Theorem \ref{thm:bal}. We defer the precise details
to Section~\ref{sec:SSE2UC}.

\myparagraph{\MMMC} We also consider the following \MMMC problem, suggested by Svitkina and Tardos~\cite{ST04}: the
input is an undirected graph with nonnegative edge-weights and $k$ terminal vertices $t_1,\ldots,t_k$, the goal is to
partition the vertices into $k$ parts $S_1,\ldots,S_k$ (not necessarily balanced), under the constraint that each part
contains exactly one terminal, so as to minimize $\max_i \delta(S_i)$. They designed an $O(\alpha \log
n)$--approximation algorithm for this problem, where $\alpha$ is the approximation factor known for \MinimumBisection.
Plugging $\alpha = O(\log n)$, due to R\"acke~\cite{Racke08}, the algorithm of Svitkina and Tardos achieves $O(\log^2
n)$-approximation. Using a similar algorithm to the one in Theorem~\ref{thm:bal}, we obtain a better approximation
factor.

\begin{theorem}
\label{thm:mcut} \MMMC admits an $O(\sqrt{\log{n}\log k})$--approximation algorithm.
\end{theorem}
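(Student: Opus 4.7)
The proof plan follows the template of Theorem \ref{thm:bal}, substituting the terminal-separation constraint for the balanced-size constraint and invoking the full \WSSE primitive rather than its unit-weight variant. First, I would formulate an SDP relaxation of \MMMC in which each vertex $u$ receives a vector $\bar v_u$ subject to the $\ell_2^2$ triangle inequality and the terminal-separation constraints $\|\bar v_{t_i}-\bar v_{t_j}\|^2\ge 1$, together with an auxiliary variable $M$ that upper-bounds the SDP contribution to each prospective part's cut. The resulting SDP optimum lower-bounds $\opt$.

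I would then round the SDP by iteratively peeling off one terminal-containing subset at a time over at most $k-1$ rounds. At round $j$, having a residual graph $G_j$ and a residual terminal set $T_j$, the algorithm invokes \WSSE on $G_j$ with weight function $w=\mathbf{1}_{T_j}$ and parameter $\rho=1/|T_j|$. Any returned set $S_j$ satisfies $w(S_j)\le (1+\varepsilon)\rho|T_j|=1+\varepsilon$, which forces $w(S_j)=1$ since $w$ is integer-valued and the expansion objective $\delta(S_j)/w(S_j)$ rules out $w(S_j)=0$. Hence $S_j$ contains exactly one terminal $t\in T_j$, and, using the optimal \MMMC part around $t$ as a witness of $w$-expansion at most $\opt$, we obtain
\[
\delta(S_j)\;\le\; O_{\varepsilon}\bigl(\sqrt{\log n\log|T_j|}\bigr)\cdot\opt\;\le\; O_{\varepsilon}\bigl(\sqrt{\log n\log k}\bigr)\cdot\opt.
\]
The set $S_j$ is assigned to $t$, the algorithm recurses on $G_{j+1}=G_j\setminus S_j$, and the last surviving terminal inherits the remainder. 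SDP feasibility is preserved across iterations because the surviving terminals retain their pairwise $\ell_2^2$-separation, and taking the maximum cut over all $k-1$ rounds yields the claimed $O(\sqrt{\log n\log k})$ approximation.

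The main obstacle is arguing that the \WSSE primitive, applied with this integer-valued weight function, indeed returns a singleton-terminal subset at every iteration. Two points underlie this. First, the $(1+\varepsilon)$ size-violation is absorbed because $w$ is $\{0,1\}$-valued and we take $\varepsilon<1$, so the only feasible weight value above zero is $1$. Second, the primitive must tolerate zero weights on non-terminal vertices; if the \WSSE algorithm of Theorem \ref{thm:sse} requires strictly positive weights, one instead sets $w(u)=\eta$ for non-terminals with a tiny $\eta>0$ and correspondingly adjusts $\rho$, which changes the approximation guarantee by only a $(1+o(1))$ factor. A final technical point is a union bound over the $k-1$ rounds to absorb the high-probability guarantee of the primitive; this costs only a factor polynomial in $k$ inside the log, which is harmless.
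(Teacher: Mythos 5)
Your plan---iteratively peeling off one terminal's part from a shrinking residual graph via \WSSE---is precisely the greedy strategy that the paper rejects in Section 1.3 (the ``third attempt'') and Appendix~\ref{app:greedy-bad}, and it does not give the claimed bound. The gap is in the accounting of cut edges. At round $j$ the oracle only certifies $\delta_{G_j}(S_j)\le O(\sqrt{\log n\log k})\,\opt$ where $G_j$ is the residual graph, but the \MMMC objective charges the part $S_j$ for its boundary in the \emph{original} graph, which additionally contains every edge from $S_j$ to the previously removed sets $S_1,\dots,S_{j-1}$. These edges are invisible to round $j$'s oracle call, and the remainder handed to the last surviving terminal can absorb one boundary's worth of edges from each of the $k-1$ peeled sets, i.e.\ $\Theta(k)\cdot\opt$ rather than $O(\sqrt{\log n\log k})\cdot\opt$. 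The tree of Appendix~\ref{app:greedy-bad} realizes this loss even with an exact oracle once you place terminals at $u_{i,k}$ for $i\le k-1$ and at $v$: every round the cheapest single-terminal set is a pendant path of residual boundary $1$, and the leftover $\{v,u_{1,0},\dots,u_{k-1,0}\}$ (or its analogue under any peeling order) has boundary $k-1$ while $\opt\le 2$. Your witness argument (restricting the optimal part to $G_j$ only shrinks its boundary) is correct, but it controls only the residual costs, so ``taking the maximum cut over all $k-1$ rounds'' is not the min-max value of the output partition. (A smaller point: the SDP with the auxiliary variable $M$ in your first paragraph is never used by your rounding and could be dropped.)

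The paper avoids this by never cutting in a residual graph. It first builds, by multiplicative weight updates that modify only the vertex measure (Section~\ref{sec:mult}, with Remark~\ref{rem:terminals} enforcing the at-most-one-terminal constraint inside the \WSSE subroutine), a multiset cover $\calS$ in which every set satisfies $\delta_G(S)\le \alpha\,\opt$ \emph{in the original graph} and every vertex is covered by an $\Omega(1/k)$ fraction of the sets. It then converts the cover into a partition by a random ordering, an uncrossing step that replaces any part whose boundary exceeds $2B$ by the full covering set containing it (with a potential argument for termination), and a merging step (Theorem~\ref{thm:aggreg}); Theorem~\ref{thm:mcut} is then the special case $\rho=1$, $T_i=\{t_i\}$ of Theorem~\ref{thm:gen}. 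To salvage a peeling-style argument you would have to bound the edges each part sends back into earlier parts, which is exactly what the cover-then-uncross framework is designed to do.
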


Somewhat surprisingly, we show that removing the dependence on $n$ for \MMMC (even though no balance is required)
appears hard, which stands in contrast to its min-sum version, known as \MultiwayCut, which admits
$O(1)$--approximation \cite{CKR00,KKSMY04}. The idea is to show that it would imply a similar independence of $n$ for
the min-sum version of $k$-partitioning, thus for large but constant $k$, we would get an $(O(1), O(1))$-bicriteria
approximation for \MKP, which seems unlikely based on the current state of art~\cite{ARV08,AR06,KNS09}.

\begin{theorem}\label{thm:multiway-hard}
If there is a $k^{1-\varepsilon}$--approximation algorithm for \MMMC for some constant $\varepsilon>0$, then there is a
$(k^2,\gamma)$ bicriteria approximation algorithm for \MKP with $\gamma\le 3^{2/\varepsilon}$.
\end{theorem}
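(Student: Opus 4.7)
The plan is to use the MMMC algorithm recursively to construct a bicriteria MKP partition. At each recursion level, given a subgraph $H$ of $G$ and a desired number $k'$ of sub-parts, we build an MMMC instance by attaching $k'$ new terminal vertices $t_1,\ldots,t_{k'}$, each connected to every vertex of $V(H)$ with auxiliary edge-weight $W$. The weight $W$ is determined by enumerating $O(\log n)$ values from a geometric grid of ratio~$3$ (in effect guessing $\opt$ up to constant factor), so the whole construction remains polynomial-time.

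A direct calculation shows that the boundary of the MMMC part containing $t_i$ (whose vertex side is $V_i\subseteq V(H)$) is a linear combination of $W\,|V(H)|$, $W(k'-2)|V_i|$, and $\delta_H(V_i)$; thus by calibrating $W$ we simultaneously penalize size imbalance and large cut. Grouping an optimal MKP partition of $H$ into $k'$ super-parts via a simple greedy rebalancing of both size and boundary-weight yields a reference MMMC solution of value $O\bigl(W\,|V(H)|+\opt(H)/k'\bigr)$, which upper-bounds the MMMC optimum on our constructed instance. Invoking the assumed $(k')^{1-\varepsilon}$-approximation for MMMC then produces a partition $V_1,\ldots,V_{k'}$ of $V(H)$ in which every $|V_i|$ is at most a factor $3$ above the ideal $|V(H)|/k'$ and every $\delta_H(V_i)$ is controlled correspondingly.

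We then recurse inside each $V_i$, using the standard observation that restricting a globally optimal $k$-partition of $G$ to $V_i$ (and merging the resulting classes) remains a valid candidate for $G[V_i]$, so $\opt(G[V_i])\le\opt$. Running $d=\lceil 2/\varepsilon\rceil$ levels with $k'=k^{2/d}$ sub-parts per level produces at most $(k^{2/d})^d=k^2$ parts in total; the balance slack accumulates multiplicatively to $3^d=3^{2/\varepsilon}=\gamma$; and a telescoping sum over levels bounds the total cost by $O(k^2)\cdot\opt$, since each level multiplies per-part cut by at most $(k^{2/d})^{1-\varepsilon}$.

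The main obstacle is the joint calibration of $W$ per level so that the reference MMMC solution is near-optimal in both the size and the cut criterion simultaneously---this is precisely what drives the factor-$3$ slack per level that accumulates to $\gamma=3^{2/\varepsilon}$, and it requires a careful choice combining the $O(\log n)$-grid guess of $\opt$ with the greedy rebalancing of the reference super-parts. A secondary (but routine) technicality is verifying the restriction-and-merge inequality $\opt(G[V_i])\le\opt$ so the recursion telescopes cleanly across all $d$ levels.
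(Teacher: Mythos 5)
There is a genuine gap, and it is in the sentence claiming that invoking the $(k')^{1-\varepsilon}$-approximation for \MMMC ``produces a partition $V_1,\ldots,V_{k'}$ of $V(H)$ in which every $|V_i|$ is at most a factor $3$ above the ideal $|V(H)|/k'$.'' No calibration of $W$ can give a constant-factor balance here. The terminal gadget makes the boundary of the part containing $t_i$ equal to $\delta_H(V_i)+W|V(H)|+W(k'-2)|V_i|$ (up to the exact bookkeeping), and the only guarantee you have is that this is at most $(k')^{1-\varepsilon}$ times the \MMMC optimum of the constructed instance, which is itself at least roughly $W|V(H)|$. Hence the size bound you can extract is $|V_i|\lesssim (k')^{1-\varepsilon}\cdot|V(H)|/k'$: the balance violation per level is $\Theta\bigl((k')^{1-\varepsilon}\bigr)$, polynomial in $k'$, not $3$. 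With your fixed branching $k'=k^{2/d}$ over $d=\lceil 2/\varepsilon\rceil$ levels, the accumulated violation relative to the bottom-level ideal size $n/k^2$ is $\prod_{i=1}^{d}(k')^{1-\varepsilon}=k^{2(1-\varepsilon)}$, so the final pieces can be as large as about $n/k^{2\varepsilon}$, which exceeds the target $\gamma n/k$ by a factor around $k^{1-2\varepsilon}$ --- unbounded in $k$ for $\varepsilon<1/2$. The factor $3^{2/\varepsilon}$ in the theorem cannot arise as a product of per-level factor-$3$ slacks, because no single level gives you a constant slack. (A secondary problem: your recursion outputs $k^2$ parts, whereas an $(\alpha,\gamma)$-bicriteria algorithm for \MKP must output a $k$-partition; this needs a merging step whose size control you have not provided.)

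The paper's proof repairs exactly this difficulty by splitting the argument into two lemmas. First, the terminal gadget (essentially your construction, with $B$ guessed in powers of $2$) is applied \emph{once} to convert a $\rho$-approximation for \MMMC into a $(5\rho k,\,10\rho)$-bicriteria algorithm for \MKP; with $\rho=k^{1-\varepsilon}$ this honestly records the polynomial balance violation $10k^{1-\varepsilon}$. Second, a separate boosting recursion improves the balance: at level $i$ each sub-instance asks for $k_i=k^{(1-\varepsilon/2)^i}$ parts and has $n_i=(n/k)k_i$ vertices, so the \emph{ideal part size stays fixed at $n/k$} throughout, and the level-$i$ balance violation $k_i^{1-\varepsilon}$ is absorbed because a piece of size $(n/k)k_i^{1-\varepsilon}\le(n/k)k_{i+1}=n_{i+1}$ fits exactly into a level-$(i+1)$ instance (after padding with dummy singletons). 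This takes $\Theta(\log\log k)$ levels, and the final balance $3^{2/\varepsilon}$ is simply the termination threshold $k_t\le 3^{2/\varepsilon}$, i.e., the size of the last-level instances measured in units of $n/k$ --- not an accumulation of constant per-level losses. If you want to salvage your proposal, you must replace the fixed branching $k^{2/d}$ by this doubly-exponentially shrinking sequence (or an equivalent device) so that each level's polynomial balance loss is swallowed by the next level's part count.
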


Additionally, we also consider a common generalization of \MMKP and \MMMC, which we call \mmc. In fact we obtain
Theorem~\ref{thm:mcut} as a special case of our result for \mmc.

\myparagraph{Excluded-minor graphs} Finally, we obtain an improved approximation -- {\em constant factor} -- for SSE in
graphs excluding a fixed minor.

\begin{theorem} \label{thm:sse-planar}
For every constant $\varepsilon>0$, \SSE admits:
\begin{OneLiners}
\item bicriteria approximation of $\big(O_\veps(r^2),\ 1+\veps\big)$ on graphs excluding a $K_{r,r}$-minor.
\item bicriteria approximation of $\big(O_\veps(\log g),\ 1+\veps\big)$ on graphs of genus $g\ge 1$.
\end{OneLiners}
\end{theorem}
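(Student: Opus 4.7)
The plan is to prove Theorem \ref{thm:sse-planar} by re-using the algorithmic framework that yields Theorem \ref{thm:sse} and replacing its general-purpose rounding step with padded decompositions specific to excluded-minor graphs. The guarantee of Theorem \ref{thm:sse} factors naturally into two pieces: a convex relaxation with spreading constraints whose optimum lower bounds $\Phi_\rho$, and an ARV-style rounding that pays $O(\sqrt{\log n \log(1/\rho)})$. On a minor-closed family the second piece should admit a constant-factor (or $O(\log g)$) improvement via the classical flow-cut gap results for these families.

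First, I would solve the same relaxation underlying Theorem \ref{thm:sse}, obtaining a fractional solution encoded as a semi-metric $d$ on $V$ that satisfies the spreading constraints at scale $\rho n$ and has weighted edge length $\sum_{\{u,v\}\in E} w(u,v)\, d(u,v)$ at most $\Phi_\rho$ times the total mass of $d$. Then I would round $d$ using a Klein--Plotkin--Rao type random partition: for $G$ excluding $K_{r,r}$, the Fakcharoenphol--Talwar sharpening of KPR gives, at any target diameter $\Delta$, a distribution over partitions whose clusters have $d$-diameter at most $\Delta$ and which separate each edge $\{u,v\}$ with probability at most $O(r^2)\, d(u,v)/\Delta$. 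For genus-$g$ graphs, the Indyk--Sidiropoulos decomposition achieves modulus $O(\log g)$. Choosing $\Delta$ to match the spreading scale dictated by $\rho$, a typical cluster has mass $\Theta(\rho n)$ and, by linearity of expectation, expected edge-boundary $O(r^2)\cdot \opt$ (respectively $O(\log g)\cdot \opt$).

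Finally, to enforce the bicriteria size bound $|S|\le (1+\veps)\rho n$, I would use the same slackness in the spreading constraint as in Theorem \ref{thm:sse}: among the clusters produced by the random partition, select one whose cardinality lies in the target window and whose boundary realizes the expected bound, amplifying by independent repetitions to obtain such a cluster \whp. The main obstacle is marrying the edge-oriented guarantee of the decomposition with the \emph{vertex} cardinality constraint: the decomposition controls only per-edge cut probabilities, so a careful pigeonhole over clusters, weighted by their intersection with the support of the fractional solution, is needed to single out a cluster of size in the narrow window $(\Omega(\rho n),\, (1+\veps)\rho n)$ without inflating the cut by more than a $(1+\veps)$ factor. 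Once this selection step is in place, the analysis parallels that of Theorem \ref{thm:sse} with $\sqrt{\log n \log(1/\rho)}$ replaced by the decomposition modulus $O(r^2)$ or $O(\log g)$.
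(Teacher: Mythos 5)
Your high-level plan --- keep the relaxation-plus-spreading-constraints structure of Theorem \ref{thm:sse} and swap the ARV-style rounding for a Klein--Plotkin--Rao (resp.\ Lee--Sidiropoulos) decomposition --- is indeed the paper's strategy (the paper works with an LP relaxation \eqref{LP:SSE} rather than the SDP, with vertex variables $x(u)$ playing the role of $\|\bar u\|^2$ and $z(u,v)$ the role of $\|\bar u-\bar v\|^2$). But the step you yourself flag as "the main obstacle" is a genuine gap, and it hides two distinct problems. First, there is no single diameter scale $\Delta$ "dictated by $\rho$": the spreading constraint only guarantees that the $\mu$-mass within $z$-distance $\beta x(u)$ of $u$ is at most roughly $\rho/(1-\beta)$, and $x(u)$ varies from vertex to vertex (the LP is essentially homogeneous in $x$). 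Applying KPR directly to $z$ at one fixed $\Delta$ therefore does not bound the $\mu$-mass of the cluster containing $u$. The paper resolves this by rescaling the edge lengths to $y(u,v)=\min\{1,\,z(u,v)/x(u),\,z(u,v)/x(v)\}$, proving that the induced shortest-path metric $d$ satisfies $d\ge\tfrac23 y$ (Claim \ref{cl:shpathmet}, which requires the triangle inequalities on $z$ \emph{and} the constraint $|x(u)-x(v)|\le z(u,v)$), and only then running KPR at the fixed scale $\Delta=\beta/3$; the separation probability then comes out as $O(r^2\beta^{-1})\,z(u,v)/x(u)$, i.e., correctly normalized per vertex.

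Second, "select one [cluster] whose cardinality lies in the target window and whose boundary realizes the expected bound" is not justified: a typical KPR cluster can be tiny, and size and boundary are correlated random variables, so linearity of expectation alone does not produce a single cluster that is simultaneously good on both counts. The paper's fix is to build an \emph{LP separator}: sample a cluster $C$ with probability proportional to $x_\infty(C)=\max_{u\in C}x(u)$ and then apply a random threshold $t\in[0,1]$, keeping $\{u\in C: x(u)\ge t\,x_\infty(C)\}$, so that $\Pr(u\in S)=\alpha\,x(u)$ \emph{exactly}, spread-apart pairs are never both kept, and each edge is cut with probability at most $\alpha D\,z(u,v)$. Only with these three properties can one run the potential-function argument of Algorithm I (positive expectation of $f(S)=\eta(S)-\tfrac{\delta(S)}{w(E)}\cdot\tfrac{H}{4D\cdot \mathrm{LP}}$, restricted to the event $\mu(S)<(1+10\varepsilon)\rho$) to extract, after polynomially many independent samples, one set with $\eta(S)>0$, $\mu(S)\le(1+\varepsilon)\rho$, and expansion $O(r^2)\Phi_\rho$. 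Without the exact proportionality $\Pr(u\in S)=\alpha x(u)$ and the zero-co-occurrence property, your pigeonhole step does not go through as stated.
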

These bounds extend to the \UC problem, and by plugging them  into the proof of Theorems~\ref{thm:bal} and
\ref{thm:mcut}, we achieve an improved approximation ratio of $O(r^2)$ for \MMKP and \MMMC in graphs excluding a
$K_{r,r}$-minor.

\subsection{Techniques}

For clarity, we restrict the discussion here mostly to our main application, \MMKP. Our approach has two main
ingredients. First, we reduce the problem to a weighted version of SSE, showing that an $\alpha$ (bicriteria)
approximation for the latter can be used to achieve $O(\alpha)$ (bicriteria) approximation for \MMKP. Second, we design
an $O_\veps(\sqrt{\log n \log (1/\rho)})$ (bicriteria) approximation for weighted SSE (recall that in our applications
$\rho = 1/k$).

Let us first examine SSE, and assume for simplicity of presentation that $\rho =1/k$. Note that SSE bears obvious
similarity to both \BalancedCut and min-sum $k$--partition (its solution contains a single cut with a size condition,
as in \BalancedCut, but the size of this cut is $n/k$ similarly to the $k$ pieces in min-sum $k$--partition). Thus, our
algorithm is inspired by, but different from, the approximation algorithms known for these two problems
\cite{ARV08,KNS09}. As in these two problems, we use a semidefinite programming (SDP) relaxation to compute an
$\ell_2^2$ metric on the graph vertices. However, new spreading constraints are needed since SSE is highly asymmetric
in its nature --- it contains only a {\em single} cut of size $n/k$. We devise a randomized rounding procedure based on
the {\em orthogonal separator} mechanics, first introduced by Chlamtac, Makarychev, and Makarychev \cite{CMM06} in the
context of unique games. These ideas lead to an algorithm that computes a cut $S$ of {\em expected} size $|S|\leq
O(n/k)$ and of {\em expected} cost $\delta (S)\leq O(\sqrt{\log n\log k})$ times the SDP value. An obvious concern is
that both properties occur in only expectation and might be badly correlated, e.g., the expected edge-expansion
$\E[\delta(S)/|S|]$ might be extremely large. Nevertheless, we prove that with good probability, $|S|=O(n/k)$ and $
\delta(S)/|S|$ is sufficiently small.

For SSE on excluded-minor and bounded-genus graphs, we give a better approximation guarantees, of a constant factor, by
extending the notion of orthogonal separators to linear programs (LPs) and designing such low-distortion ``LP
separators'' for these special graph families. The proof uses the probabilistic decompositions of Klein, Plotkin, and
Rao~\cite{KPR93} and Lee and Sidiropoulos~\cite{LS10}.  We believe that this result may be of independent interest. Let
us note that the LP formulation for SSE is not trivial and requires novel spreading constraints. We remark that even on
planar graphs, the decomposition of \Racke \cite{Racke08} suffers an $\Omega(\log n)$ loss in the approximation
guarantee, and thus does not yield $o(\log n)$ ratio for SSE on this class of graphs.

Several natural approaches for designing an approximation algorithm for Min--Max $k$--Parti\-tio\-ning fail. First,
reducing the problem to trees \ala \Racke \cite{Racke08} is not very effective, because there might not be a single
tree in the distribution that preserves {\em all} the $k$ cuts simultaneously. Standard arguments show that the loss
might be a factor of $O(k\log{n})$ in the case of $k$ different cuts.
Second, one can try and formulate a relaxation for the problem. However, the natural linear and semidefinite
relaxations both have large integrality gaps. As a case study, consider for a moment \MMMC. The standard linear
relaxation of Calinescu, Karloff and Rabani~\cite{CKR00} was shown by Svitkina and Tardos~\cite{ST04} to have an
integrality gap of $k/2$. In Appendix~\ref{sec:IntegralityGap} we extend this gap to the semidefinite relaxation that
includes all $\ell_2^2$ triangle inequality constraints.
A third attempt is to repeatedly remove from the graph, using SSE, pieces of size $\Theta (n/k)$. However, by removing
the ``wrong'' vertices from the graph, this process might arrive at a subgraph where every cut of $\Theta(n/k)$
vertices has edge-weight greater by a factor of $\Theta(k)$ than the original optimum (see
Appendix~\ref{app:greedy-bad} for details). Thus, a different approach is needed.

Our approach is to use multiplicative weight-updates on top of the algorithm for weighted SSE. This yields a collection
$\calS$ of sets $S$, all of size $|S|=\Theta(n/k)$ and cost $\delta(S)\leq O(\sqrt{\log n \log k}) \opt$, that covers
every vertex $v\in V$ at least $\Omega(n/k)$ times. (Alternatively, this collection $\calS$ can be viewed as a
fractional solution to a configuration LP of exponential size.) Next, we randomly sample sets $S_1,\dots, S_t$ from
$\calS$ till $V$ is covered, and derive a partition given by $P_1=S_1$, $P_2=S_2\setminus S_1$, and in general
$P_i=S_i\setminus (\cup_{j<i}S_j)$. This step is somewhat counter-intuitive, since the sets $P_i$ may have very large
cost $\delta(P_i)$ (because a set $P_i$ might be a strict subset of a set $S_{i'}$). We show that the total expected
boundary of the partition is not very large, i.e., $\E[\sum_i \delta(P_i)]\leq O(k \sqrt{\log n \log k}) \opt$.  Then,
we start fixing the partition by the following local operation: find a $P_i$ violating the constraint $\delta(P_i)
\leq O(\sqrt{\log n \log k}) \opt$, replace it with the unique $S_i$ containing it, and adjust other sets $P_j$
accordingly. Somewhat surprisingly, we prove that this local fixing procedure terminates (quickly). Finally, the
resulting partition consists of sets $P_i$, each of which satisfies the necessary properties, but now the number of
these sets might be very large. So the last step is to merge small sets together. We show that this can be done while
maintaining simultaneously the constraints on the sizes and on the costs of the sets.

\myparagraph{Organization} We first show in Section \ref{sec:wsse} how to approximate \WSSE (in both general and
excluded-minor graphs). We then show in Section \ref{sec:SSE2UC} that an approximation algorithm for \WSSE also yields
one for \MBWC. In Section \ref{sec:minmax-bal} we present an approximation algorithm for \MMKP that uses the
aforementioned algorithm for \UC (and in turn the one for \WSSE). The common generalization of both \MMKP and \MMMC,
\mmc, appears in Section~\ref{sec:FE}. Theorem \ref{thm:multiway-hard} is proved in Section~\ref{sec:Hardness}.

%%%%%%%%%%%%%%%%%%%%%%%%%%%%%%%%%%%%%%%%%%%%%%%%%%%%%%%%%%%%

\section{Approximation Algorithms for Small Set Expansion}
\label{sec:wsse}

In this section we design approximation algorithms for the \SSE problem. Our main result is for general graphs and uses
an SDP relaxation. It actually holds for a slight generalization of the problem, where expansion is measured with
respect to vertex weights (see Definition \ref{defn:WSSE} and Theorem \ref{thm:WSSE}). We further obtain improved
approximation for certain graph families such as planar graphs (see Section \ref{sec:SSE_LP}).

To simplify notation, we shall assume that vertex weights are normalized: we consider measures $\mu$ and $\eta$ with
$\mu(V)=\eta(V)=1$. We denote $\mu(u)=\mu(\{u\})$ and $\eta(u)=\eta(\{u\})$. We let $(V,w)$ denote a complete
(undirected) graph on vertex set $V$ with edge-weight $w(u,v)=w(v,u)\ge 0$ for every $u\neq v\in V$. In our context,
such $(V,w)$ can easily model a specific edge set $E$, by simply setting $w(u,v)=0$ for every non-edge $(u,v)\notin E$.
Recall that we let $\delta(S) \eqdef \sum_{u\in S,v\in V\setminus S} w(u,v)$ be the total weight of edges crossing the
cut $(S,V\setminus S)$, and further let $w(E)$ denote the total weight of all edges.

\begin{definition}[\WSSE]
\label{defn:WSSE} Let $G=(V,w)$ be a graph with nonnegative edge-weights, and let $\mu$ and $\eta$ be two
 measures on the vertex set $V$ with $\mu(V)=\eta(V)=1$.
The \emph{weighted small set expansion} with respect to
$\rho \in (0,1/2]$ is
$$\Phi_{\rho,\mu,\eta} (G) \quad \eqdef \quad
\min \left\{ \frac{\delta(S)}{w(E)}\times \frac{1}{\eta(S)}
  \ :\ \eta(S)>0,\ \mu(S)\leq \rho \right\}.
$$
\end{definition}

\begin{theorem}[Approximating SSE] \label{thm:WSSE}
(I) For every fixed $\varepsilon>0$, there is a polynomial-time algorithm that
given as input an edge-weighted graph $G=(V,w)$,
two measures $\mu$ and $\eta$ on $V$ ($\mu(V)=\eta(V)=1$), and some $\rho\in(0,1/2]$,
finds a set $S\subset V$ satisfying
$\eta(S)>0$, $\mu(S)\leq (1+\varepsilon) \rho$
and
\begin{equation}\label{eq:sse_distort}
\frac{\delta(S)}{w(E)}\times \frac{1}{\eta(S)} \quad \leq \quad D\times \Phi_{\rho,\mu,\eta} (G),
\end{equation}
where $D= O_{\varepsilon} (\sqrt{\log n\log (1/\rho)})$.

\noindent (II) When the input contains in addition a parameter $H\in(0,1)$, the algorithm finds a non-empty set
$S\subset V$ satisfying $\mu(S)\leq (1+\varepsilon) \rho$, $\eta(S) \in [\Omega(H), 2(1+ \veps) H]$, and
\begin{equation}
\frac{\delta(S)}{w(E)}\times \frac{1}{\eta(S)} \quad
  \leq \quad D\times \min \left\{ \frac{\delta(S)}{w(E)}\times \frac{1}{\eta(S)}
     \ :\ \eta(S)\in[H,2H] ,\ \mu(S)\leq \rho  \right\},
\end{equation}
where $D= O_{\varepsilon} (\sqrt{\log n\log (\maxx{1/\rho, 1/H})})$.
\end{theorem}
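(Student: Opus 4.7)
The plan is to relax \WSSE to a semidefinite program and round it via orthogonal separators \ala Chlamtac, Makarychev and Makarychev \cite{CMM06} and Raghavendra, Steurer and Tetali \cite{RST10}. For each vertex $u \in V$ I introduce a vector $\bar u$, intended to satisfy $\|\bar u\|^2 = x_u$ where $x_u \in \{0,1\}$ indicates membership in $S$. The SDP minimizes $\sum_{u,v} w(u,v)\|\bar u - \bar v\|^2$ subject to: (i) $\ell_2^2$-triangle inequalities including with the zero vector; (ii) the $\mu$-size constraint $\sum_u \mu(u)\|\bar u\|^2 \le \rho$; (iii) the normalization $\sum_u \eta(u)\|\bar u\|^2 = 1$; and (iv) novel spreading constraints, roughly of the form $\sum_v \mu(v)\min\{\|\bar u - \bar v\|^2, \|\bar u\|^2\} \ge (1-\rho)\|\bar u\|^2$, which force each vector to be $\ell_2^2$-far from most of the $\mu$-mass and are tailored to the asymmetry of \WSSE. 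The intended integral solution has SDP value at most $w(E)\cdot \Phi_{\rho,\mu,\eta}(G)$.

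Given an SDP solution, I invoke an orthogonal separator with scale $\alpha \asymp \rho$: this outputs a random set $T \subseteq V$ with $\Pr[u \in T] \propto \alpha\|\bar u\|^2$ and $\E[\delta(T)] \le D \cdot \alpha \cdot \SDP$, where $D = O_\veps(\sqrt{\log n\, \log(1/\rho)})$. Consequently $\E[\mu(T)] \le \alpha\rho$, $\E[\eta(T)] = \alpha$, and $\E[\delta(T)] \le D\cdot \alpha\cdot w(E)\cdot \Phi_{\rho,\mu,\eta}(G)$, so in the ratio-of-expectations sense $T$ already realizes the desired expansion.

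The main obstacle is the familiar gap between $\E[\delta(T)]/\E[\eta(T)]$ and $\E[\delta(T)/\eta(T)]$, combined with the need to \emph{simultaneously} control three quantities. I would show that with constant probability the event $\{\eta(T) = \Omega(\alpha)\ \text{and}\ \mu(T) \le (1+\veps)\rho\}$ occurs: the lower tail on $\eta(T)$ comes from a second-moment argument in which the spreading constraint enforces $\Pr[u,v \in T]$ to be small for most pairs, yielding $\E[\eta(T)^2] = O(\alpha^2)$; the upper tail on $\mu(T)$ follows from Markov after shrinking $\alpha$ by an $\veps$-factor. Conditioning on this good event and applying a further Markov step on $\delta(T)$ gives the bound \eqref{eq:sse_distort}, and polynomially many independent trials boost the success probability to $1-o(1)$.

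For Part~(II), I would replace the normalization by the explicit two-sided constraint $\sum_u \eta(u)\|\bar u\|^2 \in [H,2H]$, tune the separator scale to $\alpha \asymp 1/\max\{1/\rho, 1/H\}$, and run the same rounding; the bounds $\eta(S) \in [\Omega(H),\, 2(1+\veps)H]$ then fall out of the same concentration-plus-conditioning argument, anchored now to the prescribed $\eta$-interval, and the distortion becomes $O_\veps(\sqrt{\log n\,\log(\max\{1/\rho,1/H\})})$. The hardest technical step throughout is the joint concentration: the spreading constraints must be strong enough to tame both $\eta(T)$ (from below) and $\delta(T)/\eta(T)$ (from above), while simultaneously respecting the asymmetric $\mu$-size cap within a $1+\veps$ factor.
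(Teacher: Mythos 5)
Your SDP (spreading constraints, $\ell_2^2$ triangle inequalities with the origin) and the use of orthogonal separators match the paper's construction, but the step where you convert the three expectation bounds into a single good set is where the argument breaks. You claim a lower tail for $\eta(T)$ via $\E[\eta(T)^2]=O(\alpha^2)$, arguing that the spreading constraints keep $\Pr[u,v\in T]$ small for most pairs. This fails for two reasons. First, the diagonal contribution $\sum_u \eta(u)^2\Pr[u\in T]=\alpha\sum_u\eta(u)^2\|\bar u\|^2$ can be $\Theta(\alpha)$ (e.g.\ when a single vertex carries constant $\eta$-mass), which swamps $\alpha^2$ since the CMM probability scale $\alpha$ is only guaranteed to be $1/\mathrm{poly}(n)$ --- you cannot simply ``set'' $\alpha\asymp\rho$. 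Second, and more fundamentally, the spreading constraints and the second orthogonal-separator property control the \emph{$\mu$-mass} of pairs that are close in $\ell_2^2$, not their $\eta$-mass; for close pairs $\Pr[u,v\in T]$ can be as large as $\alpha\min(\|\bar u\|^2,\|\bar v\|^2)$, and nothing in the relaxation bounds the $\eta$-mass of such pairs in Part~(I). Consequently the ``good event'' does not occur with constant probability (indeed $T=\varnothing$ with probability $1-O(\alpha n)$), and the subsequent conditioning-plus-Markov step on $\delta(T)$ is not justified.

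The paper avoids any lower-tail estimate for $\eta(T)$ altogether. It defines a linear profit function $f(S)=\eta(S)-\frac{\delta(S)}{w(E)}\cdot\frac{H}{4D\cdot \SDP}$ on the event that $S\neq\varnothing$ and $\mu(S)<(1+10\veps)\rho$ (and $f(S)=0$ otherwise), where $H$ is a guessed scale with $\sum_v\eta(v)\|\bar v\|^2\ge H$. A per-vertex argument (spreading constraint $+$ Markov for the close set, separator property $+$ Markov for the far set) shows each $u$ lies in $S$ \emph{and} $\mu(S)$ is small with probability $\ge\tfrac34\alpha\|\bar u\|^2$, giving $\E[f(S)]\ge\alpha H/2$; since $f\le 2nH$ deterministically, $f(S)>0$ with probability $\ge\alpha/n$, and $f(S)>0$ \emph{directly certifies} $\eta(S)>0$ and the ratio bound. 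Note that Part~(I) needs only $\eta(S)>0$, not $\eta(S)=\Omega(\alpha)$, so you are trying to prove a stronger (and unavailable) concentration statement. For Part~(II) the paper adds $\eta$-spreading constraints $\sum_v\min\{\|\bar u-\bar v\|^2,\|\bar u\|^2\}\eta(v)\le 2H\|\bar u\|^2$ to cap $\eta(S)$ from above, augments $f$ with a $-\mu(S)H/(4\rho)$ term, and accumulates sets over many iterations until $\mu(T)\ge\rho/4$ or $\eta(T)\ge H/4$ to obtain the lower bound $\eta(T)=\Omega(H)$; your two-sided constraint on $\sum_u\eta(u)\|\bar u\|^2$ constrains the SDP's total mass, not the rounded set's, so it does not by itself yield $\eta(S)\le 2(1+\veps)H$. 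To repair your write-up, replace the second-moment step with this potential-function argument and the iterative accumulation for Part~(II).
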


We prove part I of the theorem in Section \ref{sec:alg1}, and part II in Section~\ref{sec:alg2}. These algorithms
require the following notion of $m$-orthogonal separators due to Chlamtac, Makarychev, and Makarychev~\cite{CMM06}.

\begin{definition}[Orthogonal Separators]
Let $X$ be an $\ell_2^2$ space (i.e., a collection of vectors satisfying $\ell_2^2$ triangle inequalities).
We say that a distribution over subsets of $X$
is
an $m$-orthogonal separator of $X$
with distortion $D$, probability scale $\alpha>0$ and separation threshold  $\beta < 1$ if the following
conditions hold for $S\subset X$ chosen according to this distribution:
\begin{OneLiners}
\item For all $u\in X$ we have $\Pr (u\in S) = \alpha \, \| u\|^2$.
\item For all $u,v\in X$ with  $\|u - v\|^2\geq \beta
\min (\| u\|^2, \|v\|^2)$,
$$\Pr (u \in S \text{ and } v \in S) \quad \leq \quad \frac{\minn{\Pr(u \in S),  \Pr(v \in S)}}{m}.$$
\item For all $u,v\in X$ we have
$\Pr(I_S(u) \neq  I_S(v))\leq \alpha  D \times \|u - v\|^2$, where $I_S$ is the indicator function of the set $S$.
\end{OneLiners}
\end{definition}

\begin{theorem}[\cite{CMM06}]
There exists a polynomial-time randomized algorithm that given a set of vectors $X$, positive number $m$, and $\beta
<1$ generates $m$-orthogonal separator with distortion $D=O_{\beta}(\sqrt{\log |X| \log m})$ and scale $\alpha \geq
1/p(|X|)$ for some polynomial $p$.
\end{theorem}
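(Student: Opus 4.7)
The plan is to extend the Arora--Rao--Vazirani (ARV) rounding technique for Sparsest Cut so that the rounded set $S$ has inclusion probability proportional to $\|u\|^2$ rather than a uniform constant. At the core, the construction samples a Gaussian direction $g \in \mathbb{R}^d$, reads off the projections $\langle g, u \rangle$, and extracts a structured random subset via a chaining argument that exploits the $\ell_2^2$ triangle inequalities on $X$.

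Concretely, I would (i) sample a standard Gaussian $g$, (ii) sample auxiliary parameters (a random threshold, and a random scale level) from carefully chosen distributions so that the marginal inclusion probability comes out as $\alpha\|u\|^2$ for some $\alpha \geq 1/\mathrm{poly}(|X|)$ to be computed, and (iii) produce a candidate set by thresholding the projections $\langle g,u\rangle$ in a way that couples naturally with the $\|u\|^2$ scaling via the second-moment identity $\E[\langle g,u\rangle^2] = \|u\|^2$. The ARV matching/chaining argument then shows that for "far" pairs $(u,v)$ with $\|u-v\|^2 \geq \beta \min(\|u\|^2,\|v\|^2)$, the joint probability $\Pr(u,v \in S)$ is already small, and moreover $\Pr(I_S(u) \neq I_S(v)) \leq \alpha \cdot O(\sqrt{\log |X|})\cdot\|u-v\|^2$. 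To boost this base construction into an $m$-orthogonal separator, I would intersect $\Theta(\log m)$ independent copies: the joint probability shrinks by an additional factor of $m$, the inclusion probability retains the $\|u\|^2$ scaling (after renormalizing $\alpha$), and the distortion inflates by a factor of $\sqrt{\log m}$ via a union bound, yielding $D = O_\beta(\sqrt{\log|X|\log m})$ as claimed.

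The technical heart, and the main obstacle, is the ARV matching step: one must show that with overwhelming probability the Gaussian projection admits a large matching of stretched pairs that are nearly antipodal along $g$. This is the $\ell_2^2$ analogue of the ARV structure theorem and is the source of the $\sqrt{\log|X|}$ factor; the proof proceeds by contradiction, using the $\ell_2^2$-triangle inequality together with Gaussian concentration to rule out too many stretched pairs being compatible with the rounding. Beyond this core lemma, care must be taken to tune the auxiliary distributions so that (a) the marginal inclusion probability scales \emph{exactly} as $\|u\|^2$, (b) $\alpha$ is polynomially bounded from below, and (c) the boundary contribution from pairs of differing norms is absorbed into the $\alpha D\|u-v\|^2$ bound using $|\|u\|^2-\|v\|^2| \leq (\|u\|+\|v\|)\cdot\|u-v\|$ and the $\ell_2^2$ structure. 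Once all three are verified, the orthogonal separator conditions follow by direct calculation.
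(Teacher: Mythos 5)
This statement is not proved in the paper at all: it is imported verbatim from \cite{CMM06}, and the only ``proof'' content the authors supply is a two-line verification that their slightly modified second condition (far pairs defined by $\|u-v\|^2\geq\beta\min(\|u\|^2,\|v\|^2)$) reduces to the hypothesis of Lemma~4.1 and Corollary~4.6 of \cite{CMM06}, so that the original algorithm and analysis apply unchanged. You have instead attempted to reconstruct the CMM06 construction from scratch. Your sketch is in the right family (ARV machinery supplies the $\sqrt{\log|X|}$, Gaussian rounding supplies the $m$-dependence, and an auxiliary random threshold supplies the $\|u\|^2$ marginal), but as written it has a quantitative gap that would not deliver the claimed distortion.

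The gap is in the boosting step. Intersecting $\Theta(\log m)$ independent copies of a base separator and controlling $\Pr(I_S(u)\neq I_S(v))$ by a union bound over the copies inflates the distortion by a factor of $\Theta(\log m)$, not $\sqrt{\log m}$: conditioning on membership in the other copies, each of the $\Theta(\log m)$ copies contributes its full per-copy separation probability, so you end up with $D=O(\sqrt{\log|X|}\cdot\log m)$. The actual construction achieves $O(\sqrt{\log|X|\log m})$ with a \emph{single} Gaussian direction thresholded at height $t=\Theta(\sqrt{\log m})$ applied to an ARV-type embedding $\psi$ of the normalized vectors: the tail mass at height $t$ gives the $1/m$ orthogonality property, while the probability that a pair at $\psi$-distance $w$ straddles the threshold is $O(t\cdot w)$ times the tail mass, which is where the $\sqrt{\log m}$ (rather than $\log m$) comes from. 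Two further points need repair: (a) intersecting independent copies turns the marginal $\alpha\|u\|^2$ into a product $\prod_i\alpha_i\|u\|^{2\ell}$ unless the norm-dependent randomness is shared across copies; and (b) the $\|u\|^2$ marginal is not obtained by ``coupling with $\E[\langle g,u\rangle^2]=\|u\|^2$'' --- thresholding $\langle g,u\rangle$ gives inclusion probability $\bar\Phi(t/\|u\|)$, which is far from linear in $\|u\|^2$; the standard device is an independent uniform $r\in[0,1]$ with the extra membership condition $\|u\|^2\geq r$, applied on top of a norm-independent Gaussian event for the normalized vector $\psi(u)$. With those corrections your outline essentially becomes the CMM06 proof.
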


In the original paper~\cite{CMM06}, the second requirement
in the definition of orthogonal separators was slightly different,
however, exactly the same algorithm and proof works in our case:
If
$\|u-v\|^2\geq \beta \|u\|^2$ and $\|u\|^2 \leq \|v\|^2$, then $\langle u, v\rangle =
(\|u\|^2 + \|v\|^2 - \|u-v\|^2)/2 \leq ((1-\beta)\|u\|^2 +\|v\|^2)/2 \leq (1-\beta/2)\|v\|^2$.
Then, by Lemma~4.1 in~\cite{CMM06}, $\langle \varphi (u), \varphi(v)\rangle \leq (1-\beta/2)$;
hence $\|\varphi(u) - \varphi(v)\|^2 \geq \beta > 0$ and, in Corollary~4.6,
$\|\psi(u) - \psi(v)\| \geq 2\gamma = \sqrt{\beta}/4 > 0$.

\subsection{Algorithm I: \SSE in General Graphs}
\label{sec:alg1}

We now prove part I of Theorem \ref{thm:WSSE}.

\myparagraph{SDP Relaxation}  In our relaxation we introduce a vector $\bar v$ for every vertex $v\in V$. In the intended solution of the SDP corresponding to the optimal solution $S\subset V$, $\bar{v} = 1$ (or, a fixed unit vector $e$), if $v\in S$; and
$\bar{v} = 0$, otherwise. The objective is to minimize the fraction of cut edges
$$\min \;\;\;\frac{1}{w(E)} \;\sum_{(u,v)\in E}   w(u,v)\, \|\bar{u}-\bar{v}\|^2.$$
We could constrain all vectors $\bar v$ to have length at most 1,
i.e. $\|\bar{v}\|^2 \leq 1$,
but it turns out our algorithm never uses this constraint.
We require that the vectors $\{\bar v:\ v\in V\}\cup\{0\}$
satisfy $\ell_2^2$ triangle inequalities i.e., for every
$u,v,w\in V$, %% $$
$\|\bar u- \bar w\|^2 \leq \|\bar u- \bar v\|^2 +\|\bar v - \bar w\|^2$,
$\|\bar u\|^2 \leq \|\bar u - \bar v\|^2 +\|\bar v\|^2$,
$\|\bar u - \bar w\|^2 \leq \|\bar u\|^2 +\|\bar w\|^2$.
Suppose now that we have approximately guessed the measure $H$ of the optimal solution
$H\leq \eta(S)\leq 2H$ (this step is not necessary but it simplifies the exposition; in fact, we could
simply let $H=1$, since the SDP is otherwise homogeneous). This can be done
since the measure of every set $S$ lies in the range from $\eta(u)$ to $n\eta(u)$, where $u$ is
the heaviest element in $S$, hence $H$ can be chosen from the set $\{2^t \eta(u): u\in V, t=0,\cdots,\rounddown{\log_2 n} \}$ of size $O(n \log n)$.  Then we add a constraint
\begin{equation}\label{eq:measureH}
\sum_{v\in V} \|\bar{v}\|^2 \eta(v) \quad \geq \quad H.
\end{equation}
We denote  $\eta(u) = \eta(\{u\})$ and $\mu(u) = \mu(\{u\})$.
Finally, we introduce new spreading constraints: for every $u\in V$,
$$\sum_{v \in V} \mu(v)\cdot \minn{\| \bar{u} - \bar{v}\|^2,\|u\|^2} \quad \geq \quad (1-\rho) \|\bar u\|^2.$$
(Alternatively, we could use a slightly simpler, almost  equivalent  constraint $\sum_{v \in V} \langle \bar{u} , \bar{v} \rangle \mu(v)  \leq \rho \|\bar u\|^2$. We chose to use the former formulation
because an analogous constraint can be written in a linear program, see Section~\ref{sec:SSE_LP}.)
In the intended solution this constraint is satisfied, since if $u\in S$,
then $\bar{u}=1$ and the sum above equals $\mu(V\setminus S) \geq 1-\rho$.
If $u\notin S$, then $\bar{u}=0$ and both sides of the constraint equal $0$.

The SDP relaxation used in our algorithm is presented below in its entirety. Note that the second constraint can be
written as $\tuple{u,v}\leq \|u\|^2$, and the third constraint can be written as $\langle u, v \rangle \geq 0 $.
\medskip
\begin{equation} \label{SDP:SSE}
\framebox{
\begin{minipage}[b]{0.8\linewidth}
\medskip\par
\begin{equation*}
\mathbf{min} \qquad\qquad \frac{1}{w(E)} \;\sum_{(u,v)\in E} w(u,v)\; \|\bar u - \bar v\|^2
\end{equation*}
\vskip -15pt
\begin{eqnarray*}
\mbox{{\bf s.t.} }\,\, \|\bar u - \bar w\|^2 +  \|\bar w - \bar v\|^2 &\geq& \|\bar u - \bar v\|^2, \qquad \,\, \forall u,v,w \in V,\\
  \|\bar u- \bar v\|^2&\geq& \|\bar u\|^2 - \|\bar v\|^2,\quad \forall u,v\in V,\\
  \|\bar u\|^2 + \|\bar v\|^2 &\geq& \|\bar u- \bar v\|^2,\qquad \,\, \forall u,v\in V,\\
  \sum_{v \in V} \mu(v)\cdot \minn{\| \bar{u} - \bar{v}\|^2,\|\bar u\|^2} &\geq& (1-\rho) \|\bar u\|^2, \quad\, \forall \,\, u \in V,\\
  \sum_{v\in V} \|\bar{v}\|^2 \eta(v) \quad &\geq& \quad H.
\end{eqnarray*}
\end{minipage}\medskip\par}\end{equation}

We now describe the approximation algorithm.

\myparagraph{Approximation Algorithm}
We first informally describe the main idea behind the algorithm. The algorithm solves the SDP relaxation and obtains a set of vectors $\{\bar{u}\}_{u\in V}$. Now it samples an orthogonal separator, a random set $S\subset V$, and returns it. Assume for the moment that $\alpha = 1$. Since $\Pr(v\in S)=\|\bar{v}\|^2$, we get $\E[ \eta (S)] \geq H$. The expected size of the cut is at most $D\times SDP$ by the third property of orthogonal separators; and thus the sparsity is at most
$D\times SDP/H\leq 2D\times OPT$. The second property
of orthogonal separators guarantees that if $\bar{u}\in S$, then the vectors
that are far from $\bar{u}$,
a very small fraction will belong to $S$ (since the conditional probability
$\Pr(\bar{v}\in S\given \bar{u}\in S)\leq 1/m$ is very small). And by the spreading constraints,
at most $(1+\varepsilon)\rho$ fraction of vectors (w.r.t. the measure $\mu$)
is close to $\bar{u}$. Hence, the total expected measure of $S$ is at most
$(1+\varepsilon)\rho + 1/m\leq (1+2\varepsilon)\rho$. We now proceed to the formal argument.

We may assume that $\varepsilon$ is sufficiently small i.e.,
 $\varepsilon \in (0,1/4)$. The approximation algorithm guesses approximate value of
the weight $H$: $H \leq \eta(S) \leq 2H$. Set the length of all vectors $\bar{u}$ with $\eta(u)> 2H$ to be 0. Solves
the SDP and obtains a set of vectors $X=\{\bar{v}\}_{v\in V}$. Then, it finds an orthogonal separator $S$ with $m =
\max(\varepsilon^{-1}\rho^{-1})$ and $\beta = \varepsilon$. For convenience, we let $S$ be the set of vertices
corresponding to vectors belonging to the orthogonal separator rather than the vectors themselves. The algorithm
repeats the previous step $\roundup{\alpha^{-1} n^2}$ times (recall $\alpha$ is the probabilistic scale of the
orthogonal separator) and outputs the best $S$ satisfying $0< \mu(S) < (1+10\varepsilon)\rho$. With an exponentially
small probability no $S$ satisfies this constraint, in which case, the algorithm outputs an arbitrary set satisfying
constraints.

\myparagraph{Analysis} We first estimate the probability of the event ``$u\in S$ and $\mu(S) < (1+10\varepsilon)
\rho$'' for a fixed vertex $u\in V$. Let $A_u=\{v: \|\bar{u} - \bar{v} \|^2 \geq \beta \|\bar u\|^2\}$ and $B_u=\{v: \|
\bar{u} - \bar{v} \|^2  < \beta \|\bar u\|^2\}$.
 We show that only a small
fraction of $A_u$ belongs to $S$, and that the set $B_u$ is small.

From the spreading constraint
$\sum_{v \in V} \min(\| \bar{u} - \bar{v}\|^2,\|u\|^2)\mu(v)\geq (1-\rho) \|\bar u\|^2$,
by the Markov inequality, we get that $\mu(B_u) \leq \rho/(1-\beta) \leq (1+2\varepsilon) \rho$. %% $$
For an arbitrary $v\in A_u$
(for which $\bar v \neq 0$) write
$\| \bar u- \bar v\|^2 \geq \beta \|\bar u\|^2 \geq  \beta \min (\|\bar u\|^2,\|\bar v\|^2)$. %% $$
By the second property of orthogonal separators, $\Pr(v\in S \given u \in S) \leq 1/m$, thus the expected measure
$\mu(A_u\cap S)$ is at most $\E \mu(A_u\cap S) \leq \varepsilon \rho$. Now, by the Markov inequality, given that $u\in
S$, the probability of the bad event ``$\mu(S) \geq (1+10\varepsilon)\rho$ (and, thus $\mu(A_u\cap S)\geq 8\varepsilon
\rho$)'' is at most $1/8$. Each vertex $u\in V$ belongs to $S$ with probability $\alpha \|\bar u\|^2$. Hence, $u\in S$,
and $\mu(S) < (1+10\varepsilon) \rho$ with probability at least $3/4 \; \alpha \|\bar u\|^2$.

Finally, we use the third property of orthogonal separators to bound the size of the cut $\delta(S)$
$$\E\delta(S)=
\sum_{(u,v)\in E} |I_S(u) - I_S(v)| w(u,v)
 \leq \alpha D \times
\sum_{(u,v)\in E} \|\bar u - \bar v\|^2  w(u,v) = \alpha D \times SDP \times w(E).$$
Here, as usual, $SDP$ denotes the value of the SDP solution; and
$D=O_{\varepsilon}(\sqrt{\log n \log (1/\delta)})$ is the distortion of $m$-orthogonal separators.

Define function
$$f(S) = \eta(S) - \frac{\delta (S)}{w(E)}\times \frac{H}{4D\times SDP},$$
if $|S| \neq \varnothing$, $\mu(S) < (1+10\varepsilon)\rho$, and $f(S)=0$, otherwise. The expectation
\begin{eqnarray*}
\E f(S) &\geq&
\sum_{u\in V} \frac{3\alpha \|\bar u\|^2 \eta(u)}{4} -
\frac{\alpha H}{4} \geq \frac{\alpha H}{2}.
\end{eqnarray*}

The random variable $f(S)$ is always bounded by $2nH$, thus with probability at least $\alpha/n$,
$f(S) > 0$. Therefore, with probability exponentially close to 1,
after $\alpha^{-1}  n^2$ iterations, the algorithm will find $S$ with $f(S)>0$.
Since $f(S)>0$, we get $\eta(S)>0$, $\mu(S) < (1+10\varepsilon)\rho$, and
$$\frac{\delta (S)}{w(E)}\times \frac{1}{\eta(S)} \leq 4D \times \frac{SDP}{H}.$$
This finishes the proof of part I since $SDP/(2H) \leq \Phi_{\rho,\mu,\eta} (G)$.

\subsection{Algorithm II: \SSE in General Graphs}
\label{sec:alg2}

We now prove part II of Theorem \ref{thm:WSSE}.
This algorithm uses an SDP relaxation similar to part I,
although we need a few additional constraints.
We write a constraint ensuring that ``$\eta(S) \leq 2H$'' (recall $H$ is an approximate
value of $\eta(S)$ in the optimal solution): we add spreading constraints for all $u\in V$,
$$\sum_{v \in V} \minn{\|\bar{u} - \bar{v} \|^2, \|\bar u\|^2} \;\eta(v) \leq 2H \|\bar u\|^2,$$
and we let $m=\maxx{\varepsilon^{-1}\rho^{-1}, H^{-1}\rho^{-1}}$.
We also require
\begin{equation}\label{eq:muleqrho}
\sum_{v \in V} \|v\|^2 \mu(v) \leq \rho.
\end{equation}

Algorithm II gets $H$, the approximate value of the measure $\eta(S)$,
as input, and thus does not need to guess it.

\begin{remark}\label{rem:terminals} To handle terminals in the extended version of the problem (see Section~\ref{sec:FE})
we guess which terminal $u\in T$ belongs to the optimal solution $S$ (if any), and set $\|\bar{u}\|=1$ and
$\|\bar{v}\|=0$ for $v\in T\setminus \{u\}$. Since an orthogonal separator never contains the zero vector, we will
never choose more than one terminal in the set~$S$.
\end{remark}

\textbf{Approximation Algorithm.}
The algorithm consists of many iterations of a slightly modified Algorithm~I. At every step
the algorithm obtains a set $S$ of vertices (returned by Algorithm I) and adds it to the set $T$, which is initially empty. Then, the algorithm removes vectors corresponding to $S$ from the set $X$, the SDP solution, and repeats the same procedure till
$\mu(T)\geq \rho/4$ or $\eta(T)\geq H/4$. In the end, the algorithm returns the set $T$
if $\mu(T)\leq \rho$ and $\eta(T)\leq H$, and the last set $S$ otherwise.

The algorithm changes the SDP solution (by removing some vectors), however we can ignore
these changes, since the objective value of the SDP may only decrease and all constraints
but~(\ref{eq:measureH}) are still satisfied. Since the total weight $\eta(T)$ of removed vertices
is at most $H/4$, a slightly weaker variant of constraint~(\ref{eq:measureH}) is satisfied. Namely,
\begin{equation*}
\sum_{u\in V} \|\bar{u}\|^2 \eta(u) \geq 3H/4.
\eqno(\ref{eq:measureH}')
\end{equation*}

We now describe the changes in Algorithm I: instead of $f$, we define function $f'$:
$$f'(S) = \eta(S) - \frac{\delta (S)}{w(E)}\times \frac{H}{4D\times SDP}
-
\frac{\mu(S)}{4\rho}\times H,$$
if $|S| \neq \varnothing$, $\mu(S) < (1+10\varepsilon)\rho$ and
$\eta (S)\leq (1+10\varepsilon) H$ and $f'(S)=0$, otherwise. Notice, that $f'$ has an
extra term comparing to $f$ and, in order for $f'(S)$ to be positive,
the constraint $\eta (S)\leq 2(1+10\varepsilon) H$ should be satisfied. The new
variant of Algorithm I, returns $S$, once $f'(S)> 0$.

The same argument as before shows that for any given $u\in V$ conditional on $u\in S$,
$\mu (S)\leq (1+10\varepsilon) \rho$ and $\eta (S)\leq 2(1+10\varepsilon) H$ with probability
at least $3/4$. Then, using a new constraint~(\ref{eq:muleqrho}), we get $\E\mu(S) \leq \alpha \rho$.
Hence, the expectation
$$
\E f'(S) \geq \frac{3\alpha \times 3/4\;H}{4} - \frac{\alpha H}{4} - \frac{\alpha H}{4} \geq \frac{\alpha H}{16}.
$$
Again, after at most $O(\alpha^{-1}n^2)$ iterations the algorithm will find $S$ with
$f'(S)> 0$ (and only with exponentially small probability fail)\footnote{In fact, now $f'(S)\leq 2H$, thus we need only $O(\alpha^{-1}n)$ iterations.}.
Then, $f'(S) > 0$ implies
\begin{equation}\label{eq:ineqSEta}
\frac{\delta (S)}{w(E)} \leq 4D \times \frac{SDP}{H} \eta(S);
\end{equation}
and
$\eta(S)\geq H \times \mu(S)/(4\rho)$.

The last inequality implies that at every moment $\eta(T)\geq H \times \mu(T)/(4\rho)$. Hence, if $\mu(T) \geq \rho/4$
(recall, this is one of the two conditions, when the algorithm stops), then $\eta(T) \geq H/16$. Therefore, if the
algorithm returns set $T$, then $\eta(T) \geq H/16$. If the algorithm returns set $S$ then either $\mu(S)\geq 3/4\,
\rho$ and thus $\eta(S) \geq 3H/16$ or $\eta(S)\geq 3/4\,H$.

Both, $\mu(T)$ and $\eta(T)$ are bounded from above by $\rho$ and $H$ respectively;
$\mu(S)$ and $\eta(S)$ are bounded from above by $(1+10\varepsilon)\rho$ and
$2(1+10\varepsilon)H$ respectively.

The inequality~(\ref{eq:ineqSEta}) holds for every set $S$ added in $T$, hence
this inequality holds for $T$.

\subsection{\SSE in Minor-Closed Graph Families}
\label{sec:SSE_LP}

In this subsection we prove Theorem \ref{thm:sse-planar}.
We start by writing an LP relaxation. For every vertex $u\in V$ we introduce a variable $x(u)$ taking values in $[0,1]$; and for
every pair of vertices $u,v\in V$ we introduce a variable $z(u,v)=z(v,u)$ also taking values in $[0,1]$. In the intended
integral solution corresponding to a set $S\subset V$, $x(u) = 1$ if $u\in S$, and $x(u) = 0$ otherwise; $z(u,v) =
|x(u) - x(v)|$.
(One way of
thinking of $x(u)$ is as the distance to some imaginary vertex $O$ that never belongs to $S$. In the SDP relaxation vertex $O$ is the origin.)
It is instructive to think of $x(u)$ as an analog of $\|\bar u\|^2$ and of
$z(u,v)$ as an analog of $\|\bar u- \bar v\|^2$.

It is easy to verify that LP \eqref{LP:SSE} below is a relaxation of the \SSE problem. It has a constraint saying that
$z(u,v)$ is a metric (or, strictly speaking, semi-metric). A novelty of the LP is in the third constraint, which is a
new spreading constraints for ensuring the size of $S$ is small.

\begin{equation} \label{LP:SSE}
\framebox{
\begin{minipage}[b]{0.80\linewidth}
\medskip
\begin{equation*}
\mathbf{min} \qquad\qquad \frac{1}{w(E)} \;\sum_{(u,v)\in E} w(u,v)\; \|\bar u - \bar v\|^2
\end{equation*}
\vskip -15pt
\begin{eqnarray*}
\mathbf{s.t.} \quad z(u,v)+z(v,w) &\geq& z(u,w), \qquad \quad \forall u,v,w \in V,\\
|x(u) - x(v)| & \leq& z(u,v), \qquad \quad \forall u,v \in V,\\
\sum_{v \in V} \mu(v)   \cdot \min{x(u), \,z(u,v)} &\geq & (1-\rho) x(u), \quad \forall u \in V,\\
x(u), z(u,v) &\in&  [0,1], \qquad \qquad \forall u,v \in V.\\
\end{eqnarray*}
\end{minipage}
}
\end{equation}

We introduce an analog of $m$-orthogonal separators for linear programming,
which we call LP separators.

\begin{definition}[LP separator]
\label{defn:l1sep}
Let $G=(V,E)$ be a graph, and let $\{x(u),z(u,v)\}_{u,v\in V}$ be a set of numbers. We say that a distribution over subsets of $V$
is
an LP separator of $V$
with distortion $D\ge1$, probability scale $\alpha>0$ and separation threshold  $\beta\in(0,1)$ if the following
conditions hold for $S\subset V$ chosen according to this distribution:
\begin{OneLiners}
\item For all $u\in V$, $\Pr (u\in S) = \alpha \, x(u)$. %% $$
\item For all $u,v\in V$ with  $z(u,v) \geq \beta \minn{x(u),x(v)}$,
$\Pr (u \in S \text{ and } v \in S) = 0$. %% $$
\item For all $(u,v)\in E$,
$\Pr(I_S(u) \neq  I_S(v))\leq \alpha  D \times z(u,v)$, %% $$
where $I_S$ is an indicator for the set $S$.
\end{OneLiners}
\end{definition}

Below we present an efficient algorithm for an LP separator: given a graph $G=(V,E)$ excluding $K_{r,r}$ as a minor, a
parameter $\beta\in(0,1)$, and a set of numbers $\{x(u),z(u,v)\}_{u,v\in V}$ satisfying the triangle inequalities
described above (but not necessarily the spreading constraints), the algorithm computes an LP separator with distortion
$O(r^2)$ (for genus $g$ graphs the distortion is $O(\log g)$).
This proves Theorem \ref{thm:sse-planar} as follows:
by replacing in the algorithms above
the SDP relaxation \eqref{SDP:SSE} with the LP relaxation \eqref{LP:SSE},
and the orthogonal separators with LP separators,
we obtain $O(r^2)$ approximation algorithm approximation algorithm for SSE
in $K_{r,r}$ excluded-minor graphs.
Combined with the framework in Section \ref{sec:minmax-bal}, we
consequently obtain an
$O(r^2)$-approximation algorithm for \MMKP and \MMMC on such graphs.

\myparagraph{Computing LP Separators} We now describe an algorithm that samples an LP separator (see
Definition~\ref{defn:l1sep}) with respect to a feasible solution to LP \eqref{LP:SSE}. We recall a standard notion of
low-diameter decomposition of a metric space, see e.g. \cite{Bartal96,GKL03,KR11} and references therein.

Let $(V,d)$ be a finite metric space.
Given a partition $P$ of $V$ and a point $v\in V$,
we refer to the elements of $P$ as clusters,
and let $P(v)$ denote the cluster $S\in P$ that contains $v$,
so $v\in S\in \calP$.
A \emph{stochastic decomposition} of this metric
is a probability distribution $\nu$ over partitions $P$ of $V$.

\begin{definition} [Separating Decomposition]
Let $D,\Delta>0$.
A stochastic decomposition $\nu$ of a finite metric space $(V,d)$
is called a \emph{$D$-separating $\Delta$-bounded decomposition}
if it satisfies:
\begin{itemize} \compactify
\item For every partition $P\in\supp(\nu)$ and every cluster $S\in P$,
$$  \diam(S) \eqdef \max_{u,v\in S} d(u,v) \leq \Delta.$$
\item For every $u,v\in V$, the probability that a partition $P$
sampled from $\nu$ separates them is
$$\Pr_{P\sim\nu} [P(u) \neq P(v)] \leq D\cdot\frac{d(u,v)}{\Delta}.$$
\end{itemize}
\end{definition}

\begin{theorem}[\cite{KPR93,Rao99,FT03}]
Let $G=(V,E)$ be a graph excluding $K_{r,r}$ as a minor,
equipped with nonnegative edge-lengths.
Then the graph's shortest-path metric $d_G$ admits, for every $\Delta > 0$,
an $O(r^2)$-separating $\Delta$-bounded decomposition.
Moreover, there is a polynomial-time algorithm that samples
a partition from this distribution.
\end{theorem}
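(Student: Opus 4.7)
The plan is to follow the iterated region-growing scheme of Klein, Plotkin, and Rao. Fix $\Delta>0$ and set $\rho \eqdef \Delta/(8r)$. Draw $r-1$ independent radii $R_1,\ldots,R_{r-1}$ uniformly at random from $[\rho,2\rho]$. Starting from the trivial partition $\{V\}$, perform $r-1$ refinement passes. At pass $i$, for each current cluster $C$, repeatedly pick an arbitrary center $v\in C$ (using shortest-path distances in the subgraph induced on $C$ with the original edge-lengths), carve off the ball $B_G(v,R_i)\cap C$ as a new cluster, and recurse on $C$ minus this ball until $C$ is exhausted. The random partition $P$ is what remains after pass $r-1$.

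For the separation probability, fix an edge $(u,v)$ with $d_G(u,v)=\ell$. Condition on all random choices made before pass $i$, and on the order in which centers are chosen at pass $i$ inside the cluster that currently contains $u$ and $v$. Then $(u,v)$ is cut at pass $i$ only when the first center $c$ whose ball reaches $\{u,v\}$ has $R_i$ landing in an interval of length at most $\ell$ around $d_G(c,u)$; since $R_i$ is uniform in an interval of length $\rho$, the conditional cut probability is at most $\ell/\rho = O(r\ell/\Delta)$. A union bound over the $r-1$ passes yields $\Pr[P(u)\neq P(v)] \leq O(r^2)\cdot d_G(u,v)/\Delta$, matching the target bound.

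The main obstacle is the diameter guarantee: every cluster of $P$ must satisfy $\diam_G(\cdot)\leq \Delta$, and this is where $K_{r,r}$-minor-freeness is used. The strategy is to show by contradiction that if some cluster $C_{r-1}$ of $P$ contains two vertices at $d_G$-distance greater than $\Delta$, then the graph $G$ must contain $K_{r,r}$ as a minor. Taking a shortest path $\pi$ between such a pair, one observes that at each level $i$ the ancestor cluster $C_i\supseteq C_{r-1}$ must be ``thick'' around its chosen center $c_i$: a subpath of $\pi$ of length at least $2\rho$ lies inside $B_G(c_i,R_i)\cap C_{i-1}$, yet $\pi$ reaches far beyond that ball. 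From this one extracts, for each level $i\in\{1,\ldots,r-1\}$, a shortest-path branch from $c_i$ to $\pi$, together with $r$ suitably chosen disjoint subpaths of $\pi$; contracting each branch (together with a small portion of $\pi$ near it) and contracting each of the $r$ subpaths produces $r$ ``center'' branch sets each adjacent to $r$ ``path'' branch sets, exhibiting a $K_{r,r}$-minor. The delicate point is disjointness: the choice $\rho=\Theta(\Delta/r)$ gives the $r$ subpaths enough slack (each of length $\geq 2\rho$) so that the contractions can be carried out without collisions, and the hierarchical nesting $C_0\supseteq C_1\supseteq\cdots\supseteq C_{r-1}$ ensures that branches from different levels stay inside nested regions. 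Finally, the algorithmic claim is immediate: each pass runs a single Dijkstra per current cluster to determine the ball $B_G(v,R_i)\cap C$, so the whole sampling procedure runs in time $\mathrm{poly}(|V|,|E|,r)$; the bounded-genus case of Theorem is handled analogously, replacing the KPR minor construction by the Lee--Sidiropoulos/Rao argument to get distortion $O(\log g)$.
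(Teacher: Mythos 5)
This theorem is imported by the paper as a black box from \cite{KPR93,Rao99,FT03}; the paper contains no proof of it, so the comparison is with the cited literature, and your outline does follow the standard Klein--Plotkin--Rao iterated region-growing scheme (with the Fakcharoenphol--Talwar quantitative improvement). As a proof, however, it has two genuine gaps. The first is in the separation-probability argument: your centers at pass $i$ are chosen adaptively from the remainder of the cluster after each ball is carved, so ``the order in which centers are chosen at pass $i$'' is itself a function of $R_i$. Conditioning on that order destroys the uniformity of $R_i$ on $[\rho,2\rho]$, and the ``interval of length at most $\ell$'' bound does not follow as stated. The standard fix is to use, in each cluster, a single root and annuli of \emph{fixed} width $\rho$ at a uniformly random shift $\theta\in[0,\rho)$, for which the per-round cut probability is cleanly at most $\ell/\rho$.

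The second gap is the substantive one: the diameter bound is the entire content of the theorem, and your treatment of it is a gesture rather than an argument. You correctly identify that a cluster of $d_G$-diameter exceeding $\Delta$ after all rounds should yield a $K_{r,r}$-minor, and that ``the delicate point is disjointness'' of the branch sets --- but that delicate point \emph{is} the theorem. In particular, the original KPR construction only establishes weak diameter $O(r^{2}\rho)$ after $r$ rounds of width-$\rho$ chopping, which yields an $O(r^{3})$-separating decomposition; your choice $\rho=\Delta/(8r)$ with $r-1$ rounds implicitly asserts the stronger weak-diameter bound $O(r\rho)$, which is exactly the Fakcharoenphol--Talwar refinement and requires a careful level-by-level inductive construction of the minor with explicit disjointness invariants. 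Without carrying out that construction (or citing it), the claim that a surviving long shortest path forces a $K_{r,r}$-minor at these parameters is unproved, and with it the $O(r^{2})$ bound.
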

Lee and Sidiropoulos~\cite{LS10} show similarly for graphs
with genus $g\ge 1$ an $O(\log g)$-separating decomposition.
Alternative algorithms for both cases are shown in \cite{KR11}.

\begin{definition} [Probabilistic Partitioning]
\label{def:probpartdist}
Consider a graph $G=(V,E)$ and nonnegative numbers $\{x(u),z(u,v)\}_{u,v\in V}$.
We say that a distribution $\nu$ over partitions of $V$
is called a \emph{probabilistic partitioning} with distortion $D>0$
and separation threshold $\beta>0$ if the following
properties hold:
\begin{itemize} \compactify
\item For every edge $(u,v)\in E$ with $x(u)>0$:
$$\Pr_{P\sim\nu} (P(u) \neq P(v)) \leq D\cdot z(u,v)/x(u).$$
\item For every $u,v\in V$ with $z(u,v)\geq \beta x(u)$,
we have $P(u) \neq P(v)$ for all $P\in\supp(\nu)$.
\end{itemize}
\end{definition}

\begin{theorem}[Separating decomposition implies probabilistic partitioning]
Let $G=(V,E)$ be a graph that excludes $K_{r,r}$ as a minor,
and let $\{x(u),z(u,v)\}_{u,v\in V}$ satisfy
the first two constraints of LP \eqref{LP:SSE}.
Then for every $\beta\in (0,1]$, there is a probabilistic partitioning $\nu$
with distortion $D=O(r^{2}\beta^{-1})$ and separation threshold $\beta$.
\end{theorem}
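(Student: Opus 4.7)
The plan is to apply the previous theorem at many scales and glue the results together via a randomized per-vertex choice of scale. Convert $z$ into a shortest-path metric $d_G$ on $V$ by giving each edge $(u,v)\in E$ length $z(u,v)$; the triangle inequality on $z$ yields $d_G\ge z$ pointwise, with equality on edges. Draw $\sigma\in[0,1)$ uniformly at random and, for every $u$ with $x(u)>0$, set the level $\ell(u) = \lfloor \sigma - \log_2 x(u)\rfloor$. For each integer $\ell\ge 0$, independently sample an $O(r^2)$-separating $\Delta_\ell$-bounded decomposition $P_\ell$ of $(V,d_G)$, where $\Delta_\ell = \beta\cdot 2^{-\ell-2}$. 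The output partition $P$ declares the cluster of $u$ to be $P_{\ell(u)}(u)\cap V_{\ell(u)}$, where $V_\ell = \{v : \ell(v)=\ell\}$; a short case analysis on whether two vertices share a level shows that this is indeed a well-defined partition of $V$ (vertices with $x(u)=0$ are placed in singletons).

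For the separation condition, suppose $z(u,v)\ge \beta x(u)$. If $\ell(u)\ne \ell(v)$, then $u,v$ lie in disjoint level-sets and hence in different clusters. Otherwise, write $\ell := \ell(u)=\ell(v)$; the level rule forces $x(u) > 2^{-\ell-1}$, so $\Delta_\ell = \beta\cdot 2^{-\ell-2} < \beta x(u)/2$, and then $d_G(u,v)\ge z(u,v)\ge \beta x(u) > \Delta_\ell$ forces $P_\ell(u)\ne P_\ell(v)$ by the diameter bound on $P_\ell$. For the distortion condition, consider an edge $(u,v)\in E$ with $x(u)>0$ and bound $\Pr[P(u)\ne P(v)]$ by the sum of $\Pr[\ell(u)\ne \ell(v)]$ and $\sum_\ell \Pr[\ell(u)=\ell(v)=\ell\text{ and }P_\ell(u)\ne P_\ell(v)]$. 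The first term is $O(z(u,v)/x(u))$: when $z(u,v)\le x(u)/2$ the 1-Lipschitz property $|x(u)-x(v)|\le z(u,v)$ implies $|\log_2 x(u)-\log_2 x(v)| = O(z(u,v)/x(u))$, and the uniform shift $\sigma$ converts this gap into the probability of crossing an integer boundary; for $z(u,v)>x(u)/2$ the bound is a trivial $1\le 2z(u,v)/x(u)$. The second term is handled by the previous theorem: on the event $\{\ell(u)=\ell\}$ we have $d_G(u,v)=z(u,v)$ and $2^{-\ell}<2x(u)$, so $\Pr[P_\ell(u)\ne P_\ell(v)]\le O(r^2)\cdot z(u,v)/\Delta_\ell = O(r^2/\beta)\cdot z(u,v)/x(u)$.

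The main conceptual difficulty is the \emph{non-uniform} radius requirement: the cluster of $u$ must have $z$-diameter at most $\beta x(u)$, so the scale depends on $x(u)$; a single uniform scale would either be too coarse to respect the separation threshold at small-$x$ vertices or too fine and blow up the distortion at large-$x$ vertices. The randomized leveling together with the $V_{\ell(u)}$-restriction is exactly what allows one to pick the right scale per vertex while still producing a well-defined random partition. The genus-$g$ case is handled identically by plugging in the $O(\log g)$-separating decomposition of Lee and Sidiropoulos.
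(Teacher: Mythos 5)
Your proof is correct, but it takes a genuinely different route from the paper's. The paper handles the non-uniform scale requirement by \emph{renormalizing the metric}: it sets $y(u,v)=\min\{1,\,z(u,v)/x(u),\,z(u,v)/x(v)\}$, proves (Claim~2.10) that the shortest-path metric $d$ induced by these lengths still dominates $\tfrac23 y$, and then applies a \emph{single} KPR decomposition at the fixed scale $\Delta=\beta/3$; the pair $z(u,v)\ge\beta x(u)$ is shown to have $y(u,v)\ge\beta/2$ and hence to be separated by the diameter bound, while edges pay $D'/\Delta\cdot y(u,v)\le O(r^2\beta^{-1})z(u,v)/x(u)$. You instead keep the raw $z$-metric and make the scale vertex-dependent via a randomly shifted dyadic level $\ell(u)=\lfloor\sigma-\log_2 x(u)\rfloor$, gluing one decomposition per level. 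Both are sound: the paper's normalization concentrates all the difficulty into Claim~2.10 (that $y$ is a metric up to a factor $3/2$), which your approach avoids entirely; in exchange you must verify well-definedness of the glued partition and pay the random-shift term $\Pr[\ell(u)\ne\ell(v)]=O(z(u,v)/x(u))$, which you do correctly via the $1$-Lipschitz bound $|x(u)-x(v)|\le z(u,v)$. Your construction is also somewhat more wasteful in constants (a sum over levels rather than one decomposition) but is arguably more modular.

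One small slip worth fixing: in the distortion bound for the second term you invoke ``$2^{-\ell}<2x(u)$'' to conclude $z(u,v)/\Delta_\ell=O(r^2/\beta)\,z(u,v)/x(u)$, but that inequality points the wrong way --- it gives an \emph{upper} bound on $\Delta_\ell$, whereas you need a \emph{lower} bound $\Delta_\ell\ge\Omega(\beta x(u))$. The inequality you actually need is the other half of the level sandwich: $\ell(u)=\ell$ implies $x(u)\le 2^{\sigma-\ell}<2^{1-\ell}$, hence $\Delta_\ell=\beta 2^{-\ell-2}>\beta x(u)/8$. (The bound $x(u)>2^{-\ell-1}$ that you cite is the one needed for the \emph{separation} argument, where you use it correctly.) With that correction, and noting that the case $x(v)=0$ in the distortion condition is trivial because $z(u,v)\ge x(u)$ there, the argument is complete.
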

\begin{proof}
We define new lengths $y(u,v) = \minn {1, z(u,v)/x(u), z(u,v)/x(v)}$ for all $u,v\in V$;
by convention, if $x(u)=0$ or $x(v)=0$ then define $y(u,v)=0$.
(We remark that a similar approach was used in~\cite{CKR00}).
These lengths
may violate the triangle inequality. Let $d:V\times V \to \bbR$ be the shortest path metric in
graph $G$ with edge lengths equal $y(u,v)$ (note: $y(u,v)$ is defined for all pairs $u,v\in V$; however,
to obtain $d$ we look only at $(u,v)\in E$). Clearly, for every edge $(u,v)\in E$, $d(u,v)\leq y(u,v)$. On the other hand, as we show below, for every $u,v\in V$, $d(u,v)\geq \tfrac23\, y(u,v)$.

\begin{claim}\label{cl:shpathmet}
For all $u,v\in V$ we have $d(u,v)\geq \tfrac23\, y(u,v)$.
\end{claim}
\begin{proof}
Pick two vertices $u,v\in V$ and consider an arbitrary path $u=w_1,w_2,\dots, w_N=v$. We prove that the length of the
path (in which the length of each edge $(w_i, w_{i+1})$ is $y(w_{i},w_{i+1})$) is at least $\tfrac23\, y(u,v)$. If the length
of the path is greater than $2/3$ we are done. Thus, we may assume that the lengths of all edges are at most $2/3<1$.
We also assume that $x(u)\geq x(v)$ and thus $y(u,v)\le z(u,v)/x(u)$. We have,
\begin{eqnarray*}
\sum_{i=1}^{N-1} y(w_i,w_{i+1}) &=&
\sum_{i=1}^{N-1} \frac{z(w_i,w_{i+1})}{\max(x(w_i),x(w_{i+1}))}\geq
\frac{1}{\max_{i}(x(w_i))} \sum_{i=1}^{N-1} z(w_i,w_{i+1})
\\ &\geq& \frac{z(u,v)}{\max_{i}(x(w_i))} \ge \frac{x(u)\;y(u,v)}{\max_{i}(x(w_i))}.
\end{eqnarray*}
The second inequality holds since $z(\cdot,\cdot)$ is a metric. If $x(u)/\max_{i}(x(w_i))\geq 2/3$,
we are done. Assume that for $j=\argmax x(w_j)$, $x(u)/x(w_j) < 2/3$. Then,
\begin{eqnarray*}
\sum_{i=1}^{N-1} y(w_i,w_{i+1})&\geq& \sum_{i=1}^{N-1} \frac{z(w_i,w_{i+1})}{x(w_j)} \geq
\sum_{i=1}^{N-1} \frac{|x(w_i) - x(w_{i+1})|}{x(w_j)} \\
&\geq& \frac{|x(w_{j}) - x(u)| + |x(w_{j}) - x(v)|}{x(w_j)} \geq 2\Bigl(1-\frac{x(u)}{x(w_j)}\Bigr)\geq \frac{2}{3}.
\end{eqnarray*}
This finishes the proof of Claim~\ref{cl:shpathmet}.
\end{proof}
We now apply the theorem of Klein, Plotkin, and Rao~\cite{KPR93} to the metric $d(u,v)$ and obtain a probabilistic
partition $\calP$ with $\Delta = \beta/3$ and $D'=O(r^{2})$. This partition satisfies the following properties.
\begin{itemize}
\item If $z(u,v) \geq \beta x(u)$ for $u,v\in V$, then either $x(u)\geq x(v)$ and hence
$y(u,v)\geq \beta$, or $x(v)\geq x(u)$, then (using $z(u,v) \geq x(v) - x(u)$)
$$z(u,v) \geq \frac{\beta}{2} (x(v) - x(u)) + \bigl(1-\frac{\beta}{2}\bigr) z(u,v)
\geq \frac{\beta}{2} (x(v) - x(u)) + \bigl(1-\frac{\beta}{2}\bigr) \beta x(u) \geq  \frac{\beta}{2} x(v).$$
Thus, $y(u,v)\geq \beta/2$ in either case and $d(u,v) \geq 2/3\,y(u,v) \geq \beta/3\equiv \Delta$
(by Claim~\ref{cl:shpathmet}).
\item For every $(u,v)\in E$,
$$\Pr (P_u\neq P_v) \leq D'\, \frac{d(u,v)}{\Delta} \leq \frac{D'}{\Delta}\,\frac{z(u,v)}{x(u)}.$$
\end{itemize}
The distortion $D$ equals $D'/\Delta = O(r^2/\beta)$.
\end{proof}

Given a solution for LP \eqref{LP:SSE} (the relaxation for SSE problem), we could proceed as follows: Construct a
probabilistic partition $\calP$ with distortion $D=O(r^2\beta^{-1})$ and some constant separation threshold $\beta\in
(0,1)$, then pick a random vertex $w\in V$ with probability $x(w)\eta(w)\left/\sum_{u}x(u)\eta(u)\right.$ and, finally,
output the cluster $P_w$. However, to highlight the similarity between this LP-algorithm and the previous SDP-algorithm
(for general graphs),  we give an algorithm for constructing LP separators, which in turn is used by the \SSE
algorithm.

\begin{theorem} \label{thm:LPsep2}
There exists an algorithm that given a graph $G=(V,E)$ with an excluded minor $K_{r,r}$, a set of numbers
$\{x(u),z(u,v)\}_{u,v\in V}$ satisfying the triangle inequality constraints, and a parameter $\beta\in [0,1]$, returns
an LP separator $S\subset V$ with distortion $D=r^2\beta^{-1}$ and separation threshold $\beta$.
\end{theorem}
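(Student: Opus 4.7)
The plan is to reduce the construction of an LP separator to the probabilistic partitioning guaranteed by the preceding theorem. Given the LP values $\{x(u),z(u,v)\}_{u,v\in V}$ and the parameter $\beta$, I would first sample a random partition $P$ of $V$ from the probabilistic-partitioning distribution, which on $K_{r,r}$-minor-free graphs has distortion $D'=O(r^2/\beta)$ and separation threshold $\beta$. Independently I would sample a uniform threshold $r\in[0,1]$, pick one cluster $C^*\in P$ according to an appropriate (possibly weighted) rule, and output $S=\{u\in C^*:x(u)\geq r\}$, so that $S$ is always contained in a single cluster of $P$.

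I would then verify the three axioms of Definition~\ref{defn:l1sep} in turn. Property~2 (separation) comes essentially for free: if $z(u,v)\geq \beta\minn{x(u),x(v)}$, then the probabilistic partitioning separates $u$ and $v$ into distinct clusters deterministically, so $u$ and $v$ cannot both lie in the single cluster $C^*$. Property~1 (scale) follows from the identity $\Pr(x(u)\geq r)=x(u)$, combined with a cluster-selection rule that makes $\Pr(C^*=P(u))$ a constant $\alpha>0$ independent of $u$; for instance, choosing $C^*$ uniformly at random among the clusters of $P$ yields $\alpha=\E_P[1/|P|]\geq 1/n$, which is at least inverse polynomial as required.

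For property~3 (distortion) I would do a case analysis on whether $P(u)=P(v)$. In the same-cluster case, the events $\{u\in S\}$ and $\{v\in S\}$ differ only when $r$ falls between $x(u)$ and $x(v)$ and the cluster $P(u)=P(v)$ happens to be selected, contributing at most $\alpha|x(u)-x(v)|\leq \alpha z(u,v)$ by the LP constraint $|x(u)-x(v)|\leq z(u,v)$. In the different-cluster case, I would combine the probabilistic-partitioning bound $\Pr(P(u)\neq P(v))\leq D'\,z(u,v)/x(u)$ (and its symmetric version using $x(v)$) with the cluster-selection probability, and then use the LP triangle-type inequalities to collapse everything into the target bound $\alpha D\,z(u,v)$ with $D=O(r^2/\beta)$.

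The main obstacle I anticipate is precisely this different-cluster sub-case. A naive uniform cluster-selection rule only yields a contribution of order $(x(u)+x(v))\,\Pr(P(u)\neq P(v))$, and bounding this by $\alpha D\,z(u,v)$ is awkward whenever the ratio $\max\{x(u),x(v)\}/\min\{x(u),x(v)\}$ is large. The delicate step will be to choose the cluster-selection rule carefully, or equivalently to trade off the probabilistic-partitioning bound against the LP constraint $z(u,v)\geq|x(u)-x(v)|$, so that this ratio is absorbed into the final $D=r^2\beta^{-1}$ factor and the claimed distortion is matched.
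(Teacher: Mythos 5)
Your skeleton matches the paper's: sample a probabilistic partition, select a single cluster, and prune it with a random threshold on the $x$-values; your treatment of the separation property and of the same-cluster contribution to the distortion is correct. But the step you yourself flag as "the delicate step" is exactly where the proof lives, and your fallback of uniform cluster selection with an absolute threshold $r\in[0,1]$ does not close it. In the different-cluster case one gets $\Pr(u\in S,\,v\notin S,\,P(u)\neq P(v)) = x(u)\cdot\E_P\bigl[\mathbf{1}_{\{P(u)\neq P(v)\}}/|P|\bigr]$, and there is no way in general to compare $\E_P\bigl[\mathbf{1}_{\{P(u)\neq P(v)\}}/|P|\bigr]$ against $\alpha\cdot D\cdot z(u,v)/x(u)$ with $\alpha=\E_P[1/|P|]$: the separation event and $1/|P|$ can be correlated, and when $z(u,v)/x(u)$ is tiny the required inequality forces $D$ to blow up. So the gap is real, not merely a technicality to be traded off later.

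The missing device in the paper is a \emph{partition-dependent} selection rule and threshold. For each cluster $C$ set $x_\infty(C)=\max_{u\in C}x(u)$; select cluster $C$ with probability $x_\infty(C)/n$ (and output $\varnothing$ with the leftover probability), then keep $S'=\{u\in S: x(u)\ge t\,x_\infty(S)\}$ for uniform $t\in[0,1]$. This makes $\Pr(u\in S')$ equal to $x(u)/n$ \emph{exactly and independently of the partition}, so $\alpha=1/n$; in the different-cluster case the selection probability contributes the factor $x(u)/n$ that precisely cancels the $1/x(u)$ in the partitioning bound $\Pr(P(u)\neq P(v))\le D'z(u,v)/x(u)$, yielding $D'z(u,v)/n=\alpha D'z(u,v)$ with no dependence on the ratio $\max\{x(u),x(v)\}/\min\{x(u),x(v)\}$; and in the same-cluster case the relative threshold still contributes only $\frac{x(u)}{n}\cdot\frac{(x(u)-x(v))_+}{x(u)}\le z(u,v)/n$. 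Without specifying this rule (or an equivalent one), the proposal does not establish the claimed distortion $D=O(r^2\beta^{-1})$.
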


{\textbf{Algorithm}}.
The algorithm samples a random partition $P$ with distortion $D=O(r^2\beta^{-1})$ and a
separation threshold $\beta$. For every $C\in P$, let
$$x_{\infty}(C) = \max_{u\in C}x(u).$$
The algorithm picks a random set $S\in P$ with probability $\Pr (S = C)=
x_{\infty}(C)/n$; and with the remaining probability
$$1-\frac{1}{n}\sum_{C\in P}x_{\infty}(C)\geq 1-\frac{1}{n}\sum_{u \in V}x(u)\geq 0,$$
the algorithm sets $S=\varnothing$.

Now, to guarantee that every vertex $u$ is chosen with probability
exactly $\alpha x(u)$, where $\alpha = 1/n$, the algorithm removes some
elements from $S$: it picks at random $t\in [0,1]$ and outputs set
$$S'= \{u\in S: x(u) \geq t x_{\infty}(S)\}.$$

{\textbf{Analysis}}.  Verify that $S'$ satisfies the properties of LP separators (with
$\alpha = 1/n$). For every $u\in V$,
$$\Pr(u\in S') =
\E_{P}\Big[\Pr(S = P(u)\given P)\cdot \Pr (x(u)\geq t x_{\infty}(P(u)))\given P\Big]
=\E_{P}\left[\frac{x_{\infty}(P(u))}{n}\cdot \frac{x(u)}{x_{\infty}(P(u))}\right]= \frac{x(u)}{n}.$$
Then, if $z(u,v)\geq \min (x(u),x(v))$, then $P(u)\neq P(v)$ and hence
$$\Pr(u,v\in S) \eqdef \Pr(P(u) = P(v) = S) = 0.$$
Finally,
$$
\Pr (u\in S', v\notin S') \leq \Pr (v\notin S\given u\in S')\Pr(u\in S') +
\Pr (x(v) \leq t x_{\infty}(S)\given u\in S')\Pr(u\in S').$$
We estimate the first term (using that $\nu$ has distortion $D=O(r^2\beta^{-1})$; see Definition~\ref{def:probpartdist})
$$
\Pr (u\in S, v\notin S) = \frac{x(u)}{n} \Pr(v\notin S \given u \in S) \leq \frac{x(u)}{n} \times D\frac{z(u,v)}{x(u)}
= \frac{D}{n} z(u,v),
$$
and then the second term
\begin{eqnarray*}
\Pr (x(v) \leq t x_{\infty}(S)\given u\in S')\Pr(u\in S') &=&
\frac{x(u)}{n} \Pr (x(v) \leq t x_{\infty}(P_u)\given x(u) \geq t x_{\infty}(P_u), S=P_u)\\
&=&\frac{x(u)}{n} \Pr (t \geq  x(v)/x_{\infty}(P_u)\given t \leq  x(u)/x_{\infty}(P_u))\\
&=&\frac{x(u)}{n}\times\frac{\min(0, x(u)-x(v))}{x(u)}\leq \frac{z(u,v)}{n}.
\end{eqnarray*}

This completes the proof of Theorem \ref{thm:LPsep2}.

\subsection{\texorpdfstring{From SSE to \UC}{From SSE to p-Unbalanced Cut}} \label{sec:SSE2UC}

\UC and SSE are equivalent, up to some constants, with respect to
bicriteria approximation guarantees.
Indeed, the two problems are related in the same way that
\BalancedCut and \SparsestCut are.
% SHORT VERSION
We refer the reader to \cite{LR99,RST10}, and omit details from this version of the paper.

Our intended application of approximating \MMKP (in Section \ref{sec:minmax-bal}),
requires a weighted version of the \UC problem, as follows.

\begin{definition}[\MBWC] \label{def:SmallHeavyCut}
The input to this problem is a tuple $\tuple{G,y,w,\tau,\rho}$,
where $G=(V,E)$ is a graph with
vertex-weights $y:V\rightarrow \mathbb{R}_+$, edge-costs $w:E\rightarrow \bbR_{\ge 0}$, and parameters $\tau,
\rho\in(0,1]$. The goal is to find $S\subseteq V$ of minimum cost $\delta(S)$ satisfying:
\begin{enumerate} \compactify
\item $y(S)\geq \tau\cdot y(V)$; and
\item $|S|\leq \rho\cdot n$.
\end{enumerate}
\end{definition}
The unweighted version of the problem (defined in Section \ref{sec:MoreResults}) has $\tau=\rho$ and unit
vertex-weights, i.e. $y(v)=1$ for all $v\in V$. We focus on the direction of reducing \MBWC to \WSSE, which is needed
for our intended application. Formally, we have the following corollary of Theorem \ref{thm:WSSE}. We use
$\opt_{\tuple{G,y,w,\tau,\rho}}$ to denote the optimal value of the corresponding weighted \UC instance.

\begin{corollary}[Approximating $\rho$-Unbalanced Cut]
\label{cor:sdp-subroutine}
For every $\varepsilon>0$, there exists a polynomial-time algorithm that given an instance $\tuple{G,y,w,\tau,\rho}$ of \MBWC, finds a set $S$ satisfying $|S|\leq \beta\rho n$, $y(S) \geq \tau/\gamma$ and
$\delta(S) \leq \alpha \cdot \opt_{\tuple{G,y,w,\tau,\rho}}$
for $\alpha = O_{\varepsilon} (\sqrt{\log n\log (\max(1/\rho, 1/\tau))})$,
$\beta=1+\varepsilon$ and
$\gamma=O(1)$.
\end{corollary}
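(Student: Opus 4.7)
The plan is to reduce \MBWC to \WSSE by picking appropriate measures and invoking part II of Theorem~\ref{thm:WSSE}. Given an \MBWC instance $\tuple{G,y,w,\tau,\rho}$, I would set $\mu(v) = 1/n$ and $\eta(v) = y(v)/y(V)$, so that $\mu(S) = |S|/n$ and $\eta(S) = y(S)/y(V)$. Under these measures, the two \MBWC feasibility constraints $|S|\leq \rho n$ and $y(S) \geq \tau\, y(V)$ become exactly $\mu(S)\leq \rho$ and $\eta(S)\geq \tau$, while the objective $\delta(S)$ is the same in both formulations.

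The only real difficulty is that part II of Theorem~\ref{thm:WSSE} only competes with the best set whose $\eta$-measure lies in a narrow dyadic band $[H,2H]$, whereas an optimal \MBWC solution $S^*$ a priori only satisfies $\eta(S^*) \in [\tau,1]$. To bridge this, I would enumerate $O(\log(1/\tau))$ dyadic candidates $H \in \{\tau \cdot 2^i\}_{i\ge 0}$; at least one will satisfy $H \leq \eta(S^*) \leq 2H$. For each candidate I run the algorithm of Theorem~\ref{thm:WSSE}(II) with parameters $\rho, H, \mu, \eta$, and finally output the cheapest returned set that respects the allowed slacks $|S| \leq (1+\varepsilon)\rho n$ and $y(S) = \Omega(\tau\, y(V))$.

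For the correct value of $H$, Theorem~\ref{thm:WSSE}(II) produces $S$ with $\mu(S) \leq (1+\varepsilon)\rho$ (i.e.\ $|S|\leq \beta\rho n$ for $\beta = 1+\varepsilon$) and $\eta(S) \in [\Omega(H), 2(1+\varepsilon)H]$ (i.e.\ $y(S) \geq \tau\, y(V)/\gamma$ for some constant $\gamma = O(1)$). Since $S^*$ itself sits in the comparison band ($\eta(S^*) \in [H,2H]$ and $\mu(S^*)\leq \rho$), the theorem's approximation guarantee reads
\[
  \frac{\delta(S)}{w(E)\,\eta(S)} \;\leq\; D \cdot \frac{\delta(S^*)}{w(E)\,\eta(S^*)},
\]
with $D = O_{\varepsilon}(\sqrt{\log n\, \log(\max(1/\rho, 1/H))})$. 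Rearranging and using $\eta(S) \leq 2(1+\varepsilon)\,\eta(S^*)$ yields $\delta(S) \leq O(D)\cdot \delta(S^*)$; since $H \geq \tau$ implies $1/H \leq 1/\tau$, this matches the stated $\alpha = O_{\varepsilon}(\sqrt{\log n\, \log(\max(1/\rho, 1/\tau))})$. The dyadic guessing is the only place where an extra factor is paid, and it is absorbed into the running time; the size and weight violations $\beta$ and $\gamma$ arise verbatim from the two inclusion guarantees in the theorem, so no additional work is required to extract them.
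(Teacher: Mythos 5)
Your proposal is correct and follows essentially the same route as the paper: define $\mu(S)=|S|/n$ and $\eta(S)=y(S)/y(V)$, guess a dyadic $H\ge\tau$ with $H\le\eta(S^*)\le 2H$, invoke Theorem~\ref{thm:WSSE}(II), and rearrange the sparsity guarantee using $\eta(S)\le 2(1+\varepsilon)H$ and $\eta(S^*)\ge H$ to extract $\delta(S)\le O(D)\,\delta(S^*)$. The only cosmetic difference is how the guessing of $H$ is organized (you enumerate $\{\tau\cdot 2^i\}$ directly, the paper points back to the guessing argument in Algorithm~I), which changes nothing of substance.
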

\begin{proof}
Let $S^*$ be an optimal solution to $\tuple{G,y,w,\tau,\rho}$, note that $|S^*|\leq \rho n$,
$y(S^*)\geq \tau\cdot y(V)$ and $\delta(S^*) = \opt_{\tuple{G,y,w,\tau,\rho}}$ the optimal value of this instance. Define two measures on $V$
as follows. For any $S\sse V$, set $\mu(S) : = |S|/n$ and $\eta(S) \eqdef y(S)/y(V)$.

The algorithm guesses $H\geq \tau$ such that $H \leq \eta(S^*) \leq 2H$ (see Algorithm I above for an argument why we
can guess $H$). Then it invokes the algorithm from part II on $G$ with measures $\mu$ and $\eta$ as defined above, and
parameters $\rho,H$. The obtained solution $S$ satisfies $|S|=\mu(S)\cdot n\le (1+\varepsilon)\rho\,n$ and
$y(S)=\eta(S)\cdot y(V)\ge \Omega_{\varepsilon}(1)\, H \cdot y(V)\ge \Omega_{\varepsilon}(1)\, \tau \cdot y(V)$, since
$H\ge \tau$. Furthermore, $ \delta(S) \leq \alpha \cdot \delta(S^*)\cdot \eta(S)/\eta(S^*) \leq   \alpha \cdot
\delta(S^*)\cdot \Theta_\varepsilon(1)$, where $\alpha=O_{\varepsilon} (\sqrt{\log n\log (\max(1/\rho, 1/\tau))})$.
\end{proof}

\section{Min-max Balanced Partitioning}\label{sec:minmax-bal}

In this section, we present our algorithm for \MMKP, assuming a subroutine that approximates \MBWC (which is
essentially a rephrasing of \WSSE). Our algorithm for \MMKP follows by a straightforward composition of
Theorem~\ref{thm:covering} and Theorem~\ref{thm:aggreg} below. Plugging in for $(\alpha,\beta,\gamma)$ the values
obtained in Section \ref{sec:wsse} would complete the proof of Theorem~\ref{thm:bal}.

\subsection{Uniform Coverings}
\label{sec:mult}
We first consider a covering relaxation of \MMKP and solve it using multiplicative updates.
This covering relaxation can alternatively be viewed as a fractional solution
to a configuration LP of exponential size, as discussed further below.

Let $\cs = \left\{ S\sse V : \, |S|\le n/k \right\}$ denote all the vertex-sets that are feasible for a single
part. Note that a feasible solution in \MMKP corresponds to a partition of $V$ into $k$ parts, where each part belongs to
\cs. Algorithm \ref{alg:covMMC}, described below, {\em uniformly covers} $V$ using sets in \cs (actually a slightly larger family than \cs).
It is important to note that its output $\calS$ is a multiset.

\LinesNotNumbered\DontPrintSemicolon
\begin{algorithm}[!ht]\caption{Covering Procedure for \MMKP:} \label{alg:covMMC}
Set $t = 1$, and $y^1(v) = 1$ for all $v\in V$\;
 \While{$\sum_{v\in V} y^{t}(v) > 1/n$}{
   \tcp{Solve the following using algorithm from Corollary~\ref{cor:sdp-subroutine}.}
Let $S^t\sse V$ be the solution for \MBWC instance $\tuple{G,y^t,w,\frac{1}k,\frac{1}k}$.\; Set $\calS = \calS\cup
\{S^t\}$.\; \tcp{Update the weights of the covered vertices.} \For{\bf{every} $v\in V$}{Set $y^{t+1}(v) = \frac12\cdot
y^{t}(v)$ if $v\in S^t$, and $y^{t+1}(v) = y^{t}(v)$  otherwise.\;
 }
Set $t = t+1$.
 }
\Return $\calS$\;
\end{algorithm}

\begin{theorem}\label{thm:covering}
Running Algorithm \ref{alg:covMMC} on an instance of \MMKP
outputs $\calS$ that satisfies
(here $\opt$ denotes the optimal value of the instance):
\begin{enumerate} \compactify
\item For all $S\in \calS$ we have\;
$\delta(S)\le \alpha\cdot \opt$ and $|S|\le \beta\cdot n/k$.
\item For all $v\in V$ we have\;
$|\{S\in \calS : S\ni v\}|/|\calS|\ge 1/(5\gamma k)$.
\end{enumerate}
\end{theorem}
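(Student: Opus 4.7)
The two parts will be proved separately. Part 1 is essentially an invocation of the subroutine guarantee from Corollary~\ref{cor:sdp-subroutine}, once we check that in every iteration $t$, the \MBWC instance $\tuple{G,y^t,w,1/k,1/k}$ has optimal value at most $\opt$. To see this, let $(P_1^*,\dots,P_k^*)$ be an optimal min-max partition, so $|P_i^*|\le n/k$ and $\delta(P_i^*)\le \opt$ for each $i$. By averaging, some $P_i^*$ satisfies $y^t(P_i^*)\ge y^t(V)/k$, hence it is feasible for the \MBWC instance (with both $\tau=\rho=1/k$) and its cost is at most $\opt$. Plugging this into Corollary~\ref{cor:sdp-subroutine} yields $\delta(S^t)\le \alpha\cdot \opt$, $|S^t|\le \beta\cdot n/k$, and the auxiliary fact $y^t(S^t)\ge y^t(V)/(\gamma k)$ that drives part~2.

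Part 2 is the multiplicative-weights analysis. Let $T=|\calS|$ and, for each $v\in V$, let $c_v$ be the number of iterations with $v\in S^t$, so that $y^{T+1}(v)=2^{-c_v}$. The plan is to bound $T$ from above and then compare $y^{T+1}(v)$ with the termination condition $y^{T+1}(V)\le 1/n$. Using $y^t(S^t)\ge y^t(V)/(\gamma k)$ together with the halving rule gives the recursion
\[
y^{t+1}(V)\;=\;y^t(V)-\tfrac12 y^t(S^t)\;\le\; y^t(V)\Bigl(1-\tfrac{1}{2\gamma k}\Bigr),
\]
so $y^{T+1}(V)\le n\,(1-\tfrac{1}{2\gamma k})^T$. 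Since the while-condition was still true at step $T$, namely $y^T(V)>1/n$, the usual inequality $\ln(1-x)\le -x$ gives the upper bound $T\le 4\gamma k\ln n +1$, which is strictly smaller than $5\gamma k\log_2 n$.

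Now suppose for contradiction that some $v$ is covered by fewer than a $1/(5\gamma k)$ fraction of the sets, i.e.\ $c_v<T/(5\gamma k)$. Then
\[
y^{T+1}(v)\;=\;2^{-c_v}\;>\;2^{-T/(5\gamma k)}.
\]
On the other hand $y^{T+1}(v)\le y^{T+1}(V)\le 1/n$, which forces $T>5\gamma k\log_2 n$, contradicting the upper bound from the previous paragraph. Hence $c_v/T\ge 1/(5\gamma k)$ for every vertex, completing the proof of part~2.

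I expect the main (minor) obstacle to be pinning down the constant ``$5$'' in the coverage bound: the upper bound on $T$ must be calibrated so that the contradiction $T>5\gamma k\log_2 n$ is clean for all admissible $\gamma,k,n$. Everything else is a routine bookkeeping of the halving potential together with a single averaging argument to connect the per-iteration \MBWC guarantee to the original min-max optimum.
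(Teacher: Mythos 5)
Your proposal is correct and follows essentially the same route as the paper: the same averaging argument over the optimal partition establishes feasibility of the \MBWC subproblem in each iteration, and the same potential bound $Y^{t+1}\le(1-\tfrac{1}{2\gamma k})Y^t$ combined with $N_v\ge\log_2 n$ from the halving rule yields part 2. The only cosmetic difference is that you phrase the final coverage bound as a contradiction rather than dividing the two bounds directly.
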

\begin{proof}
For an iteration $t$, let us denote $Y^{t} \eqdef \sum_{v\in V} y^{t}(v)$. The first assertion of the theorem is
immediate from the following claim.

\begin{claim}
Every iteration $t$ of Algorithm \ref{alg:covMMC} satisfies
$\delta(S^t)\le \alpha\cdot \opt$ and $|S^t|\le \beta\cdot n/k$.
\end{claim}
\begin{proof}
It suffices to show that the optimal value of the \MBWC instance $\langle
G, \, y^t,\, w,\, \frac{1}k,\, \frac{1}k\rangle$ is at most \opt. To see this, consider the optimal solution
$\{S^*_i\}_{i=1}^k$ of the original \MMKP instance. We have $|S^*_i|\le n/k$ and $w(\delta(S^*_i))\le \opt$ for
all $i\in [k]$. Since $\{S^*_i\}_{i=1}^k$ partitions $V$, there is some $j\in[k]$ with $y^t(S^*_j)\ge Y^t/k$. It now
follows that $S^*_j$ is a feasible solution to the \MBWC instance $\langle G, \, y^t,\, w,\, \frac{1}k,\, \frac{1}k
\rangle$, with objective value at most \opt,
which proves the claim.
\end{proof}

We proceed to prove the second assertion of Theorem~\ref{thm:covering}. Let $\ell$ denote the number of iterations of
the while loop, for the given \MMKP instance. For any $v\in V$, let $N_v$ denote the number of iterations $t$ with
$S^t\ni v$. Then, by the $y$-updates we have $y^{\ell+1}(v)=1/2^{N_v}$. Moreover, the termination condition implies
that $y^{\ell+1}(v)\le 1/n$ (since $Y^{\ell+1}\le 1/n$). Thus we obtain $N_v\ge \log_2 n$ for all $v\in V$. From the
approximation guarantee of the \MBWC algorithm, it follows that $y^t(S^t)\ge \frac1{\gamma\, k}\cdot Y^t$ in every
iteration $t$. Thus $Y^{t+1}=Y^t-\frac12 \cdot y^t(S^t)\le \left( 1-\frac1{2\gamma\, k} \right)\cdot Y^t$. This implies
that $Y^{\ell}\le \left( 1-\frac1{2\gamma\, k} \right)^{\ell-1} \cdot Y^1= \left( 1-\frac1{2\gamma\, k}
\right)^{\ell-1} \cdot n$. However $Y^\ell>1/n$ since the algorithm performs $\ell$ iterations. Thus, $\ell\le
1+4\gamma\,k\cdot \ln n\le 5\gamma\, k\cdot \log_2n$. This proves $|\{S\in \calS : S\ni v\}|/|\calS| = N_v/\ell\ge
(5\gamma)^{-1}k^{-1}$.
\end{proof}

\iffalse
At this point we can use the uncrossing idea from~\cite{ST04} to convert the cover \calS into a partition while not increasing
the size or boundary of any cut. Then using an aggregation step, we obtain  an $\left(  O(\log^{3/2}n\cdot \sqrt{\log
k}),\, 2+\varepsilon-\frac1k \right)$-approximation algorithm for \MMKP. However, some more work is required to obtain the
improved guarantee claimed in Theorem~\ref{thm:bal}, as described in the next subsection. In fact, to obtain the above
weaker guarantee it suffices to use Corollary~\ref{cor:sdp-subroutine} within a greedy set-covering framework (instead
of the multiplicative-updates algorithm).

\paragraph{\textbf{Comment}}
The above multi-cover approach can also be viewed as solving and rounding
a certain configuration LP in which the vertices are covered
using sets $S \subset V$ of size $n/k$ and cut size at most $\alpha\opt$.
While this view may give a cleaner and more elegant high-level view of our ideas,
the LP has exponential size and thus the details of solving it
are quite tedious compared with our direct approach based on
multiplicative weights.
We briefly describe this configuration LP in Appendix~\ref{sec:Configuration}.
\fi

\myparagraph{Alternative view: A configuration LP}

We now describe an alternate approach to finding a cover $\calS$. Given a bound $\lambda$ on the cost of any single
cut, define the set of feasible cuts as follows:
$$ {\cal{F}}_{\lambda} = \left\{ S\subseteq V ~: ~|S|\leq \frac{n}k, \,\, \delta (S)\leq \lambda \right\} .$$
We define a configuration LP for \MMKP as follows. There is a variable $x_S$ for each $S\in\F_{\lambda}$ indicating
whether/not cut $S$ is chosen.
\begin{equation} \label{LP:config}
\framebox{
$
\begin{array}{llll}
  {\cal P}(\lambda) =
  & \min
  & \displaystyle \sum _{S\in {\cal{F}}_{\lambda}} x_S
  \\
  & \mathrm{s.t.}
  & \displaystyle \sum _{S\in {\cal{F}}_{\lambda} : v\in S} x_S \geq 1
  & \forall v\in V
  \\
  & & x_S \geq 0
  & \forall S\in {\cal{F}}_{\lambda}
\end{array}
$}
\end{equation}
The goal is determine the smallest $\lambda>0$ such that ${\cal{P}}(\lambda) \leq k$. One can approximately solve this
using the dual formulation:
\begin{align} \label{LP:configDual}
\framebox{
$
\begin{array}{llll}
  {\cal{D}}(\lambda) =
  & \max
  & \displaystyle \sum_{v\in V} y_v
  \\
  & \mathrm{s.t.}
  & \displaystyle \sum _{v\in S} y_v \leq 1
  & \forall S\in {\cal{F}}_{\lambda}
  \\
  & & y_v\geq 0
  & \forall v\in V
\end{array}
$}
\end{align}
The dual separation oracle can be solved using \WSSE; so we can apply the Ellipsoid algorithm. Since we only have a
multi-criteria approximation for \WSSE (see Section~\ref{sec:wsse}),
the details for approximating the configuration LP are
rather technical.

%%%%%%%%%%%%%%%%%%%%%%%%%%%%%%%%%%%%%%%%%%%%%%%%%%%%%%%%%%%%%%%%%%%%%%%%%%%%%%%%%%%%%%%%%%%%%%%%%%%%%%%%%
\subsection{Aggregation}\label{sec:Aggregation}
The aggregation process, which might be of independent interest, transforms a cover of $G$ into a partition.
Intuitively, we first let the sets randomly compete with each other over
the vertices so as to form a partition;
then, to make sure no set has large cost,
we repeatedly fix the partition locally,
and use a potential function to track progress.

\begin{theorem}\label{thm:aggreg}
Algorithm \ref{alg:RoundMMKP} is a randomized polynomial-time algorithm
that when given a graph $G=(V,E)$,
an $\varepsilon\in (0,1)$, and a cover $\calS$ of $V$ that satisfies:
(i) every vertex in $V$ is covered
by at least $c/k$ fraction of sets $S\in \calS$, for $c\in(0,1]$; and
(ii) all $S\in \calS$ satisfy $|S|\leq 2n/k$ and $\delta(S) \leq B$;
the algorithm outputs a partition $\calP$ of $V$ into at most $k$ sets such that
for all $P\in \calP$ we have $|P|\leq 2(1+\varepsilon) n/k$ and
$\E[\max{\delta(P): P\in \calP}] \leq 8B/(c\varepsilon)$.
\end{theorem}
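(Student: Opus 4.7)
The plan is to construct $\calP$ from the cover $\calS$ in three stages: random sampling, local fixing, and merging. First, I would sample sets $S_1,\dots,S_t$ i.i.d.\ uniformly from the multiset $\calS$ and form the greedy partition $P_i := S_i \setminus \bigcup_{j<i} S_j$, continuing until every vertex is covered; by property~(i), each sample contains any given vertex with probability at least $c/k$, so a coupon-collector bound gives $t = O((k/c)\log n)$ \whp. The key estimate for this phase is
\[
  \E\left[\sum_i \delta(P_i)\right] \;\leq\; \frac{2kB}{c}.
\]
To prove it, fix an edge $(u,v)$ and condition on the first sample whose set contains $u$ or $v$: the edge is cut in the partition iff exactly one of $u,v$ lies in that sample. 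Writing $p_u := \Pr_{S \sim \calS}[u \in S]$ and analogously $p_v, p_{uv}$, this conditional cut probability equals $(p_u + p_v - 2p_{uv})/(p_u + p_v - p_{uv}) \leq (k/c)(p_u + p_v - 2p_{uv})$, where the inequality uses $p_u, p_v \geq c/k$ from property~(i). Summing the edge-weighted cut probability and using $\E_{S\sim\calS}[\delta(S)] = \sum_{(u,v)} w(u,v)(p_u + p_v - 2p_{uv}) \leq B$ (property~(ii)) gives the claim.

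Next I would run a local fix with threshold $T := 4B/(c\varepsilon)$: while some $P_i$ has $\delta(P_i) > T$, reset $P_i \leftarrow S_i$ and $P_j \leftarrow P_j \setminus S_i$ for every $j \neq i$. The central claim is that $\Phi := \sum_i \delta(P_i)$ strictly decreases at each iteration. Let $T_i := S_i \setminus P_i^{\mathrm{old}}$ denote the newly absorbed vertices. The edges between $P_i^{\mathrm{old}}$ and $T_i$ lie entirely inside $S_i$, so their total weight is at least $\delta(P_i^{\mathrm{old}}) - \delta(S_i) > T - B$ (the remaining $\delta(P_i^{\mathrm{old}})$-contributing edges must leave $S_i$), and all of them stop being cut by the fix. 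The only potentially new cut edges are those from $T_i$ to $V \setminus S_i$, whose total weight is at most $\delta(S_i) \leq B$. Hence $\Phi$ drops by at least $2(T - 2B) = \Omega(T)$ per iteration, so the number of fixes is at most $O(\Phi_0/T) = O(k\varepsilon)$ in expectation. Throughout, $P_i \subseteq S_i$ is an invariant, so the size bound $|P_i| \leq 2n/k$ is preserved.

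After the fix, every part obeys $|P_i| \leq 2n/k$ and $\delta(P_i) \leq T$, but there may still be $\Theta(t) \gg k$ non-empty parts. I would greedily merge any pair $P_a, P_b$ whose combined size is at most $2(1+\varepsilon) n/k$. A size-counting argument (when no further merge is possible, the two smallest parts combined exceed $2(1+\varepsilon)n/k$, forcing all but one part to exceed $(1+\varepsilon)n/k$) shows that at most $k$ parts remain. The \emph{main obstacle} is controlling $\max_P \delta(P)$ after merging, since na\"ive pairwise merging could accumulate the boundaries of many small parts. I would address this by merging small parts preferentially, so that each final part is the union of at most one ``heavy'' post-fix part together with a family of residual pieces, and by charging the added boundary against $\sum_i \delta(P_i) \leq \Phi_0 = O(kB/c)$ from the sampling phase; an averaging-plus-Markov argument over the Phase~1 randomness then yields $\E[\max_P \delta(P)] = O(B/(c\varepsilon))$, tuned to $\leq 8B/(c\varepsilon)$ by the choice of $T$ and the merging order.
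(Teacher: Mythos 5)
Your Phases 1 and 2 are essentially sound. Phase 1 matches the paper's sampling step, and your edge-by-edge computation (conditioning on the first sampled set hitting $\{u,v\}$ and bounding the conditional cut probability by $(k/c)(p_u+p_v-2p_{uv})$, then summing against $\E_{S}[\delta(S)]\le B$) is a valid alternative to the paper's per-part bound $\E[w(E(P_i,\cup_{j>i}P_j))]\le(1-c/k)^{i-1}B$; both give $\E[\sum_i\delta(P_i)]\le 2kB/c$. Phase 2 is the paper's uncrossing step, and your accounting of the potential drop ($\ge 2(T-2B)$ per fix, with the invariant $P_i\subseteq S_i$ preserved) is correct, though the paper uses the smaller threshold $2B$, which matters below.

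The genuine gap is Phase 3. Merging pairs subject only to the size constraint $|P_a|+|P_b|\le 2(1+\varepsilon)n/k$ does bound the number of remaining parts, but gives no control on $\delta$ of a merged part: a union of many pieces each with small boundary can have boundary equal to the sum, so $\max_P\delta(P)$ could reach $\Theta(kB/c)$. You name this obstacle but your proposed fix (``merge small parts preferentially\dots averaging-plus-Markov'') does not resolve it; in particular there is no argument that a part built from ``one heavy piece plus residual pieces'' has small boundary. The paper's resolution is to merge only pairs satisfying \emph{both} a size cap and a cut cap $\delta(P_i)+\delta(P_j)\le 2B'\varepsilon^{-1}$ with $B'=\max\{\frac1k\sum_i\delta(P_i),2B\}$; the cut cap is a deterministic invariant, so $\max_P\delta(P)\le 2B'/\varepsilon$ always, and $\E[B']\le 4B/c$ gives the claimed $8B/(c\varepsilon)$. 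The nontrivial piece you are missing is why at most $k$ parts remain when every pair is blocked by \emph{one of two different} constraints: this requires the half-integral LP counting argument (Lemma~\ref{lem:AggLemma}), which with $A=2(1+\varepsilon)n/k$, $\mathcal{B}=2B'\varepsilon^{-1}$, $\sum a_i=n$, $\sum b_i\le kB'$ yields at most $(1+\varepsilon)^{-1}k+(\varepsilon/2)k\le k$ parts. (A side effect of your larger Phase-2 threshold $T=4B/(c\varepsilon)$ is that parts already sit at the target bound before merging, leaving no slack for the cut cap; the paper's choice $T=2B\le B'$ is what makes the Phase-3 invariant start out satisfied.)
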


\LinesNotNumbered\DontPrintSemicolon
\begin{algorithm}[!ht]%%
\caption{\label{alg:RoundMMKP} Aggregation Procedure for \MMKP:}
\nlset{1}\label{alg:Sampling}\tct{Sampling}\Indp
Sort sets in $\calS$ in a random order: $S_1, S_2,\dots, S_{|\calS|}$.
Let $P_i=S_i\setminus \cup_{j < i}S_j$.\;
\Indm\nlset{2}\label{alg:Uncrossing}\tct{Replacing Expanding Sets with Sets from $\calS$}\Indp%
\While{there is a set $P_i$ such that $\delta(P_i)> 2B$}
{
 Set $P_i = S_i$,
 and for all $j\neq i$, set $P_j = P_j\setminus S_i$.
}
\Indm\nlset{3}\label{alg:Aggregating}\tct{Aggregating}
Let $B'=\max\{\frac{1}{k}\sum_{i}\delta(P), 2B\}$.\;
\While{there are $P_i\neq \varnothing$, $P_j\neq \varnothing$ ($i\neq j$) such that
$|P_i|+|P_j|\leq 2(1+\varepsilon)n/k$ and $\delta(P_i)+\delta(P_j)\leq 2B'\varepsilon^{-1}$}
{
 Set $P_i = P_i \cup P_j$ and set $P_j=\varnothing$.
}
\BlankLine
\nlset{4}\Return all non-empty sets $P_i$.
\end{algorithm}

\medskip

\begin{proof}[\textbf{\em{Analysis.}}]
1. Observe that after step 1 the collection of sets $\{P_i\}$ is a partition of $V$ and $P_i\subset S_i$ for  every
$i$. Particularly, $|P_i|\leq |S_i|\leq 2 n/k$. Note, however, that the bound $\delta(P_i)\leq B$ may be violated for
some $i$. We now prove that $\E\bigl[\sum_{i}\delta(P_i)\bigr]\leq 2kB/c$. Fix an $i\leq |\calS|$ and estimate the expected
weight of edges $E(P_i,\cup_{j> i} P_j)$ given that $S_i = S$. If an edge $(u,v)$ belongs to $E(P_i,\cup_{j> i} P_j)$
then $(u,v) \in  E(S_i, V\setminus S_i) = E(S, V\setminus S)$ and both $u, v\notin \cup_{j< i} S_j$. For any edge
$(u,v)\in \delta(S)$ (with $u\in S$, $v\notin S$), $\Pr((u,v)\in E(P_i,\cup_{j> i} P_j) \given S_i=S) \le \Pr(v\notin
\cup_{j< i} S_j\given S_i=S) \leq (1- c/k)^{i-1}$, since $v$ is covered by at least $c/k$ fraction of sets in $\calS$
and is not covered by $S_i=S$. Hence,
$$\E[w(E(P_i,\cup_{j> i} P_j))\given S_i = S] \leq (1- c/k)^{i-1}\delta(S)\leq (1-c/k)^{i-1} B,$$
and $\E[w(E(P_i,\cup_{j> i} P_j))] \leq (1-c/k)^{i-1} B$. Therefore, the total expected weight of edges crossing
the boundary of $P_i$'s is at most $\sum_{i=0}^{\infty} (1-c/k)^i B = kB/c$, and
$\E\bigl[\sum_{i}\delta(P_i)\bigr]\leq 2kB/c$.

 2. After each iteration of step 2, the following invariant holds: the collection of sets $\{P_i\}$ is
a partition of $V$ and $P_i\subset S_i$ for all $i$. Particularly, $|P_i|\leq |S_i|\leq 2 n/k$. The key observation is
that at every iteration of the ``while'' loop, the sum $\sum_{j}\delta(P_j)$ decreases by at least $2B$. This is due to
the following uncrossing argument:
\begin{eqnarray*}
\delta(S_i) + \sum_{j\neq i}\delta(P_j\setminus S_i)&\leq& \delta(S_i) + \sum_{j\neq i} \Bigl(\delta(P_j) +
w(E(P_j\setminus S_i, S_i)) - w(E(S_i\setminus P_j,P_j) \Bigr) \\
&\leq& \delta(S_i) + \Bigl(\sum_{j\neq i}\delta(P_j)\Bigr) + \underbrace{w(E(V\setminus S_i, S_i))}_{\delta(S_i)} - \underbrace{w(E(P_i,V\setminus P_i))}_{\delta(P_i)}\\
&=& \Bigl(\sum_{j}\delta(P_j)\Bigr) + 2\delta (S_i) - 2\delta (P_i) \leq \Bigl(\sum_{j}\delta(P_j)\Bigr) - 2B.
\end{eqnarray*}
we used that $P_i\subset S_i$, all $P_j$ are disjoint, $\cup_{j\neq i}(P_j\setminus S_i)\subset V\setminus S_i$, $P_i
\subset S_i\setminus P_j$, $\cup_{j\neq i}P_j =  V\setminus P_i$. Hence, the number of iterations of the loop in step 2
is always polynomially bounded and after the last iteration $\E\bigl[\sum_{i}\delta(P_i)\bigr]\leq 2kB/c$ (the
expectation is over random choices at step 1; the step 2 does not use random bits). Hence, $\E[B']\leq 4B/c$.

3. The following analysis holds conditional on any value of $B'$. After each iteration of step 3, the following
invariant holds: the collection of sets $\{P_i\}$ is a partition of $V$. Moreover, $|P_i|\leq 2(1+\varepsilon) n/k$ and
$\delta(P_i)\leq 2B'\varepsilon^{-1}$ (note: after step 2, $\delta(P_i)\leq 2B\leq B'$ for each $i$).

When the loop terminates, we obtain a partition of $V$ into sets $P_i$ satisfying
$|P_i|\leq 2(1+\varepsilon) n/k$, $\sum_i |P_i| = n$, $\delta(P_i)\leq 2B'\varepsilon^{-1}$,
$\sum_i \delta (P_i)\leq k B'$, such that no two sets can be merged without violating above
constraints. Hence by Lemma~\ref{lem:AggLemma} below (with $a_i=|P_i|$ and $b_i=\delta(P_i)$), the number of non-empty sets is at most
$2\;\frac{n}{2(1+\varepsilon)n/k}+\frac{kB'}{2B'\varepsilon^{-1}} =
(1+\varepsilon)^{-1}k + (\varepsilon/2)k \leq k.$
\end{proof}

\begin{lemma}[Greedy Aggregation] \label{lem:AggLemma}
Let $a_1,\dots, a_t$ and $b_1,\dots b_t$ be two sequences of nonnegative numbers satisfying
the following constraints $a_i< A$, $b_i <B$, $\sum_{i=1}^t a_i \leq S$ and $\sum_{i=1}^t b_i \leq T$ (for
some positive real numbers $A$, $B$, $S$, and $T$).
Moreover, assume that for every $i$ and $j$ ($i\neq j$) either $a_i+a_j > A$ or $b_i + b_j > B$.
Then, $t < S/A + T/B + \max (S/A, T/B, 1)$.
\end{lemma}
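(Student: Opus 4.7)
The plan is a classification-and-counting argument that combines local and cross pairwise constraints. Write $\sigma := S/A$ and $\tau := T/B$ and assume WLOG $\sigma \geq \tau$. Split $[t]$ into $P = \{i : a_i > A/2\}$ and $Q = [t]\setminus P$; the bound $\sum a_i \leq S$ together with $a_i > A/2$ for $i\in P$ gives $|P| < 2\sigma$ immediately. For any pair $i,j \in Q$ we have $a_i+a_j \leq A$, so the hypothesis forces $b_i + b_j > B$; hence at most one element of $Q$ has $b_i \leq B/2$. Applying the same anchor/averaging trick inside the $b$-heavy part of $Q$ (take $* \in Q$ minimizing $b_*$, use $b_j > B - b_*$ for every other such $j$, and sum) yields $|Q| < 2(\tau - \rho) + 1$, where $\rho := \sum_{i \in P} b_i / B$ is the $b$-mass already spent in $P$. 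Naively adding these two estimates gives only $t < 2\sigma + 2\tau - 2\rho + 1$, which exceeds the target by roughly $\min(\sigma,\tau)$.

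The missing factor is recovered from the cross-pair constraints between $P$ and $Q$. Consider first the clean case $\rho = 0$: each $i \in P$ has $b_i = 0$, so every cross pair $(i,j) \in P \times Q$ fails the $b$-condition and must satisfy $a_i + a_j > A$; in particular $a_j > A - a_P^{\min}$ for every $j \in Q$, where $a_P^{\min} := \min_{i \in P} a_i > A/2$. Setting $\lambda := a_P^{\min}/A \in (\tfrac12, 1)$, summing this lower bound over $Q$ and adding $\sum_{i\in P} a_i \geq |P| \cdot \lambda A$ produces the linear constraint $|P|\, \lambda + |Q|\,(1-\lambda) < \sigma$. Combining with $|P| < 2\sigma$ and $|Q| < 2\tau + 1$, an elementary optimization of $|P|+|Q|$ over $\lambda \in (\tfrac12, 1)$ shows that the maximum is at most $\max(2\sigma,\, \sigma + 2\tau)$, both of which are $\leq 2\sigma + \tau$ because $\sigma \geq \tau$. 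This yields $t < 2\sigma + \tau = \sigma + \tau + \max(\sigma,\tau,1)$ in the main regime $\sigma \geq 1$.

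For general $\rho > 0$ the cross-pair argument is extended by splitting $P = P_H \cup P_L$ according to whether $b_i > B/2$, and applying the $a$-threshold argument only to $P_L$ (whose items still fail the $b$-condition against $b$-light elements of $Q$); the $|Q|$ bound correspondingly tightens with the budget $\rho$ already consumed by $P_H$. The small regime $\sigma < 1$ is handled symmetrically by swapping the roles of $a$ and $b$, with the $\max(\cdot,\cdot,1)$ term absorbing the single ``doubly-light'' element that the hypothesis permits. The hard part will be the cross-pair bookkeeping when $\rho > 0$, because one must balance the relaxation of the forced $a$-threshold against the simultaneous tightening of the $|Q|$ bound; a short convexity argument (or a case split on whether $\rho \leq \tau/2$) shows the extremal configuration is precisely the $\rho = 0$ analysis, which completes the proof.
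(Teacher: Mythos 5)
Your approach is genuinely different from the paper's: the paper rescales to $A=B=1$, writes two vertex-cover-type LPs (constraints $x_i+x_j\ge 1$ for pairs with $a_i+a_j\ge 1$, resp.\ $y_i+y_j\ge 1$ for pairs with $b_i+b_j\ge 1$), notes that $\{a_i\}$ and $\{b_i\}$ are feasible so the optima are below $S$ and $T$, and then exploits half-integrality of the optimal solutions $x^*,y^*\in\{0,\tfrac12,1\}^t$ together with a three-way case split on $x^*_j+y^*_j$. That argument never has to track which pair is ``discharged'' by which condition. Your proposal instead attempts a direct threshold/counting argument, and as written it has a genuine gap exactly where you flag ``the hard part'': the reduction of the general case to $\rho=0$ is asserted, not proved, and it is not mere bookkeeping. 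The engine of your clean case is that \emph{every} cross pair in $P\times Q$ fails the $b$-condition, so every $j\in Q$ inherits the lower bound $a_j>A-a_P^{\min}$. Once $\rho>0$ this can fail for \emph{all} cross pairs simultaneously: take $A=B=1$, $m$ points with $(a_i,b_i)=(\tfrac12+\epsilon,\tfrac12-\epsilon)$ and $m$ points with $(a_j,b_j)=(\tfrac12-\epsilon,\tfrac12+3\epsilon)$. All within-group and cross pairs satisfy the hypothesis, but every cross pair is discharged by the $b$-condition, so no $a$-threshold is forced on $Q$; your available bounds are then $|P|<2\sigma$ and $|Q|<2(\tau-\rho)+1$, which here sum to about $3m+1$ against a target of $\sigma+\tau+\max(\sigma,\tau,1)\approx 3m$. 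So the extremal configuration is \emph{not} ``precisely the $\rho=0$ analysis''; closing the case requires a further idea (e.g.\ jointly using that $b$-light elements of $P$ force $b_j>B-\min_{P_L}b_i$ on $Q$, and re-optimizing), which your sketch does not supply.

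Two smaller points. First, even in the clean case your stated intermediate bound $|Q|<2\tau+1$ is too weak for the claimed optimization output $\max(2\sigma,\sigma+2\tau)$: as $\lambda\to 1$ the vertex of your feasible region tends to $\sigma+2\tau+1$, and $\tau+1\le\max(\sigma,\tau,1)$ need not hold. You must use the sharper consequence of your own anchor argument, namely $|Q|-1<(\tau-\rho-\beta_0)/(1-\beta_0)$ with $\beta_0=\min_{j\in Q}b_j/B\le\tfrac12$, so that the single $b$-light element of $Q$ is paid for by the extra mass it forces on the others. Second, the regimes $\sigma<1$ and $P=\varnothing$ are only waved at. None of these is fatal to the general strategy, but as it stands the proposal is a plan whose central step is missing, whereas the paper's half-integrality route disposes of all of these interactions in a few lines.
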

\begin{proof}
By rescaling we assume that $A=1$ and $B=1$. Moreover, we may assume that $\sum_{i=1}^t a_i < S$
and $\sum_{i=1}^t b_i < T$ by slightly decreasing values of all $a_i$ and $b_i$ so that all
inequalities still hold.

We write two linear programs. The first LP ($LP_I$) has variables $x_i$ and constraints $x_i + x_j \geq 1$
for all $i,j$ such that $a_i+a_j\geq 1$. The second LP ($LP_{II}$) has variables $y_i$ and constraints $y_i + y_j \geq 1$
 all $i,j$ such that $b_i+b_j\geq 1$. The LP objectives are to minimize $\sum_i x_i$ and to minimize $\sum_i y_i$. Note, that
$\{a_i\}$ is a feasible point for $LP_I$ and $\{b_i\}$ is a feasible point for $LP_{II}$. Thus,
the optimum values of $LP_I$ and $LP_{II}$ are strictly less than $S$ and $T$ respectively.

Observe that both LPs are half-integral. Consider optimal solutions $x^*_i$, $y^*_j$ where
$x^*_i,y^*_j\in \{0,1/2,1\}$. Note that for every $i,j$ either $x^*_i+x^*_j \geq 1$ or
$y^*_i+y^*_j \geq 1$. Consider several cases. If for all $i$, $x^*_i+ y^*_i\geq 1$, then $t< S+T$, since $\sum_{i=1}^t(x^*_i+y^*_i) < S+T$. If for some $j$, $x^*_j+ y^*_j = 0$ (and hence $x^*_j=y^*_j=0$), then
$x^*_i+ y^*_i\geq 1$ for $i\neq j$ and, thus, $t < S+T+1$. Finally, assume that
for some $j$, $x^*_j+y^*_j = 1/2$, and w.l.o.g. $x^*_j=1/2$ and $y^*_j=0$. The number of $i$'s
with $x^*_i\neq 0$ is (strictly) bounded by $2S$. For the remaining $i$'s, $x^*_i=0$ and hence $y^*_i=1$
(because $y^*_i = y^*_i+y^*_j\geq 1$), and thus the number of such $i$'s is (strictly) bounded by $T$.
\end{proof}

%%%%%%%%%%%%%%%%%%%%%%%%%%%%%%%%%%%%%%%%%%%%%%%%%%%%%%%%%%%%%%%%%%%%%%%%%%%%%%%%%%%%%%%%%%%%%%5
\section{Further Extensions}\label{sec:FE}
Both Theorems \ref{thm:bal} and \ref{thm:mcut} follow from a more general result for a problem that we call \mmc,
defined as follows. The input is an undirected graph $G=(V,E)$, nonnegative edge-weights $w$, a collection of disjoint
terminal sets $T_1,T_2,\ldots,T_k \subset V$ (possibly empty), and parameters $\rho \in [1/k,1]$ and $C,D>0$. The goal
is to find a partition $S_1,\ldots,S_k$ of $V$ such that:
\begin{enumerate} \compactify
\item For all $i$, $\ T_i \subseteq S_i$;
\item For all $i$, $\ |S_i| \leq \rho n$;
\item For all $i$, $\ \delta(S_i) \leq C$; and
\item $\sum_i \delta(S_i) \leq D$.
\end{enumerate}

This problem models the aforementioned cloud computing scenario, where in addition, certain processes are preassigned
to machines (each set $T_i$ maps to machine $i\in[k]$). The goal is to assign the processes $V$ to machines $[k]$ while
respecting the preassignment and machine load constraints, and minimizing both bandwidth per machine and total volume
of communication.

\begin{theorem}
\label{thm:gen} There is a randomized polynomial time algorithm that given any feasible instance of the \mmc problem
with parameters $k,\rho,C,D$ and any $\varepsilon>0$, finds a partition $Q_1,\ldots,Q_k$ with the following properties:
(i) For all $i$, $\ T_i \subseteq Q_i$; (ii) For all $i$, $\ |Q_i| \leq (2+ \varepsilon) \rho n$; (iii) $\E\left[
\max_{i=1}^k \delta(Q_i) \right] \leq O_{\varepsilon}(\sqrt{\log n \log k}) C$; and (iv) $\E\left[ \sum_i \delta(Q_i)
\right] \leq O_{\varepsilon}(\sqrt{\log n \log k}) D$.
\end{theorem}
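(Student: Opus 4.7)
The plan is to adapt the covering-plus-aggregation pipeline of Section~\ref{sec:minmax-bal} to handle the prescribed terminal sets $T_1,\ldots,T_k$ and to yield, in addition to the max-cut bound, the sum-cut guarantee $O(\alpha D)$ (where $\alpha = O_\varepsilon(\sqrt{\log n \log k})$). First, I would upgrade the MBWC subroutine of Corollary~\ref{cor:sdp-subroutine} to respect the prescribed terminal sets: for each $i\in\{0,1,\ldots,k\}$, solve the SDP with $\|\bar v\|=1$ for every $v\in T_i$ (all set equal to one fixed unit vector, so $\|\bar u-\bar v\|^2=0$ within the group) and $\|\bar v\|=0$ for $v\in T_j$, $j\neq i$, as in Remark~\ref{rem:terminals}; the orthogonal-separator rounding then returns $S^t$ that either contains $T_i$ in full or is disjoint from all terminals, while preserving the bicriteria size/cost guarantees. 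Plugging this into an analogue of Algorithm~\ref{alg:covMMC} produces a multiset $\calS$ of terminal-consistent sets, each satisfying $|S|\le (1+\varepsilon)\rho n$ and $\delta(S)\le \alpha C$, with $|\calS|=O(k\log n)$ and each vertex covered by $\Omega(1/k)$ fraction of the sets---exactly as in Theorem~\ref{thm:covering}. To additionally ensure $\sum_t \delta(S^t)\le O(\alpha D\log n)$, I would invoke the ratio-minimizing WSSE of Theorem~\ref{thm:WSSE} with $\eta=y^t/Y^t$: by the mediant inequality applied to the optimal partition, $\min_S \delta(S)/\eta(S)\le D$, whence WSSE returns $S^t$ with $\delta(S^t)\le \alpha D\cdot \eta(S^t)$, and the standard telescoping
\begin{equation*}
\sum_t \eta(S^t) \;=\; 2\sum_t \frac{Y^t - Y^{t+1}}{Y^t} \;\le\; 2\ln\frac{Y^1}{Y^{\ell+1}} \;=\; O(\log n)
\end{equation*}
delivers the desired total.

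Next, I would adapt Algorithm~\ref{alg:RoundMMKP} to preserve terminal assignment. Because every $S\in\calS$ is terminal-consistent, every $T_i$ lies entirely in a unique $P_t$ after the random sampling of Step~1; the replacement of Step~2 substitutes $P_t$ by its parent $S_t$, which has the same terminal signature so no $T_i$ is ever split; and the merging of Step~3 is restricted to pairs whose union contains at most one $T_i$. Lemma~\ref{lem:AggLemma} still caps the number of non-empty parts at $k$, so each terminal group occupies a distinct slot $Q_i$, and the remaining arguments of Theorem~\ref{thm:aggreg} yield $|Q_i|\le 2(1+\varepsilon)\rho n$ and $\E[\max_i \delta(Q_i)]\le O(\alpha C)$.

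Finally, for the sum bound $\E[\sum_i \delta(Q_i)]\le O(\alpha D)$, observe that an edge $(u,v)\in E$ is cut by $\{Q_i\}$ only if the first $S_t$ in the random order that intersects $\{u,v\}$ contains exactly one of the endpoints. Since each vertex lies in at least $\log_2 n$ sets of $\calS$ (from the termination of the covering procedure), this probability is at most $|A_{u,v}|/\log_2 n$ with $A_{u,v}=\{t:|S^t\cap\{u,v\}|=1\}$. Summing $w(u,v)$ over $E$ and exchanging the order of summation yields $\E[\text{cut mass}]\le \sum_t \delta(S^t)/\log_2 n = O(\alpha D)$, and Steps~2--3 of the aggregation only decrease $\sum_i \delta(P_i)$, so the bound survives to the final partition. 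The main obstacle is the coupled guarantee in the first paragraph: engineering a single $S^t$ that simultaneously satisfies $\delta(S^t)\le \alpha C$ and $\delta(S^t)/\eta(S^t)\le \alpha D$ requires careful coordination between the choice of the threshold $H$ in part~II of Theorem~\ref{thm:WSSE} (guessed from an $O(\log n)$-size grid) and the structure of the optimal partition, and is the technical crux of the extension.
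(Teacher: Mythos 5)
Your overall architecture matches the paper's (terminal-shrinking plus a multiplicative-weights covering phase followed by aggregation), and your argument for the final sum bound --- an edge is cut only if the first set in the random order touching $\{u,v\}$ contains exactly one endpoint, giving probability at most $|A_{u,v}|/\log_2 n$ and hence expected cut mass at most $\sum_t \delta(S^t)/\log_2 n$ --- is a correct and arguably cleaner variant of the paper's Claim~\ref{cl:gen-sum-obj}. However, there is a genuine gap at exactly the point you flag as ``the technical crux'' and then leave unresolved: each covering iteration must produce a \emph{single} set $S^t$ that simultaneously satisfies (a) $\delta(S^t)\le\alpha C$, (b) $\delta(S^t)\le O(\alpha D)\,y^t(S^t)/Y^t$, and (c) $y^t(S^t)\ge\Omega(Y^t/k)$ (the last is needed for $|\calS|=O(k\log n)$, hence for the $c/k$-fraction coverage that the max-bound aggregation requires). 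Running WSSE once with $\eta=y^t/Y^t$ only controls the ratio in (b); the returned set may have $\delta(S^t)\gg\alpha C$ or tiny $y^t$-mass. The paper resolves this with a doubling search: for $i=0,\ldots,\log_2 k+1$ it solves the \MBWC instance with $\tau=2^{-i}$ and accepts the first $S^t(i)$ with $\delta(S^t(i))\le\alpha\min\{C,4D/2^i\}$; Claim~\ref{cl:mult-update-general} shows some $i$ succeeds by discarding the optimal parts whose cost exceeds $\frac{2D}{Y^t}y^t(S^*_j)$ (these carry at most half the $y^t$-mass) and bucketing a surviving part by its mass. Without this (or an equivalent) coupling device, the covering phase of your proof does not go through.

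A second, smaller gap is in the aggregation. Once merges are forbidden between two parts that each contain a terminal, the hypothesis of Lemma~\ref{lem:AggLemma} (every unmerged pair is blocked by size or by cost) fails for terminal--terminal pairs, so the lemma does \emph{not} cap the number of parts at $k$: you can be left with up to $k$ terminal parts plus roughly $k$ non-terminal parts. Pairing these naively yields only the $(3+\varepsilon)\rho n$ size bound. To recover $(2+\varepsilon)\rho n$ with exactly $k$ parts, the paper tightens the merge thresholds to $(1+\varepsilon)\rho n$ and $2B'$, bounds the part count by $4k$, sorts parts by decreasing size, forms $\lceil t/k\rceil\le 4$ groups of $k$, and assigns one part per group to each $Q_i$ round-robin; the averaging argument $|Q_i|-|P'_i|\le\frac1k\sum_j|Q_j|=n/k$ then gives the size bound, and the factor-$4$ blowup in $\max_i\delta(Q_i)$ is absorbed into the $O_\varepsilon(\cdot)$. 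Your proposal is missing this step.
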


It is clear that in fact Theorem~\ref{thm:gen} generalizes both Theorems~\ref{thm:bal} and \ref{thm:mcut}. Let us now
describe modifications to the \MMKP algorithm used to obtain Theorem~\ref{thm:gen}.

\paragraph{Uniform Coverings.} First, by the introduction of vertex weights, we can shrink each preassigned set $T_i$ to a single terminal $t_i$ (for
$i\in[k]$). Then, feasible vertex-sets $\cs$ in the covering procedure (Section~\ref{sec:mult}) consist of those $S\sse
V$ where $\mbox{weight}(S)\le \rho\,n$ (balance constraint) and $|S\cap \{t_i\}_{i=1}^k|\le 1$ (preassignment
constraint). The subproblem \MBWC also has the additional $|S\cap \{t_i\}_{i=1}^k|\le 1$ constraint; this can be
handled in the algorithm from Section~\ref{sec:wsse} by guessing which terminal belongs to $S$ (see
Remark~\ref{rem:terminals}). Using Corollary~\ref{cor:sdp-subroutine} we assume an $(\alpha,\beta,\gamma)$
approximation algorithm for this (modified) \MBWC problem; where for any $\varepsilon>0$,
$\alpha=O_\varepsilon(\sqrt{\log n \, \log(\max\{1/\rho,1/\tau\})})$, $\beta=1+\varepsilon$ and
$\gamma=O_\varepsilon(1)$.

Algorithm~\ref{alg:covMMC-gen} below gives the procedure to obtain a uniform covering $\mathcal{S}$ bounding total
edge-cost in addition to the conditions in Theorem~\ref{thm:covering}.

\begin{algorithm}[!ht]
\caption{Covering Procedure for \mmc:} \label{alg:covMMC-gen}
set $t\gets 1$, $y^1(v)\gets 1$ for all $v\in V$, and $Y^1\gets \sum_{v\in V} y^1(v)$.\\
 \While{$Y^t > \frac1n$}{
 \For{$i=0,\ldots,\log_2 k+1$}{
solve the \MBWC instance $\langle G, \, y^t,\, w,\, \frac{1}{2^i},\, \rho\rangle$
using the algorithm from Corollary~\ref{cor:sdp-subroutine}, to obtain $S^t(i)\sse V$.\\
If $\left(\delta(S^t(i))\le \alpha\cdot \min\{ C,\, 4D/2^i\}\right)$ then $S^t\gets S^t(i)$ and quit for loop.
 }
 set $\calS\gets \calS\cup \{S^t\}$.\\
  \For{$v\in V$}{ set $y^{t+1}(v)\gets \frac12\cdot y^{t}(v)$ if $v\in S^t$, and
$y^{t+1}(v)\gets y^{t}(v)$  otherwise. }
set $Y^{t+1}\gets \sum_{v\in V} y^{t+1}(v)$.\\
set $t\gets t+1$.
 }
output $\calS$.
\end{algorithm}

\begin{theorem}\label{thm:covering-general}
For any instance of \mmc, output $\calS$ of Algorithm~\ref{alg:covMMC-gen} satisfies:
\begin{enumerate}
 \item $\delta(S)\le \alpha\cdot C$ and $|S|\le \beta\cdot \frac{n}k$ for all $S\in \calS$.
 \item $|\{S\in \calS : S\ni v\}| \ge \log_2 n$ for all $v\in V$.
 \item $|\calS|\le 5\gamma \, k\cdot \log_2 n$.
 \item $\sum_{S\in \calS} \delta(S) \le 17\alpha\,\gamma\, \log_2n\cdot D$.
\end{enumerate}
Above, for any $\varepsilon >0$, $\alpha=O_\varepsilon(\sqrt{\log n \, \log k})$, $\beta=1+\varepsilon$ and
$\gamma=O_\varepsilon(1)$.
\end{theorem}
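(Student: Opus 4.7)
The plan is to extend the proof of Theorem~\ref{thm:covering}, with the inner \texttt{for} loop and the amortization of $\sum_{S\in\calS}\delta(S)$ as the new ingredients. Write $Y^t:=\sum_{v\in V}y^t(v)$.

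First I would show that at every outer iteration $t$ some index $i\le\log_2 k+1$ is accepted. Fix an optimal partition $\{S_j^*\}_{j=1}^k$ of the \mmc instance, so $\delta(S_j^*)\le C$ for each $j$ and $\sum_j\delta(S_j^*)\le D$. Set $w_j:=y^t(S_j^*)/Y^t$ (so $\sum_j w_j=1$) and $i_j:=\lceil\log_2(1/w_j)\rceil$, which yields $1/2^{i_j}\le w_j<2/2^{i_j}$; thus $S_j^*$ together with its unique terminal is feasible for the \MBWC instance $\langle G,y^t,w,1/2^{i_j},\rho\rangle$. Restrict to $J:=\{j:w_j\ge 1/(2k)\}$, so $i_j\le\log_2 k+1$ for $j\in J$ and $\sum_{j\in J}w_j>1/2$ (the remaining mass is strictly less than $1/2$). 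Averaging,
\begin{equation*}
 \sum_{j\in J}\bigl(\delta(S_j^*)-2Dw_j\bigr) \le D-2D\cdot\tfrac12 < 0,
\end{equation*}
so some $j^*\in J$ satisfies $\delta(S_{j^*}^*)<2Dw_{j^*}<4D/2^{i_{j^*}}$ as well as $\delta(S_{j^*}^*)\le C$. Hence the \MBWC instance at $i=i_{j^*}$ has optimum at most $\min\{C,4D/2^{i_{j^*}}\}$, and the subroutine of Corollary~\ref{cor:sdp-subroutine} returns $S^t(i_{j^*})$ satisfying the accept condition; the loop therefore succeeds with some index $i_t\le\log_2 k+1$.

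Properties~1--3 then parallel the proof of Theorem~\ref{thm:covering}. Property~1 is immediate: $\delta(S^t)\le\alpha C$ follows from the accept condition and $|S^t|\le\beta\rho n$ from the size guarantee of the subroutine. For Property~2, on termination $Y^{\ell+1}\le 1/n$, hence $y^{\ell+1}(v)\le Y^{\ell+1}\le 1/n$ for every $v$; since $y^{\ell+1}(v)=2^{-N_v}$ where $N_v$ counts the sets in $\calS$ covering $v$, this forces $N_v\ge\log_2 n$. For Property~3, the subroutine guarantees $y^t(S^t)\ge Y^t/(\gamma\cdot 2^{i_t})\ge Y^t/(2\gamma k)$, so $Y^{t+1}\le Y^t\bigl(1-1/(4\gamma k)\bigr)$, and unrolling against the guard $Y^\ell>1/n$ yields $\ell\le 5\gamma k\log_2 n$ after converting $\ln$ to $\log_2$.

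The only genuinely new piece is Property~4. The accept condition gives $\delta(S^t)\le 4\alpha D/2^{i_t}$, so $\sum_{S\in\calS}\delta(S)\le 4\alpha D\sum_{t=1}^\ell 2^{-i_t}$. The sharper progress bound $Y^{t+1}/Y^t\le 1-1/(2\gamma\cdot 2^{i_t})$ together with $\ln(1-x)\le -x$ and $Y^{\ell+1}\ge Y^\ell/2>1/(2n)$ gives
\begin{equation*}
 \sum_{t=1}^\ell \frac{1}{2\gamma\cdot 2^{i_t}} \le \ln\frac{Y^1}{Y^{\ell+1}} \le \ln(2n^2),
\end{equation*}
whence $\sum_t 2^{-i_t}\le 2\gamma\ln(2n^2) \le (17/4)\gamma\log_2 n$, and the stated $17\alpha\gamma D\log_2 n$ bound follows. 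The main obstacle I anticipate is the existence step: the acceptance threshold $\min\{C,4D/2^i\}$ is designed precisely so that averaging over the ``heavy'' indices $J$ trades per-cut cost $C$ against total cost $D$. Once that is in place, the factor $1/2^{i_t}$ does double duty -- it bounds $\delta(S^t)$ via the acceptance rule and drives the multiplicative-weights potential -- so Property~4 comes out of the same telescoping used for Property~3.
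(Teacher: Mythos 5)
Your proof is correct and follows essentially the same route as the paper's: the same averaging over the optimal partition to exhibit a part that is simultaneously heavy ($y^t$-weight at least $Y^t/(2k)$) and cheap ($\delta(S^*_j)\le 2D\,y^t(S^*_j)/Y^t$), so that some index $i\le \log_2 k+1$ is always accepted, and the same multiplicative-weights telescoping for conditions 2--4, with condition 4 obtained by letting the accepted scale $2^{-i_t}$ do double duty exactly as in the paper. The only quibble is the constant in condition 3: $y^t(S^t)\ge Y^t/(\gamma 2^{i_t})$ with $2^{i_t}\le 2k$ gives $\ell\le 1+8\gamma k\ln n\approx 5.55\,\gamma k\log_2 n$ rather than $5\gamma k\log_2 n$ --- but the paper's own proof has the same slack (it derives $9\gamma k\log_2 n$), and only $|\calS|=O(\gamma k\log n)$ is used downstream.
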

\begin{proof}
We start with the following key claim.
\begin{claim} \label{cl:mult-update-general}
In any iteration $t$ of the above algorithm, there exists an $i\in \{0,1,\ldots,\log_2k+1\}$ such that
$\delta(S^t(i))\le \alpha\cdot \min\{ C,\, 4D/2^i\}$, $|S^t(i)|\le \beta\cdot \frac{n}k$, and $y^t(S^t(i))\ge
\frac{Y^t}{\gamma\,2^i}$.
\end{claim}
\begin{proof}
Consider the optimal solution $\{S^*_j\}_{j=1}^k$ of the original \mmc instance.  For all $j\in [k]$ we have that
$|S^*_j|\le \rho n$, $\delta(S^*_j)\le C$ and $S^*_j$ contains at most one terminal. Moreover, $\sum_{j=1}^k
\delta(S^*_j)\le D$. Since $\{S^*_j\}_{j=1}^k$ partitions $V$, we also have $\sum_{j=1}^k y^t(S^*_j)= Y^t$. Let
$L\sse[k]$ denote the indices $j$ having $\delta(S^*_j)\le \frac{2D}{Y^t}\cdot y^t(S^*_j)$.

We claim that $\sum_{j\in L} y^t(S^*_j) \ge Y^t/2$. This is because:
$$\sum_{j\not\in L} y^t(S^*_j) \le \frac{Y^t}{2D}
\cdot \sum_{j\not\in L} \delta(S^*_j) \le \frac{Y^t}{2D} \cdot \sum_{j=1}^k \delta(S^*_j) \le Y^t/2.$$

Since $|L|\le k$, there is some $q\in L$ with $y^t(S^*_q)\ge \frac{Y^t}{2k}$. Let $i\in \{1,\ldots,\log_2k+1\}$ be the
value such that $y^t(S^*_q)/ Y^t \in [\frac{1}{2^{i}},\, \frac{1}{2^{i-1}}]$; note that such an $i$ exists because
$y^t(S^*_q)/ Y^t \in [\frac1{2k},1]$. For this $i$, consider the \MBWC instance $\langle G, \, y^t,\, w,\,
\frac{1}{2^i},\, \rho\rangle$. Observe that $S^*_q$ is a feasible solution here since $y^t(S^*_q) \ge Y^t/2^i$,
$|S^*_q|\le \rho n$ and $S^*_q$ contains at most one terminal. Hence the optimal value of this instance is at most:
$$\delta(S^*_q)\le \min\left\{ C,\,\frac{2D}{Y^t}\cdot y^t(S^*_q)\right\} \le \min\left\{ C,\,\frac{4D}{2^i}\right\}$$
The first inequality uses the definition of $L$ and that $\delta(S^*_q)\le C$, and the second inequality is by choice
of $i$. It now follows from Corollary~\ref{cor:sdp-subroutine} that solution $S^t(i)$ satisfies the claimed properties.
We note that $\alpha=O_\varepsilon(\sqrt{\log n \, \log k})$ because each instance of \MBWC has parameters
$\tau=\frac1{2^i}\ge \frac1k$ and $\rho\ge \frac1k$.
\end{proof}

For each iteration $t$, let $i_t\in \{1,\ldots,\log_2k+1\}$ be the index such that $S^t = S^t(i_t)$.
Claim~\ref{cl:mult-update-general} implies that such an index always exists, and so the algorithm is indeed
well-defined. {\bf Condition~1} of Theorem~\ref{thm:covering-general} also follows directly. Let $\ell$ denote the
number of iterations of the while loop, for the given \mmc instance. For any $v\in V$, let $N_v$ denote the number of
iterations $t$ with $S^t\ni v$. Then, by the $y$-updates we have $y^{\ell+1}(v)=1/2^{N_v}$. Moreover, the termination
condition implies that $y^{\ell+1}(v)\le \frac1n$ (since $Y^{\ell+1}\le \frac1n$). Thus we obtain $N_v\ge \log_2 n$ for
all $v\in V$, proving {\bf condition~2} of Theorem~\ref{thm:covering-general}.

Claim~\ref{cl:mult-update-general} implies that for each iteration $t$, we have $y^t(S^t)\ge \frac{Y^t}{\gamma\,
2^{i_t}}$ and $\delta(S^t)\le 4\alpha\, D/2^{i_t}$. Since $i_t\le \log_2k+1$, we obtain:
$$y^t(S^t) \ge \max\left\{ \frac{Y^t}{2\gamma\,k},\, \frac{Y^t}{4\alpha \gamma\,D}\cdot \delta(S^t)\right\}.$$

\begin{enumerate}
 \item Using $y^t(S^t) \ge \frac{Y^t}{2\gamma\,k}$ in each iteration,
 $Y^{t+1}=Y^t-\frac12 \cdot y^t(S^t)\le \left( 1-\frac1{4\gamma\, k} \right)\cdot Y^t$. This implies that $Y^{\ell}\le
\left( 1-\frac1{4\gamma\, k} \right)^{\ell-1} \cdot Y^1= \left( 1-\frac1{4\gamma\, k} \right)^{\ell-1} \cdot n$.
However $Y^\ell>\frac1n$ since the algorithm performs $\ell$ iterations. Thus, $\ell\le 1+8\gamma\,k\cdot \ln n\le
9\gamma\, k\cdot \log_2n$. This proves {\bf condition 3} in Theorem~\ref{thm:covering-general}.
 \item Using $y^t(S^t) \ge \frac{Y^t}{4\alpha \gamma\,D}\cdot \delta(S^t)$ in each iteration,
  $Y^{t+1}=Y^t-\frac12 \cdot y^t(S^t)\le \left( 1-\frac{\delta(S^t)}{8\alpha \gamma\,D} \right)\cdot Y^t$. So,
  $$\frac1n < Y^{\ell}\le \Pi_{t=1}^{\ell-1} \left( 1-\frac{\delta(S^t)}{8\alpha \gamma\,D} \right) \cdot Y^1\le
  \exp\left( -\frac{\sum_{t=1}^{\ell-1} \delta(S^t)}{8\alpha \gamma\,D} \right)\cdot n$$
  This implies $\sum_{t=1}^{\ell-1} \delta(S^t) \le (16\alpha\gamma\, \ln n)\cdot D$. Adding in $\delta(S^\ell)\le
  \alpha\cdot C\le \alpha\, D$, we obtain {\bf condition~4} of Theorem~\ref{thm:covering-general}.
\end{enumerate}
This completes the proof of Theorem~\ref{thm:covering-general}.
\end{proof}

%%%%%%%%%%%%%%%%%%%%%%%%%%%%%%%%%%%%%%%%%%

\paragraph{Aggregation} This step remains essentially the same as in Section~\ref{sec:Aggregation}, namely
Algorithm~\ref{alg:Aggregating} (with parameter $B:=\alpha\cdot C$). The only difference is that in Step~3 we do not
merge parts containing terminals. We first show that this yields a slightly weaker version of Theorem~\ref{thm:gen}: in
condition (ii) we obtain a bound of $(3+\varepsilon)\rho n$ on the cardinality of each part. (Later we show how to
achieve the cardinality bound of $(2+\varepsilon)\rho n$ as claimed in Theorem~\ref{thm:gen}.)

Note that each of the final sets $\{P_i\}$ is a subset of some set in ${\cal S}$, and hence contains at most one
terminal. It also follows that the final sets $\{P_i\}$ are at most $2k$ in number: at most $k$ of them contain no
terminals (just as in Theorem~\ref{thm:aggreg}), and at most $k$ contain a terminal (since there are at
most $k$ terminals). Each of these sets $\{P_i\}$ has size at most $(2+\varepsilon)\rho n$ and cut value at most
$8B/(c\varepsilon)$, by the analysis in Theorem~\ref{thm:gen}. Moreover, if a set $P_i$ contains a terminal then
$|P_i|\le \beta\cdot \rho n= (1+\varepsilon)\rho n$ (since it does not participate in any merge). Finally in order to
reduce the number of parts to $k$, we merge arbitrarily each part containing a terminal with one non-terminal part; and
output this as the final solution. It is clear that each part has at most one terminal, has size $\le
(3+\varepsilon)\rho n$, and cut value at most $O_\varepsilon(\sqrt{\log n \log k})\cdot C$. The bound on total cost
(condition (iv) in Theorem~\ref{thm:gen}) is by the following claim. This proves a weaker version of
Theorem~\ref{thm:gen}, with size bound $(3+\varepsilon)\rho n$.
\begin{claim}\label{cl:gen-sum-obj}
Algorithm~\ref{alg:Aggregating} applied on collection $\calS$ from Theorem~\ref{thm:covering-general} outputs partition
$\{P_i\}_{i=1}^k$ satisfying $\E\left[\sum_{i=1}^k w(\delta(P_i))\right] = O_{\varepsilon}(\sqrt{\log n\,\log k})\, D$.
\end{claim}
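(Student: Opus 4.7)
The plan is to track the expected value of $\sum_i\delta(P_i)$ through the three stages of the aggregation procedure, showing it is already small after the random sampling step and cannot grow in the remaining steps.

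First, I would re-analyze Step~1 at the level of individual edges rather than through the per-iteration geometric decay used for the max objective in the proof of Theorem~\ref{thm:aggreg}. For an edge $(u,v)\in E$, write $\calS_u=\{S\in\calS:S\ni u\}$ and $\calS_v=\{S\in\calS:S\ni v\}$. Under a uniformly random order on $\calS$, vertices $u$ and $v$ land in the same part $P_i$ iff the first $S$ in the order belonging to $\calS_u\cup\calS_v$ actually lies in $\calS_u\cap\calS_v$. Hence
$$\Pr[(u,v)\text{ is cut}]\ =\ \frac{|\calS_u\triangle\calS_v|}{|\calS_u\cup\calS_v|}\ \leq\ \frac{|\calS_u\triangle\calS_v|}{\log_2 n},$$
where the inequality uses condition~2 of Theorem~\ref{thm:covering-general}, namely $|\calS_u|,|\calS_v|\geq\log_2 n$.

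Second, I would sum this bound over the edges. A double-counting identity gives $\sum_{(u,v)\in E} w(u,v)\,|\calS_u\triangle\calS_v|=\sum_{S\in\calS}\delta(S)$, since a set $S$ contributes to $|\calS_u\triangle\calS_v|$ precisely when $(u,v)\in\delta(S)$. Combined with condition~4 of Theorem~\ref{thm:covering-general},
$$\E\!\left[\sum_{(u,v)\text{ cut}} w(u,v)\right]\ \leq\ \frac{1}{\log_2 n}\sum_{S\in\calS}\delta(S)\ \leq\ 17\,\alpha\gamma\,D.$$
Each cut edge contributes to exactly two boundaries $\delta(P_i)$, so $\E[\sum_i\delta(P_i)]\leq 34\,\alpha\gamma\,D=O_\varepsilon(\sqrt{\log n\log k})\,D$ after Step~1.

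Third, I would observe that the subsequent operations cannot increase $\sum_i\delta(P_i)$. In Step~2 the uncrossing computation already given in the proof of Theorem~\ref{thm:aggreg} shows that each replacement strictly decreases $\sum_j\delta(P_j)$ (by at least $2B$). In Step~3, merging disjoint $P_i,P_j$ gives $\delta(P_i\cup P_j)=\delta(P_i)+\delta(P_j)-2\,w(E(P_i,P_j))\leq\delta(P_i)+\delta(P_j)$. The final pairing of each terminal-containing part with one non-terminal part (to reduce the number of parts to $k$) is likewise a merge of disjoint sets. Hence the post-Step-1 bound is preserved through the end, yielding the claim.

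The main obstacle is getting the Step~1 bound to depend on $D$ instead of on $k\,C$: the original analysis bounds each $\delta(P_i)$ by $B=\alpha C$ and pays $k$ such terms, which would blow up by a factor of $k$. The edge-level argument above avoids this by charging each edge directly against $\sum_{S\in\calS}\delta(S)$, for which condition~4 of Theorem~\ref{thm:covering-general} gives the crucial bound of $O_\varepsilon(\sqrt{\log n\log k})\log n\cdot D$; the extra $\log n$ cancels against the $\log_2 n$-fold coverage multiplicity guaranteed by condition~2.
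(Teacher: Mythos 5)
Your proof is correct, and the key step --- bounding $\E\bigl[\sum_i\delta(P_i)\bigr]$ after the random sampling --- is argued by a genuinely different route than the paper's. The paper works part-by-part: it conditions on $S_i=S$, shows $\E[w(E(P_i,\cup_{j>i}P_j))\mid S_i=S]\le(1-c/k)^{i-1}\delta(S)$ exactly as in the proof of Theorem~\ref{thm:aggreg}, decondition using the fact that $S_i$ is a uniformly random element of $\calS$ (after padding $\calS$ with empty sets so that $|\calS|=5\gamma k\log_2 n$), sums the geometric series to get $\tfrac{k}{c}\cdot\sum_{S\in\calS}\delta(S)/|\calS|$, and then invokes conditions~3 and~4 of Theorem~\ref{thm:covering-general}. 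You instead work edge-by-edge: the observation that $(u,v)$ is cut in Step~1 precisely when the first element of $\calS_u\cup\calS_v$ in the random order lies in $\calS_u\triangle\calS_v$ gives the \emph{exact} cut probability $|\calS_u\triangle\calS_v|/|\calS_u\cup\calS_v|$, the denominator is at least $\log_2 n$ by condition~2, and the double-counting identity $\sum_{(u,v)\in E}w(u,v)|\calS_u\triangle\calS_v|=\sum_{S\in\calS}\delta(S)$ lets condition~4 finish. Your version is tighter and arguably cleaner: it needs only conditions~2 and~4 (not condition~3 or the padding convention), and it makes transparent why the $\log_2 n$ coverage multiplicity cancels the $\log_2 n$ in condition~4; the paper's version has the advantage of recycling verbatim the conditional bound already established for the max objective in Theorem~\ref{thm:aggreg}. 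Your treatment of the remaining steps (uncrossing only decreases the sum; merging disjoint parts, including the final terminal pairing, only decreases the sum) matches the paper's one-line justification and is correct. One small point worth stating explicitly if you write this up: $\calS$ is a multiset, so $\calS_u$, $\calS_v$ and the Jaccard ratio should be read with multiplicities, which is consistent with how condition~2 counts coverage.
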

\begin{proof}
We will show that the random partition $\{P_i\}$ at the end of Step~1 in Algorithm~\ref{alg:Aggregating} satisfies
$\E\left[\sum_{i} w(\delta(P_i))\right] \le  O_{\varepsilon}(\sqrt{\log n\,\log k})\, D$. This would suffice since
$\sum_{i} w(\delta(P_i))$ does not increase in Steps 2 and 3. For notational convenience, we assume (by adding empty
sets) that $|\calS|=5\gamma\,k \cdot \log_2n$ in Theorem~\ref{thm:covering-general}; note that this does not affect any
of the other conditions.

To bound the cost of the partition $\{P_i\}$ in Step~1, consider any index $i\le |\calS|$. From the proof of
Theorem~\ref{thm:aggreg}, we have:
$$\E[w(E(P_i,\cup_{j> i} P_j))\given S_i = S] \leq (1- c/k)^{i-1}\delta(S),$$
where $c=\frac1{5\gamma}$ is such that each vertex lies in at least $c/k$ fraction of sets $\calS$. Deconditioning,
$\E[w(E(P_i,\cup_{j> i} P_j))] \leq (1- c/k)^{i-1} \cdot \E[\delta(S_i)]= (1- c/k)^{i-1} \cdot \left( \sum_{S\in\calS}
\delta(S) / |\calS|\right)$, where we used that $S_i$ is a uniformly random set from $\calS$. So the total edge-cost,
$$\E\left[ \sum_i \delta(P_i)\right] = 2\cdot \sum_i \E[w(E(P_i,\cup_{j> i} P_j))] \leq
\left( \sum_{i\ge 0} (1- c/k)^{i}\right) \cdot \left( \sum_{S\in\calS} \delta(S) / |\calS|\right) = \frac{k}c\cdot
\sum_{S\in\calS} \delta(S) / |\calS|.$$
 Using $\sum_{S\in\calS} \delta(S)\le 17\alpha\gamma\log_2n\cdot D$ and
$|\calS|=5\gamma\,k \cdot \log_2n$ from Theorem~\ref{thm:covering-general}, $\E\left[ \sum_i \delta(P_i)\right] \le
\frac{17\alpha}{5c}\, D = O_{\varepsilon}(\sqrt{\log n\,\log k})\, D$ since $\alpha=O_{\varepsilon}(\sqrt{\log n\,\log
k})$ and $1/c=O_{\varepsilon}(1)$.
\end{proof}

\medskip
\noindent {\bf Obtaining size bound of $(2+\varepsilon)\rho n$.} We now describe a modified aggregating step (in place
of Step~3 in Algorithm~\ref{alg:Aggregating}) that yields Theorem~\ref{thm:gen}. Given the uniform cover ${\cal S}$
from Algorithm~\ref{alg:covMMC-gen}, run Steps 1 and 2 of Algorithm~\ref{alg:Aggregating} (use $B=\alpha C$) to obtain
parts $P_1,\ldots,P_{|{\cal S}|}$. Then:
\begin{enumerate}
\item Set $B' := \max\left\{ \frac1k \sum_i \delta(P_i),\, 2B\right\}$.
\item While there are $P_i,P_j\ne \varnothing$ ($i\ne j$) such that $|P_i|+|P_j|\le (1+\varepsilon)\rho n$,
$\delta(P_i)+\delta(P_j)\le 2B'$ and $P_i\cup P_j$ does not contain a terminal: replace $P_i\gets P_i\cup P_j$ and
$P_j\gets \varnothing$.
\item Sort the resulting non-empty sets $P_1,\ldots,P_t$ in non-increasing order of size.
\item Form $\lceil t/k \rceil$ groups where the $j^{th}$ group consists of  parts indexed between $(j-1)k+1$ and
$jk$.
\item For each $i\in [k]$ define $Q_i$ as the union of one part from each group such that it contains terminal $i$ but
no other terminal. Additionally, ensure that each part is assigned to one of $\{Q_i\}_{i=1}^k$.
\end{enumerate}

We first show that the number of parts after Step~2 above $t\le 4k$. Note that each part contains at most one terminal,
and the number of parts containing a terminal is at most $k$. For the non-terminal parts, using
Lemma~\ref{lem:AggLemma} (with $a_i=|P_i|$, $b_i=\delta(P_i)$, $a=(1+\varepsilon)\rho n$, $b=2B'$, $S=n$ and $T=kB'$)
we obtain a bound of $5k/2$, which implies $t\le 4k$.

Next observe that sets $\{Q_i\}_{i=1}^k$ in Step~5 above are well-defined, and can be found using a simple greedy rule:
this is because each group contains at most $k$ parts. This gives Theorem~\ref{thm:gen} (i). Since $t\le 4k$ there are
at most 4 groups and hence $\max_{i=1}^k \delta(Q_i)\le 4\cdot \max_{i=1}^t \delta(P_i) \le 8B'$. This proves
Theorem~\ref{thm:gen} (iii). Also, the proof of Claim~\ref{cl:gen-sum-obj} implies Theorem~\ref{thm:gen} (iv); this is
due to the fact that in the final partition $\sum_{i=1}^k \delta(Q_i) \le \sum_{\ell=1}^{|{\cal S}|} \delta(P_\ell)$.

We now show Theorem~\ref{thm:gen} (ii), i.e. $\max_{i=1}^k |Q_i| \le (2+\varepsilon)\rho n$. Consider any $i\in [k]$
and let $P'_i$ denote the part assigned to $Q_i$ from the first group. By the non-increasing size ordering
$P_1,\ldots,P_t$ and the round-robin assignment (Step~5) into $Q_i$s, we obtain that $|Q_i|-|P'_i|\le |Q_j|$ for all
$j\in [k]$. Taking an average, $|Q_i|-|P'_i|\le \frac1k \sum_{j=1}^k |Q_j|=\frac{n}k$. Finally, since each part
$\{P_\ell\}_{\ell=1}^t$ has size at most $(1+\epsilon)\rho n$ we have $|Q_i|\le (2+\epsilon)\rho n$. This completes the
proof of Theorem~\ref{thm:gen}.

%%%%%%%%%%%%%%%%%%%%%%%%%%%%%%%%%%%%%%%%%%%%%%%%%%%%%%%%%%%%%%%%%%%%%%%%%%%%%%%%%%%%%%%%%%%%%%%%%%%%%%%%%%%%%%%%%%%%%%%%%%%%%%%%%%%%5

\section{Hardness of \MMMC} \label{sec:Hardness}
\def\a{\ensuremath{\mathcal{A}}\xspace}
\def\q{\ensuremath{\mathcal{Q}}\xspace}
\def\is{\ensuremath{\mathcal{I}}\xspace}
\def\js{\ensuremath{\mathcal{J}}\xspace}

In this section we prove Theorem \ref{thm:multiway-hard},
which shows that obtaining a $k^{1-\varepsilon}$-approximation algorithm
for \MMMC is hard, if not unlikely. This suggests that some
dependence on $n$ might be necessary (unless we are satisfied
with approximation that is linear in $k$, which is trivial),
which is in contrast to several cut problems (multiway-cut, multicut,
requirement-cut etc) with sum-objective where $poly(\log k)$ approximation guarantees are known when $k$ is the size of
the terminal set~\cite{M09,LM10,EGKRTT10,CLLM10,MM10}.
Throughout this section, we assume $k$ is constant (independent of $n$),
which simplifies the statements;
strictly speaking, our reductions relate solving one problem
with parameter $k(n)$ to solving another problem with parameter $k(n')$.

We will refer to the min-sum version of \MMKP, called \MKP,
in which the input is an edge-weighted graph $G=(V,E)$ and a parameter $k$,
and the goal is to partition the vertices into $k$ equal-sized parts
while minimizing the total edge-weight of all edges cut.
An algorithm for \MKP is an $(\alpha,\beta)$ bicriteria approximation
if for every instance, it partitions the vertices into $k$ pieces,
each of size at most $\beta|V|/k$,
and the total edge-weight of all edges cut is at most $\alpha$ times
the least possible among all partitions into $k$ equal-size sets.

The basic idea in proving Theorem \ref{thm:multiway-hard} as follows.
although there is no vertex-balance requirement in \MMMC,
an edge-balance is implicit in the
objective. By introducing a complete bipartite graph (having suitable edge weight) between the terminals and the rest
of the graph, this edge-balance can be used to enforce the vertex-balance required in \MKP.
After this first step, which obtains a bicriteria approximation for \MKP,
we do a second step which improves the size-violation in the algorithm
for \MKP.
These two steps are formalized in the two foregoing Lemmas
\ref{lem:fromMMMC2MKP} and \ref{lem:fromMKP2MKP}.
Putting them together immediately proves \ref{thm:multiway-hard}.

\begin{lemma}\label{lem:fromMMMC2MKP}
If there is a $\rho$-approximation algorithm for \MMMC then there is a $(5\rho\,k, \,10\rho)$-bicriteria approximation
algorithm for \MKP.
\end{lemma}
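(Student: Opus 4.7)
The plan is to reduce \MKP to \MMMC by augmenting the graph with $k$ auxiliary terminals joined to every original vertex by uniform high-weight edges, so that the min-max objective on the augmented graph implicitly penalizes imbalance, and then post-process the approximate \MMMC solution.

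Given an instance $G=(V,E)$ of \MKP with $|V|=n$, and a guess $\opt$ for the optimal value, I would construct $G'=(V',E')$ with $V'=V\cup\{t_1,\dots,t_k\}$ and $E'=E\cup\{(t_i,v): i\in[k],\ v\in V\}$, giving every new edge weight $W\eqdef \opt/n$ and designating $t_1,\dots,t_k$ as the \MMMC terminals. For any feasible \MMMC partition $S_1,\dots,S_k$ (with $t_i\in S_i$), setting $n_i=|S_i\cap V|$ and $e_i$ for the original-edge cut incident to $S_i\cap V$, a direct count gives
\[
\delta(S_i) \;=\; W(n-n_i)\;+\;(k-1)\,W n_i\;+\;e_i\;=\;Wn \;+\; W(k-2)\,n_i\;+\;e_i,
\]
since $t_i$ contributes $W(n-n_i)$ to its own boundary and each of the $k-1$ other terminals lies outside $S_i$ and contributes $W n_i$. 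Applying this to the optimal \MKP partition (where $n_i=n/k$ and $\sum_i e_i = 2\opt$, so $\max_i e_i \le 2\opt$) shows that the optimum of the constructed \MMMC instance is at most $2W(k-1)n/k + 2\opt \le 4\,\opt$.

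Running the assumed $\rho$-approximation on $G'$ yields a partition $\hat S_1,\dots,\hat S_k$ with $\delta(\hat S_i)\le 4\rho\,\opt$ for every $i$. Using $Wn=\opt$ and $\hat e_i\ge 0$, the displayed formula gives $W(k-2)\hat n_i\le (4\rho-1)\opt$, hence $\hat n_i\le (4\rho-1)n/(k-2)\le 10\rho\,n/k$ provided $k$ is large enough (say $k\ge 4$; smaller $k$ yield constant-factor bounds that are trivial). Dropping the nonnegative term $W(k-2)\hat n_i$ symmetrically yields $\hat e_i\le 4\rho\,\opt$, and since each cut edge of the induced partition $\{\hat S_i\cap V\}_{i=1}^k$ is counted twice when summing, the total \MKP cut cost is at most $\tfrac12 k\cdot 4\rho\,\opt\le 5\rho k\,\opt$, as required.

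The remaining issue is that $\opt$ is unknown a priori, which I would handle by the standard trick of trying a polynomial number of geometrically spaced guesses $W=2^j$ over the range determined by the edge weights of $G$; at least one such $W$ lies within a factor of $2$ of $\opt/n$, and the extra factor only perturbs the constants mildly, still fitting inside the claimed $(5\rho k, 10\rho)$ bound. The main obstacle I anticipate is purely arithmetic — pinning the constants so that both $5\rho k$ and $10\rho$ are met simultaneously, and cleanly handling small-$k$ boundary cases. The underlying idea, that a bipartite terminal gadget converts the min-max edge objective on the augmented graph into a balance constraint on the original one, is quite clean.
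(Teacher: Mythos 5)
Your proposal is correct and follows essentially the same route as the paper: the identical complete-bipartite terminal gadget with edge weight $\opt/n$, the same observation that the min-max objective on the augmented graph simultaneously bounds the original cut cost and (via the $(k-1)Wn_i$ term) the part sizes, and the same geometric guessing of $\opt$. The arithmetic you defer does go through (for $k\le 10$ the size bound $10\rho n/k\ge n$ is vacuous, and for larger $k$ the factor-of-two slack in the guess is absorbed by the gap between your exact-$\opt$ constants and the claimed ones), so this matches the paper's argument in substance.
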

\begin{proof}
Let $\a$ denote the $\rho$-approximation algorithm for \MMMC. Consider any instance \is of \MKP: graph $G=(V,E)$, edge
costs $c:E\rightarrow \mathbb{R}_+$ and parameter $k$. For every $B\ge 0$, consider an instance $\js(B)$ of \MMMC on
graph $G_B$ defined as:
\begin{OneLiners}
\item The vertex set is $V\bigcup \{t_i\}_{i=1}^k$ where $\{t_i\}_{i=1}^k$ are the terminals.
\item The edges are $E\bigcup \{(t_i,u) : i\in[k],\, u\in V\}$.
\item Extend cost function $c$ by setting $c(t_i,u)=\frac{B}n$ for all $i\in[k],\, u\in V$.
\end{OneLiners}
Note that every solution to $\js(B)$ corresponds to a $k$--partition in $G$ (though possibly unbalanced). We say that a
solution to $\js(B)$ is $\beta$-balanced if each piece in the partition has size at most $\beta\cdot \frac{n}k$.

The algorithm for \MKP on \is runs algorithm \a on all the \MMMC instances $\left\{\js(2^i) : 0\le i\le \log_2
\left(\sum_{e\in E} c_e \right) \right\}$, and returns the cheapest partition that is $(10\rho)$-balanced. We now show
that this results in a $(5\rho\,k,\,10\rho)$ bicriteria approximation ratio.

Note that algorithm \a must be invoked on $\js(B)$ for some value $B$ with $B<\opt(\is)\le 2\cdot B$. We will show that
the partition resulting from this call is the desired bicriteria approximation.

\begin{claim}\label{cl:phardness-1}
$\opt(\js(B))\le \opt(\is)+2B\le 5\opt(\is)$.
\end{claim}
\begin{proof}  Let $P^*$ denote the optimal $k$--partition to \is.
Consider the solution to $\js(B)$ obtained by including each terminal into a distinct piece of $P^*$. The boundary of
the piece containing $t_i$ (any $i\in[k]$) in $G_B$ costs at most:
$$\opt(\is) + n\cdot \frac{B}n + \frac{n}k\cdot (k-1)\cdot \frac{B}n\le \opt(\is) + 2B,$$
the term $\opt(\is)$ is due to edges in $E$, the second term is due to edges at $t_i$ and the third is due to edges at
all other $\{t_j:j\ne i\}$. The claim now follows since $B\le 2\,\opt(\is)$.
\end{proof}

Let $P$ denote $\a$'s solution to $\js(B)$, and $\{P_i\}_{i=1}^k$ the partition of $V$ induced by $P$. From
Claim~\ref{cl:phardness-1}, $P$ has objective value (for \MMMC) at most $5\rho\, \opt(\is)$. Note that the boundary (in
graph $G_B$) of $t_i$'s piece  in $P$ costs {\em at least} $c(\delta_G(P_i))+(k-1)\frac{B}n\cdot |P_i|$. Thus we
obtain:
$$c(\delta_G(P_i))+(k-1)\frac{B}n\cdot |P_i| \le 5\rho\, \opt(\is),\qquad \mbox{ for all }i\in[k].$$
It follows that for every $i\in[k]$, we have:
\begin{enumerate}
\item $|P_i|\le \frac{5\rho\,\opt(\is)}{B}\cdot \frac{n}{k-1}\le 10\rho\cdot \frac{n}k$ since $B>\opt(\is)$,  and
\item $c(\delta_G(P_i)) \le 5\rho\cdot\opt(\is)$.
\end{enumerate}
Thus the solution $\{P_i\}_{i=1}^k$ to \is is $(10\rho)$-balanced and costs at most $5\rho\,k\cdot \opt(\is)$.
This complete the proof of Lemma \ref{lem:fromMMMC2MKP}.
\end{proof}

\begin{lemma}\label{lem:fromMKP2MKP}
If there is an $(\alpha, \,k^{1-\varepsilon})$-bicriteria approximation algorithm for \MKP for some constant $\varepsilon>0$,
then there is also an $(\alpha\cdot\log\log k, \,\gamma)$-bicriteria
approximation algorithm with $\gamma\le 3^{2/\varepsilon}$.
\end{lemma}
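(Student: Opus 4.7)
The plan is to bootstrap the given $(\alpha,k^{1-\varepsilon})$-bicriteria algorithm $A$ by recursive subdivision: first apply $A$ to the original instance, then repeatedly re-apply it to any oversized part, and finally merge back to exactly $k$ parts. Set the threshold $\gamma=3^{2/\varepsilon}/4$ and call a part \emph{big} if its size exceeds $\gamma n/k$. The first round runs $A$ on $(G,k)$ to obtain $k$ parts of size at most $n/k^{\varepsilon}$ with total cut at most $\alpha\cdot\opt$. In each subsequent round $j\ge1$, for every big part $P$ of the current partition I would invoke $A$ on the induced subgraph $G[P]$ with parameter $k_P=\lceil |P|/(\gamma n/k)\rceil$, and replace $P$ in the partition by the resulting $k_P$ sub-pieces.

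Writing $s_j$ for the maximum part size (in units of $n/k$) after round $j$, the bicriteria guarantee gives $s_{j+1}\le s_j^{1-\varepsilon}\gamma^\varepsilon$. Taking logarithms yields the linear contraction $\log s_{j+1}-\log\gamma\le(1-\varepsilon)(\log s_j-\log\gamma)$, and since $s_0\le k$, after $J=O_\varepsilon(\log\log k)$ rounds we reach $s_J\le 2\gamma$. Moreover, each round $j\ge1$ introduces at most $\sum_{\text{big }P} k_P=O(k/\gamma)$ new parts (since the big parts have disjoint sizes summing to at most $n$), so after $J$ rounds the total number of parts is $k+O(Jk/\gamma)=O(k)$.

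For the cost, the key claim is that in each round the total cut added is at most $O(\alpha\cdot\opt)$. This follows once we show $\sum_i \opt(G[P_i],k_{P_i})\le O(\opt)$, where $\{P_i\}$ enumerates the big parts of the round and $\opt(G[P_i],k_{P_i})$ is the optimum of the corresponding \MKP sub-instance. To prove it, I would fix a globally optimal \MKP partition $S_1^*,\dots,S_k^*$ of the original instance and, inside each $P_i$, aggregate the restrictions $\{S_j^*\cap P_i\}_j$ into $k_{P_i}$ balanced groups via first-fit bin-packing: since every restriction has size $\le n/k\le |P_i|/k_{P_i}$, this yields a feasible (approximately equal-size) $k_{P_i}$-partition of $P_i$, and merging restrictions together never increases the cut, so the cut is bounded by the weight of $\opt$-edges lying strictly inside $P_i$. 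Summing over the disjoint big parts $P_i$ then gives $\sum_i\opt(G[P_i],k_{P_i})\le \opt$, so across $J=O_\varepsilon(\log\log k)$ rounds the total cut is $O_\varepsilon(\alpha\log\log k)\cdot\opt$. A final first-fit merging (which never increases the cut) collapses the $O(k)$ remaining pieces into exactly $k$ groups of size at most $3\gamma n/k\le 3^{2/\varepsilon}n/k$, completing the promised bicriteria guarantee.

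The main obstacle is the cost claim --- producing a feasible \emph{equal-size} $k_{P_i}$-partition of $P_i$ from the global optimum's restrictions whose cut is controlled by $\opt$-edges within $P_i$. The restrictions are unequal in size and may be more numerous than $k_{P_i}$, whereas the bicriteria guarantee of $A$ is measured against strictly equal-size partitions of the sub-instance; one must therefore either exhibit an exactly balanced reconstruction (using a careful bin-packing together with a small perturbation or vertex-swap argument) or exploit the inherent slack of the bicriteria relaxation (size $\le \beta\cdot|P_i|/k_{P_i}$) to absorb an additive $O(n/k)$ per-part imbalance. A secondary subtlety is that the iteration count must be $\log\log k$ rather than $1/\varepsilon$ times that: this comes down to the fact that, for constant $\varepsilon$, the recurrence $s\mapsto s^{1-\varepsilon}\gamma^\varepsilon$ has its contraction rate $1/\log(1/(1-\varepsilon))$ absorbed into the $O_\varepsilon(\cdot)$.
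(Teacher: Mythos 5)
Your overall strategy is the same as the paper's: recursively re-apply the $(\alpha,k^{1-\varepsilon})$ algorithm to oversized parts, observe that the normalized part size contracts as $s\mapsto s^{1-\varepsilon}\gamma^{\varepsilon}$ so that $O_\varepsilon(\log\log k)$ rounds suffice, charge each round's cost to the restriction of the global optimum to the current parts, and finish with a greedy merge into $k$ pieces. The contraction analysis, the round count, and the final merge are all fine (your parenthetical claim that only $O(k)$ pieces remain before merging is off --- it is $O_\varepsilon(k\log\log k)$ --- but this is harmless since the merge bounds the final count regardless).

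The genuine gap is exactly the step you flag as ``the main obstacle,'' and it is not closed by either of the routes you gesture at. The bicriteria guarantee of $\a$ on the sub-instance $(G[P_i],k_{P_i})$ is measured against partitions of $P_i$ into $k_{P_i}$ \emph{exactly equal} parts, so to prove $\sum_i\opt(G[P_i],k_{P_i})\le\opt$ you must exhibit, inside each $P_i$, an exactly balanced $k_{P_i}$-partition whose cut is dominated by the $\opt$-edges internal to $P_i$; an ``approximately equal-size'' bin-packing of the restrictions $S_j^*\cap P_i$ is not a feasible benchmark, and no perturbation or vertex-swap argument is supplied. There is also a quantitative problem with the packing itself: greedily merging pieces of size at most $n/k$ into bins of capacity $|P_i|/k_{P_i}\approx\gamma n/k$ can require strictly more than $k_{P_i}$ bins (roughly $|P_i|/((\gamma-1)n/k)+1$), so the restriction of $\opt$ need not fit into $k_{P_i}$ groups at all. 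The paper resolves both issues at once: it fixes the sub-instance part counts in advance as $k_i=k^{(1-\varepsilon/2)^i}$ --- deliberately using exponent $1-\varepsilon/2$ rather than $1-\varepsilon$, so that the at most $1+2k_{i-1}^{1-\varepsilon}\le 3k_{i-1}^{1-\varepsilon}\le k_i$ merged pieces of $\opt$'s restriction (each of size at most $n/k$) fit --- and it pads every sub-instance with dummy singleton vertices up to a prescribed size $n_i=(n/k)k_i$, so that the merged pieces can be topped up with dummies to an \emph{exactly} balanced $k_i$-partition. Without some such device (dummy padding plus slack in the target part count), your charging inequality does not follow, so the cost bound $O_\varepsilon(\alpha\log\log k)\cdot\opt$ is unproven as written.
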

\begin{proof}
Let $\a$ denote the $(\alpha, \,k^{1-\varepsilon})$-bicriteria approximation algorithm. The idea is to use \a
recursively to obtain the claimed $(\alpha\cdot\log\log k, \,\gamma)$ approximation; details are below. Let $G$ denote
the input graph with $n$ vertices, and \opt the optimal balanced $k$--partition. The algorithm deals with several
sub-instances of $G$, each of which is assigned a {\em level} from $\{0,1,\ldots, t\}$ where $t:=\Theta(\log\log k)$ is
fixed below. Every level $i$ instance will contain $k_i=k^{(1-\varepsilon/2)^i}$ parts and $n_i=\frac{n}k\cdot k_i$
vertices. Choose $t$ to be the smallest integer such that $k_t\le 3^{2/\varepsilon}$. While generating the sub-instances
we also add dummy singleton vertices (to keep the instances balanced). For notational simplicity we use the same
identifier for a sub-instance and the graph corresponding to it. Note that $G$ is the unique level $0$ instance. For
each $i\in \{0,1,\ldots,t-1\}$, every level $i$ instance \is generates $k_i$ level $i+1$ instances as follows:
\begin{itemize}
\item Run algorithm \a on \is to obtain a $k_i^{1-\varepsilon}$-balanced partition $\{P_j : 1\le j\le k_i\}$ of \is.
\item For each $j\in[k_i]$, add $n_{i+1}-|P_j|$ singleton vertices to $\is[P_j]$ to obtain a new level $i+1$
instance.
\end{itemize}
Note that this process is indeed well-defined since $|P_j|\le \frac{n_i}{k_i}\cdot
k_i^{1-\varepsilon}\le\frac{n_i}{k_i}\cdot k_i^{1-\varepsilon/2}=\frac{n_i}{k_i}\cdot k_{i+1}=n_{i+1}$. Moreover the
total number of level $i+1$ instances is $\Pi_{\ell=0}^{i} k_\ell =k^{\sum_{\ell=0}^i (1-\varepsilon/2)^\ell}\le
k^{2/\varepsilon}$. Since each instance has size at most $n$, the total size of all instances is polynomial.

The algorithm finally returns the partition \p corresponding to the set of all level $t$ instances. Note that there are
at most $n_t=\frac{n}k\cdot k_t\le 3^{2/\varepsilon}\cdot  \frac{n}k$ vertices in each level $t$ instance. The
algorithm ignores all dummy vertices in each piece of \p and greedily merges pieces until every piece has at least
$n/k$ vertices (all from $V$); let \p' denote the resulting partition of $V$. Clearly there are at most $k$ pieces in
\p', and each has size at most $3^{2/\varepsilon}\cdot  \frac{n}k$. Thus \p'  is a $3^{2/\varepsilon}$-balanced
$k$--partition.

We now upper bound the total cost of all edges removed by the algorithm (over all instances); this also bounds the cost
of \p'. This is immediate from Claim~\ref{cl:phardness-2} below: since there are $t$ levels and \a achieves an
$\alpha$-approximation to the cost, the total cost is bounded by $\alpha\, t\cdot \opt$.
\begin{claim}\label{cl:phardness-2}
For each $i\in\{1,\ldots,t\}$, the sum of optimal values of level $i$ instances is at most \opt.
\end{claim}
\begin{proof}
Consider any fixed $i$, and an instance \is of level $i$. We will show that the edges of \opt induced on \is form a
balanced $k_i$-partition for \is. This suffices to prove the claim since the level $i$ instances partition $V$ (the
original vertex-set). Let $V'$ denote the vertices from $V$ in \is; note that $|V'|\le \frac{n_{i-1}}{k_{i-1}}\cdot
k_{i-1}^{1-\varepsilon}$. Consider the partition \q of $V'$ induced by \opt; note that each piece in \q has size at
most $n/k$. Greedily merge pieces in \q as long as the size of each piece is at most $n/k$, to obtain
partition \q'; so the number of pieces is:
$$|\q'|\le 1+ 2|V'|\cdot \frac{k}n \le 1+ 2\frac{n_{i-1}}{k_{i-1}}\, k_{i-1}^{1-\varepsilon}\cdot \frac{k}n  =
1+2\,k_{i-1}^{1-\varepsilon}\le 3\,k_{i-1}^{1-\varepsilon}\le k_{i-1}^{1-\varepsilon/2}=k_i.$$ The second-last
inequality uses $k_{i-1}\ge 3^{2/\varepsilon}$ which is true by the choice of $t$. Now we can fill pieces of \q' with
dummy singleton vertices of instance \is to obtain a balanced $k_i$-partition of \is.
\end{proof}
This completes the proof of Lemma \ref{lem:fromMKP2MKP}.
\end{proof}

\bibliographystyle{alphainit}
\bibliography{cuts}

\section*{Appendix}
\appendix

\section{Integrality Gap for SDP Relaxation of \MMMC}
\label{sec:IntegralityGap} Consider the following semi-definite relaxation for the min-max multiway cut problem:
\begin{align}
\min ~~~~~ & \lambda & & \nonumber \\
s.t. ~~~~~ & \lambda \geq \sum _{(u,v)\in E} || y_{u,i} - y_{v,i}||_2^2 & & \forall i=1,2,\ldots , k \\
& \sum _{i=1}^k ||y_{u,i}||_2^2 = 1 & & \forall u\in V \label{constraint1}\\
& ||y_{t_i , i}||_2^2 = 1 & & \forall i=1,2,\ldots , k \label{constraint2}\\
& y_{u,i}\cdot y_{v,i} = 0 & & \forall u\in V, \forall i\neq j \label{constraint3}\\
& y_{u,i} \cdot \sum _{j=1}^k y_{v,j} = ||y_{u,i}||_2^2 & & \forall u,v\in V, \forall i=1,2,\ldots , k \label{constraint4}\\
& ||y_{u,i} - y_{v,j}||_2^2 + ||y_{v,j} - y_{w,r} ||_2^2 \geq ||y_{u,i} - y_{w,r} ||_2^2 & & \forall u,v,w\in V, \forall i,j,r = 1,2,\ldots , k \label{constraint5}\\
& y_{u,i}\cdot y_{v,j} \geq 0 & & \forall u,v\in V, \forall i,j=1,2,\ldots , k \label{constraint6}\\
& ||y_{u,i}||_2^2 \geq y_{u,i} \cdot y_{v,j} & & \forall u,v\in V, \forall u,v\in V, \forall i,j=1,2,\ldots , k
\label{constraint7}
\end{align}
In the above relaxation, each vertex $u\in V$ is associated with $k$ different vectors: $y_{u,1}, y_{u,2}, \ldots ,
y_{u,k}$. Each of those corresponds to a different terminal. Notice that the last two constraints of the relaxation are
the $\ell_2^2$ triangle inequality constraints including the origin (the same constraints used by \cite{CMM06}).

As stated, the integrality gap we consider is the star graph which constrains as single vertex $u$ connected by $k$
edges to the $k$ terminals. The value of any integral solution to this instance is exactly $k-1$ since the only choice
is to which terminal should $u$ be assigned. Any choice made results in a min-max objective value of $k-1$.

Let us construct the fractional solution to the above relaxation. Fix $e$ to be an arbitrary unit vector, and $x_1,
x_2, \ldots , x_k$ be $k$ unit vectors which are all orthogonal to $e$ and the inner product between any two of them is
$-1/(k-1)$. Set the following fractional solution:
\begin{align*}
y_{t_i , i} & = e & & \forall i=1,2, \ldots , k \\
y_{t_i , j} & = 0 & & \forall i\neq j \\
y_{u, i} & = \frac{1}{k}e + \frac{\sqrt{k-1}}{k} x_i & & \forall i=1,2,\ldots , k
\end{align*}

First, we show that the above fractional solution is feasible. Constraints (\ref{constraint1}), (\ref{constraint2}) and
(\ref{constraint3}) are obviously feasible for all terminals. Vertex $u$ also upholds constraint (\ref{constraint1})
since:
$$ \sum _{i=1}^k \left| \left| \frac{1}{k}e + \frac{\sqrt{k-1}}{k} x_i\right| \right| _2 ^2
 = \sum _{i=1}^k \left( \frac{1}{k^2} + \frac{k-1}{k^2}\right)  = 1 .$$
Let us verify that vertex $u$ satisfies also constraint (\ref{constraint3}):
$$ \left( \frac{1}{k}e + \frac{\sqrt{k-1}}{k} x_i\right) \cdot \left( \frac{1}{k}e + \frac{\sqrt{k-1}}{k} x_j\right)
= \frac{1}{k^2} + \frac{k-1}{k^2}\cdot \left( -\frac{1}{k-1}\right)  = 0 .$$

Focus on constraint (\ref{constraint4}). It is easy to verify that for all terminals, the sum of all $k$ vectors that
are associated with the picked vertex is exactly $e$. Since $\sum _{i=1}^k x_i = 0$, the sum of all vectors associated
with $u$ is also $e$. Therefore, all constraints of type (\ref{constraint4}) are satisfied.

Regarding the last three constraints, which are the $\ell_2^2$ triangle inequality constraints including the origin,
one can verify the following:
\begin{align*}
& ||e - 0||_2^2 = 1 \\
& \left| \left| e - \left( \frac{1}{k}e - \frac{\sqrt{k-1}}{k} x_i\right) \right| \right| _2^2 = 1-\frac{1}{k} \\
& \left| \left| 0 - \left( \frac{1}{k}e - \frac{\sqrt{k-1}}{k} x_i\right) \right| \right| _2^2 = \frac{1}{k} \\
& \left| \left| \left( \frac{1}{k}e - \frac{\sqrt{k-1}}{k} x_i\right) - \left( \frac{1}{k}e - \frac{\sqrt{k-1}}{k}
x_j\right) \right| \right| _2^2 = \frac{2}{k}
\end{align*}
All the last three constraints are derived from the above calculations. Hence, we can conclude that the fractional
solution defined above is feasible for the semi-definite relaxation.

The value of this solution is:
$$ \max _{1\leq i\leq k} \left\{ \left( 1-\frac{1}{k}\right) + (k-1)\cdot \frac{1}{k}\right\} = 2\frac{k-1}{k}.$$
This gives an integrality gap of $k/2$.

%%%%%%%%%%%%%%%%%%%%%%%%%%%%%%%%%%%%%%%%%%%%%%%%%%

\section{\texorpdfstring{Bad Example: Greedy Algorithm for \MMKP}{Bad Example: Greedy Algorithm for Min--Max k-Partitioning}}\label{app:greedy-bad}
We show that the naive greedy algorithm that repeatedly uses \SSE to remove a part of size $n/k$ performs very poorly.
In this example we even assume that there is an exact algorithm for \SSE. The graph is a tree on  $n=k^2$ vertices $V:=
\{v\} \bigcup_{i=1}^{k-1} \{u_{i,0},\ldots,u_{i,k}\}$, and edges $E=\bigcup_{i=1}^{k-1} \{(u_{i,j},u_{i,j-1})\, :\,
1\le j\le k\}\, \bigcup \, \{(u_{i,0},u_{i-1,0})\, : \, 2\le i\le k-1 \} \bigcup \, \{(v,u_{1,0})\}$.

The simple greedy algorithm will cut out parts having small boundary for $k-1$ iterations, namely $P_i =
\{u_{i,1},\ldots,u_{i,k}\}$ for $i\in[k-1]$; note that $\delta(P_i)=1$ for all $i\in[k-1]$. However the last part $\{v,
u_{1,0},\ldots,u_{k-1,0} \}$ has cut-value $k-1$; so the resulting objective value is $k-1$.

On the other hand, it can be checked directly that the optimal value is at most {\em four}: Consider the partition
obtained by repeatedly taking the first $k$ consecutive vertices from the ordering $v,u_{1,0},\ldots,u_{1,k},
u_{2,0},\ldots,u_{2,k},\ldots,u_{k-1,0},\ldots,u_{k-1,k}$.

\end{document}